\theoremstyle{plain}
\newtheorem{thm}{\protect\theoremname}
  \theoremstyle{definition}
  \newtheorem{defn}[thm]{\protect\definitionname}
  \theoremstyle{remark}
  \newtheorem{rem}[thm]{\protect\remarkname}
  \theoremstyle{plain}
  \newtheorem{prop}[thm]{\protect\propositionname}
  \theoremstyle{plain}
  \newtheorem{lem}[thm]{\protect\lemmaname}
  \theoremstyle{plain}
  \newtheorem{cor}[thm]{\protect\corollaryname}
\providecommand{\U}[1]{\protect\rule{.1in}{.1in}}
\newcommand{\Ad}{\operatorname{Ad}}
\newcommand{\ad}{\operatorname{ad}}
\newtheorem{theorem}{Theorem}\newtheorem{definition}[theorem]{Definition}\newtheorem{proposition}[theorem]{Proposition}\textwidth=410pt \oddsidemargin=20pt \voffset=-80pt
  \providecommand{\corollaryname}{Corollary}
  \providecommand{\definitionname}{Definition}
  \providecommand{\lemmaname}{Lemma}
  \providecommand{\propositionname}{Proposition}
  \providecommand{\remarkname}{Remark}
\providecommand{\theoremname}{Theorem}
  \providecommand{\corollaryname}{Corollary}
  \providecommand{\definitionname}{Definition}
  \providecommand{\lemmaname}{Lemma}
  \providecommand{\propositionname}{Proposition}
  \providecommand{\remarkname}{Remark}
\providecommand{\theoremname}{Theorem}
\begin{document}

\title{Variational reduction of Hamiltonian systems \\
with general constraints}

\author{Sergio D. Grillo\\
 Instituto Balseiro, U.N. de Cuyo-C.N.E.A.\\
 San Carlos de Bariloche, R\'io Negro, Rep\'ublica Argentina \and Leandro
M. Salomone\\
Centro de Matem\'atica de La Plata, Facultad de Ciencias Exactas, U.N.L.P.\\
 La Plata, Buenos Aires, Rep\'ublica Argentina \and Marcela Zuccalli\\
 Centro de Matem\'atica de La Plata, Facultad de Ciencias Exactas, U.N.L.P.\\
 La Plata, Buenos Aires, Rep\'ublica Argentina}
\maketitle
\begin{abstract}
In the Hamiltonian formalism, and in the presence of a symmetry Lie
group, a variational reduction procedure has already been developed
for Hamiltonian systems without constraints. In this paper we present
a procedure of the same kind, but for the entire class of the higher
order constrained systems (HOCS), described in the Hamiltonian formalism.
Last systems include the standard and generalized nonholonomic Hamiltonian
systems as particular cases. When restricted to Hamiltonian systems
without constraints, our procedure gives rise exactly to the so-called
Hamilton-Poincar\'e equations, as expected. In order to illustrate the
procedure, we study in detail the case in which both the configuration
space of the system and the involved symmetry define a trivial principal
bundle. 
\end{abstract}

\section{Introduction}

If a dynamical system defined on a manifold $M$ is invariant (in
some sense) under the action of a Lie group $G$, occasionally such
an invariance can be used to reduce the number, or at least the order,
of the differential equations that one must solve to find its trajectories.
More precisely, one can obtain, by using a certain procedure, a new
dynamical system on the quotient manifold $M/G$ such that: \textbf{1}.
the number of its equations of motion, the \textit{reduced equations},
is smaller than the number of the original equations of motion or,
at least, the order of some of the former is less than the order
of the latter; \textbf{2}. there exists another set of equations:
the \textit{reconstruction equations}, whose form does not depend
on the system under consideration (but only on $M$, $G$ and the
action involved) and which together with the reduced equations are
equivalent to the original ones. Thus, the symmetry, through a reduction
procedure, helps us to integrate (at least partially) the equations
of motion of the dynamical system originally given.

In some cases, in addition, a principal connection $A:TM\rightarrow\mathfrak{g}$,
where $TM$ is the tangent bundle of $M$ and $\mathfrak{g}$ is the
Lie algebra of $G$, can be constructed in order to simplify the description
of the reduced and reconstruction equations.

Since the reconstruction equations do not depend (in essence) on the
system, they are usually considered as a group theoretical problem.
Thus, philosophically, the problem of finding the trajectories of
the original system is considered solved when the solutions of the
reduced equations are found. This is why, from now on, we shall concentrate
mainly on the reduced equations only.

Reduction techniques have been developed by numerous authors in many
different frameworks. In particular, Cendra, Marsden and Ratiu elaborated
a reduction process for Lagrangian systems in Ref. \cite{cmr0} and
for (standard) nonholonomic systems in Ref. \cite{CMR}. For the case
of generalized nonholonomic systems (GNHS) (see \cite{solo,cg,her,Marle}),
a similar process was developed in \cite{cfg}, and an extension to
higher order constrained systems (HOCS) (see \cite{cg-hocs,her})
was presented in \cite{gz}. All of these reduction procedures were
elaborated in the Lagrangian formalism and in terms of variational-like
principles. The latter means that the original and the reduced equations
are described in terms of (original and reduced) variations and variational
conditions. Such variational conditions can in turn be translated
into a set of ordinary differential equations (ODE), in the same way
as the Hamilton principle is related to the Euler-Lagrange equations.
In terms of such ODEs, the number of reduced equations is equal to
the number of the original equations of motion (which are second order
ODEs), but some of the reduced equations are first order ODEs.

The Hamiltonian counterpart of above mentioned procedures is already
known for unconstrained Hamiltonian systems \cite{cmpr}. Nevertheless,
as far as we know, nothing have been done, in terms of variational-like
principles, for constrained Hamiltonian systems. In this paper, we
want to fill in this gap. In essence, to do that, we shall translate
to the Hamiltonian formalism the results obtained in Ref. \cite{gz},
extending in this way the procedure presented in Ref. \cite{cmpr}
to the class of all the HOCSs.

It is worth mentioning that, for our procedure (as happens in the
case of unconstrained systems \cite{cmpr}), the original and reduced
variational conditions are equivalent, each one of them, to a system
of first order ODEs. The result of reducing in the Hamiltonian formalism
is that the number of reduced ODEs is strictly less than that of
the original ones. Thus, by making a variational reduction in this
formalism, we effectively reduce the number of equations that we must
solve in order to find (modulo the reconstruction equations) the trajectories
of the system.

\bigskip{}

Along all of the paper we shall focus exclusively on Hamiltonian systems
defined on a cotangent bundle $T^{*}Q$, and symmetries given by a
Lie group $G$ acting on the base manifold $Q$. The actions of $G$
on $TQ$ and $T^{*}Q$ will be given by the corresponding canonical
lift.

\bigskip{}

The organization of the paper is as follows. In Section 2 we formulate
a variational reduction of a Hamiltonian GNHS (see Ref. \cite{cg})
with symmetry. Following Ref. \cite{gz}, we use two different principal
connections related to $Q$ and $G$, one to describe the reduced
degrees of freedom, i.e. the manifold $T^{*}Q/G$, and the other to
decompose the (original and reduced) variations into horizontal and
vertical parts. This will give rise to what we call the \textit{Hamilton-d'Alembert-Poincar\'e
horizontal and vertical equations}. The latter, in absence of constraints,
correspond exactly to the \textit{Hamilton-Poincar\'e equations}, described
in Ref. \cite{cmpr}. In Section 3 we consider the Hamiltonian HOCSs
(see Ref. \cite{g2}) with symmetry and develop a reduction process
for them. In order to do that, we define a connection-like object
called \textit{cotangent $l$-connection} (the dual notion of tangent
$l$-connection introduced in Ref. \cite{gz}). Finally, in Section
4 we study the case in which $Q$ and $G$ define a trivial principal
bundle, and in Section 5 we present an illustrative example of that
case: a ball rolling over another ball.

\bigskip{}

We assume that the reader is familiar with the basic concepts of Differential
Geometry (see \cite{boot,kn,mrgm}) as well as the ideas of Lagrangian
and Hamiltonian systems with symmetry in the context of Geometric
Mechanics (see \cite{am,mr}).

\bigskip{}

\paragraph*{A word about notation.}

Throughout all of the paper, $Q$ will denote a differentiable finite-dimensional
manifold. The tangent and cotangent bundles of $Q$ will be denoted
as $\tau_{Q}:TQ\rightarrow Q$ and $\pi_{Q}:T^{*}Q\rightarrow Q$,
respectively. If $M$ is another differentiable manifold and $f:Q\rightarrow M$
is a differentiable function, we denote by $f_{*}:TQ\rightarrow TM$
and $f^{*}:T^{*}M\rightarrow T^{*}Q$ the tangent map and its transpose,
respectively. Given two fibrations $E_{1}\rightarrow Q$ and $E_{2}\rightarrow Q$,
we denote by $E_{1}\times_{Q}E_{2}$ the fibered product of $E_{1}$
and $E_{2}$ over $Q$. If $E_{1}$ and $E_{2}$ are vector bundles,
then $E_{1}\times_{Q}E_{2}$ is also a vector bundle with respect
to the natural component-wise linear structure. When
such a structure is taking into account, the product bundle is called
the Whitney sum of $E_{1}$ and $E_{2}$, and will be denoted $E_{1}\oplus E_{2}$
or $E_{1}\oplus_{Q}E_{2}$.

\section{Reduction of Hamiltonian GNHS}

\label{redgnhs}

In this section we recall the definition of a generalized nonholonomic
system (GNHS) in the Hamiltonian formalism, following References \cite{cg,Marle}.
Then, given a Lie group $G$, we define the idea of $G$-invariance
for such systems. Finally, for the $G$-invariant ones, we develop
two different reduction procedures such that (both of them): 
\begin{enumerate}
\item extend the one presented in \cite{cmpr} (which is only valid for
unconstrained Hamiltonian systems); 
\item represent the Hamiltonian counterpart of the procedures elaborated
in Refs. \cite{cfg,gz}. 
\end{enumerate}

\subsection{Hamiltonian GNHS with symmetry}

\label{basic}

Motivated by certain mechanical systems such as rubber wheels and
servomechanisms, where d'Alembert principle is typically violated,
it was defined and studied in Refs. \cite{solo,cg,her,Marle} a class
of constrained Lagrangian systems, called generalized nonholonomic
system (GNHS), that include the above mentioned mechanical systems
and encode, in our opinion, their main features. In the Hamiltonian
framework, they can be defined as follows (see \cite{cg} and \cite{Marle}).
Consider a triple $\left(H,D,\mathcal{V}\right)$ with 
\[
H:T^{\ast}Q\rightarrow\mathbb{R},\quad D\subset T^{\ast}Q\quad\text{and}\quad\mathcal{V}\subset TT^{\ast}Q,
\]
where $H$ is a differentiable function, $D$ is a submanifold of
$T^{\ast}Q$ and $\mathcal{V}$ is a vector subbundle of the tangent
bundle $TT^{\ast}Q$. 
\begin{defn}
We shall say that a triple $\left(H,D,\mathcal{V}\right)$ as above
is a \textbf{Hamiltonian }\textbf{\small{}{}GNHS}{\small{} on the
configuration space $Q$, with }\textbf{\small{}Hamiltonian function}{\small{}
$H$, }\textbf{\small{}kinematic constraints}{\small{} $D$ and }\textbf{\small{}variational
constraints}{\small{} $\mathcal{V}$. And we shall say that a curve
$\Gamma:\left[t_{1},t_{2}\right]\rightarrow T^{\ast}Q$ is a }\textbf{\small{}trajectory}{\small{}
of $\left(H,D,\mathcal{V}\right)$ if $\Gamma\left(t\right)\in D$
and for all infinitesimal variations with fixed end points}\footnote{Recall that, given a manifold $M$ and a curve $\sigma:\left[t_{1},t_{2}\right]\rightarrow M$,
an \textbf{infinitesimal variation} of $\sigma$ is a curve $\delta\sigma\colon\left[t_{1},t_{2}\right]\rightarrow TM$
satisfying $\delta\sigma\left(t\right)\in T_{\sigma\left(t\right)}M$,
$\forall\,t\in\left[t_{1},t_{2}\right]$. We say that $\delta\sigma$
has \textbf{fixed end points} if $\delta\sigma\left(t_{1}\right)$
and $\delta\sigma\left(t_{2}\right)$ belong to the null subbundle
of $TM$.}{\small{} $\delta\Gamma:\left[t_{1},t_{2}\right]\rightarrow TT^{\ast}Q$,
such that $\delta\Gamma\left(t\right)\in\mathcal{V}$, we have 
\begin{equation}
\int_{t_{1}}^{t_{2}}\left(\omega\left(\Gamma^{\prime}\left(t\right),\delta\Gamma\left(t\right)\right)-\left\langle dH\left(\Gamma\left(t\right)\right),\delta\Gamma\left(t\right)\right\rangle \right)\ dt=0.\label{eve}
\end{equation}
}{\small \par}
\end{defn}
By $\Gamma^{\prime}:\left(t_{1},t_{2}\right)\rightarrow TT^{\ast}Q$
we are denoting the \emph{velocity} of $\Gamma$, defined as 
\[
\Gamma^{\prime}\left(t\right):=\frac{d}{dt}\Gamma\left(t\right)\in T_{\Gamma\left(t\right)}T^{\ast}Q.
\]
As usual, $\omega:TT^{\ast}Q\times_{T^{\ast}Q}TT^{\ast}Q\rightarrow\mathbb{R}$
denotes the canonical symplectic $2$-form of $T^{\ast}Q$. Then,
$\omega=-d\theta$, being $\theta:TT^{\ast}Q\rightarrow\mathbb{R}$
the canonical $1$-form of $T^{\ast}Q$, given by 
\[
\theta\left(V\right):=\left\langle \tau_{T^{\ast}Q}\left(V\right),\pi_{Q\ast}\left(V\right)\right\rangle ,\ \ \ \forall\,V\in TT^{\ast}Q.
\]

Let us note that Eq. \eqref{eve} is an extremal condition for the
action functional 
\begin{equation}
S\left(\Gamma\right):=\int_{t_{1}}^{t_{2}}\left[\theta\left(\Gamma^{\prime}\left(t\right)\right)-H\left(\Gamma\left(t\right)\right)\right]\ dt=\int_{t_{1}}^{t_{2}}\left[\left\langle \Gamma\left(t\right),\pi_{Q\ast}\left(\Gamma^{\prime}\left(t\right)\right)\right\rangle -H\left(\Gamma\left(t\right)\right)\right]\ dt,\label{S}
\end{equation}
for variations $\delta\Gamma$ inside $\mathcal{V}$. In fact, any
variation $\delta\Gamma$ can be defined by a map 
\[
\left[t_{1},t_{2}\right]\times\left(-\epsilon,\epsilon\right)\rightarrow T^{\ast}Q:\left(t,s\right)\mapsto\Gamma_{s}\left(t\right),
\]
such that 
\[
\Gamma_{0}\left(t\right)=\Gamma\left(t\right)\ \ \ \text{and\ \ \ }\left.\frac{\partial}{\partial s}\right\vert _{0}\Gamma_{s}\left(t_{1,2}\right)=0,
\]
through the formula 
\[
\delta\Gamma\left(t\right):=\left.\frac{\partial}{\partial s}\right\vert _{0}\Gamma_{s}\left(t\right).
\]
So, using the equality $d\theta=-\omega$ (and the fixed end point
conditions $\delta\Gamma\left(t_{1,2}\right)=0$) we have 
\begin{align*}
\left.\frac{\partial}{\partial s}\right\vert _{0}S\left(\Gamma_{s}\right) & =\int_{t_{1}}^{t_{2}}\left[\left.\frac{\partial}{\partial s}\right\vert _{0}\theta\left(\Gamma_{s}^{\prime}\left(t\right)\right)-\left.\frac{\partial}{\partial s}\right\vert _{0}H\left(\Gamma_{s}\left(t\right)\right)\right]\ dt\\
\\
 & =\int_{t_{1}}^{t_{2}}\left(\omega\left(\Gamma^{\prime}\left(t\right),\delta\Gamma\left(t\right)\right)-\left\langle dH\left(\Gamma\left(t\right)\right),\delta\Gamma\left(t\right)\right\rangle \right)\ dt.
\end{align*}

\bigskip{}

We shall make the following assumptions for the variational constraints
$\mathcal{V}$: 
\begin{description}
\item [{A1}] The subbundle $\mathcal{V}^{\perp}$, the symplectic orthogonal
of $\mathcal{V}$, is a vertical subbundle; that is, $\mathcal{V}^{\perp}\subset\ker\left(\pi_{Q\ast}\right)$. 
\item [{A2}] The subset 
\[
C_{V}:=\pi_{Q\ast}\left(\mathcal{V}\right)\subset TQ
\]
defines a vector subbundle of $TQ$. 
\end{description}
It can be shown that \textbf{A1} implies the inclusions

\[
\mathcal{V}^{\perp}\subset\ker\left(\pi_{Q\ast}\right)\subset\mathcal{V},
\]
and consequently 
\begin{equation}
v\in\mathcal{V}\ \ \ \mbox{if and only if}\ \ \pi{}_{Q\ast}\left(v\right)\in\pi_{Q\ast}\left(\mathcal{V}\right).\label{ioi}
\end{equation}

\begin{rem}
Assumptions \textbf{A1} is related to the physical meaning of $\mathcal{V}^{\perp}$:
the \textbf{space of the constraint forces} (for details see Refs.
\cite{cg} and \cite{g}). It says that the forces are given by vertical
vectors. On the other hand, assumption \textbf{A2 }says that the space
of constraint forces does not depend on velocities, but only on positions. 
\end{rem}
Define $\gamma\left(t\right):=\pi_{Q}\left(\Gamma\left(t\right)\right)$.
It is clear that $\delta\gamma\left(t\right):=\pi_{Q\ast}\left(\delta\Gamma\left(t\right)\right)$
is an infinitesimal variation of $\gamma$. Using the Eq. \eqref{ioi}
and assumption \textbf{A2}, it easily follows that 
\begin{equation}
\delta\Gamma\left(t\right)\in\left.\mathcal{V}\right\vert _{\Gamma\left(t\right)}\ \Leftrightarrow\ \delta\gamma\left(t\right)\in\left.C_{V}\right\vert _{\gamma\left(t\right)}.\label{eqa}
\end{equation}
On the other hand, in terms of $\gamma$, the action defined in Eq.
\eqref{S} can be written 
\begin{equation}
S\left(\Gamma\right)=\int_{t_{1}}^{t_{2}}\left[\left\langle \Gamma\left(t\right),\gamma^{\prime}\left(t\right)\right\rangle -H\left(\Gamma\left(t\right)\right)\right]\ dt.\label{S2}
\end{equation}
Thus, Eq. \eqref{eve} is an extremal condition for \eqref{S2} for
variations $\delta\gamma$ inside $C_{V}$. 
\begin{rem}
\label{hdcv} By assuming conditions \textbf{A1} and \textbf{A2},
a Hamiltonian GNHS $\left(H,D,\mathcal{V}\right)$ can be completely
described by the data $\left(H,D,C_{V}\right)$, and we shall do it
from now on. The cases in which \textbf{A2} is not satisfied will be
studied in Section 3, in the context of higher order constrained systems. 
\end{rem}
Now suppose that we have a Lie group $G$ acting on $Q$ through a
map $\rho:G\times Q\rightarrow Q$. (We choose here to work with left
actions, but for right actions we would have similar definitions and
results). Consider the canonical lifted actions of $\rho$ to $TQ$
and $T^{\ast}Q$, given by 
\[
G\times TQ\rightarrow TQ:\left(g,v\right)\mapsto\left(\rho_{g}\right)_{\ast}\left(v\right)
\]
and 
\begin{equation}
G\times T^{\ast}Q\rightarrow T^{\ast}Q:\left(g,\sigma\right)\mapsto\hat{\rho}_{g}\left(\sigma\right):=\left(\rho_{g^{-1}}\right)^{\ast}\left(\sigma\right),\label{rhotilde}
\end{equation}
respectively, where the diffeomorphism $\rho_{g}:Q\rightarrow Q$
is given by $\rho_{g}(q)=\rho\left(g,q\right)$. Notice that 
\begin{equation}
\rho_{g}\circ\pi_{Q}=\pi_{Q}\circ\hat{\rho}_{g},\;\;\;\forall g\in G.\label{lad}
\end{equation}
We shall assume that $\mathcal{X}\coloneqq Q/G$, $TQ/G$ and $T^{\ast}Q/G$
are manifolds and that the canonical projections $\pi:Q\rightarrow\mathcal{X}$,
$p:TQ\rightarrow TQ/G$ and $\hat{p}:T^{\ast}Q\rightarrow T^{\ast}Q/G$
are submersions.

\begin{definition}We shall say that a Hamiltonian GNHS $\left(H,D,C_{V}\right)$
is $G$-\textbf{invariant} if for all $g\in G$: 
\begin{description}
\item [{a.}] $H\circ\hat{\rho}_{g}=H$, 
\item [{b.}] $\hat{\rho}_{g}\left(D\right)=D$ and 
\item [{c.}] $\left(\rho_{g}\right)_{\ast}\left(C_{V}\right)=C_{V}$. 
\end{description}
In this case, we shall also say that the Lie group $G$ is a symmetry
of the triple $\left(H,D,C_{V}\right)$. \end{definition}

By using the canonical projections $p$ and $\hat{p}$ described above,
we can define the \textbf{reduced Hamiltonian} $h:T^{\ast}Q/G\rightarrow\mathbb{R}$,
given by 
\begin{equation}
h\circ\hat{p}=H,\label{Hh}
\end{equation}
and the \textbf{reduced kinematic and variational constraints} 
\begin{equation}
\mathfrak{D}:=\hat{p}\left(D\right)=D/G\ \ \ \ \text{and}\ \ \ \ \mathfrak{C}_{V}:=p\left(C_{V}\right)=C_{V}/G,\label{cr}
\end{equation}
respectively.

\subsection{A reduction procedure with one connection }

The aim of this subsection is to write down the equations of motion
of $\left(H,D,C_{V}\right)$ in terms of the reduced data $h$, $\mathfrak{D}$
and $\mathfrak{C}_{V}$. In order to do that, we shall consider the
results presented in \cite{cfg} and \cite{cmpr}. In particular,
we shall use the so-called generalized nonholonomic connection, defined
from the variational constraints.

\subsubsection{The Atiyah isomorphism}

\label{gnhc} From now on, we shall assume that the action $\rho:G\times Q\rightarrow Q$
is free, what implies that $\pi:Q\rightarrow\mathcal{X}$ is a principal
fiber bundle. Recall that a principal connection for $\pi:Q\rightarrow\mathcal{X}$
is a map $A:TQ\rightarrow\mathfrak{g}$ such that 
\begin{equation}
A\left(\eta_{Q}\left(q\right)\right)=\eta\quad\text{and}\quad A\left(\rho_{g\ast}\left(v\right)\right)=\Ad_{g}A\left(v\right),\label{cpc}
\end{equation}
where $\mathfrak{g}$ is the Lie algebra of $G$, $\eta_{Q}\in\mathfrak{X}\left(Q\right)$
is the fundamental vector field related to $\eta\in\mathfrak{g}$,
and $\Ad_{g}:\mathfrak{g}\rightarrow\mathfrak{g}$ is the adjoint
action of $g\in G$ on the Lie algebra $\mathfrak{g}$. It is well-known
that $A$ gives rise to a fiber bundle isomorphism (see Ref. \cite{cmr0})
\[
\alpha_{A}:TQ/G\rightarrow T\mathcal{X}\oplus\widetilde{\mathfrak{g}},
\]
called \emph{Atiyah isomorphism,} given by 
\[
\alpha_{A}\left(\left[v\right]\right):=\left(\pi_{\ast}\left(v\right),\left[q,A\left(v\right)\right]\right),\quad\text{for all}\quad q\in Q\quad\text{and}\quad v\in T_{q}Q,
\]
where $\left[v\right]:=p\left(v\right)\in\left.TQ\right/G$. By $\widetilde{\mathfrak{g}}:=\left(Q\times\mathfrak{g}\right)/G$
we are denoting the adjoint bundle (with base $\mathcal{X}$). (The
action of $G$ on $Q\times\mathfrak{g}$ is given by the action $\rho$
on $Q$ and the adjoint action on $\mathfrak{g}$). The elements of
$\widetilde{\mathfrak{g}}$ are denoted as equivalence classes $\left[q,\eta\right]$,
with $q\in Q$ and $\eta\in\mathfrak{g}$.

For later convenience, note that defining 
\[
a:TQ\rightarrow\widetilde{\mathfrak{g}}\ :\ v\mapsto\left[q,A\left(v\right)\right],
\]
we have that 
\begin{equation}
\alpha_{A}\circ p\left(v\right)=\pi_{\ast}\left(v\right)\oplus a\left(v\right),\ \ \ \forall v\in TQ.\label{pma}
\end{equation}

Related to $\alpha_{A}$, we have the next results. 
\begin{itemize}
\item Since $\alpha_{A}$ is a vector bundle isomorphism, then for each
$q\in Q$ the spaces $T_{\pi\left(q\right)}\mathcal{X}\oplus\widetilde{\mathfrak{g}}_{\pi\left(q\right)}$
and $\left(\left.TQ\right/G\right)_{\pi\left(q\right)}$ have the
same dimension. Moreover, since $\rho$ is a free action, it can be
shown that the map 
\[
\alpha_{A}\circ p:TQ\rightarrow T\mathcal{X}\oplus\widetilde{\mathfrak{g}}
\]
defines a linear isomorphism between $T_{q}Q$ and $T_{\pi(q)}X\oplus\mathfrak{\widetilde{g}}_{\pi(q)}$
when restricted to each fiber $T_{q}Q$. 
\item Let $\mathbb{H}$ denote the horizontal subbundle related to $A$
and $\mathbb{V}\coloneqq\ker\pi_{*}$ the vertical subbundle.
For each $q\in Q$, the restrictions of $\alpha_{A}\circ p$ to $\mathbb{H}_{q}$
and $\mathbb{V}_{q}$ are injective and 
\begin{equation}
\alpha_{A}\circ p\left(\mathbb{H}_{q}\right)=\pi_{\ast}\left(\mathbb{H}_{q}\right)=T_{\pi\left(q\right)}\mathcal{X}\label{eh}
\end{equation}
and 
\begin{equation}
\alpha_{A}\circ p\left(\mathbb{V}_{q}\right)=a\left(\mathbb{V}_{q}\right)=\widetilde{\mathfrak{g}}_{\pi\left(q\right)}.\label{ev}
\end{equation}
\item By identifying the bundles $\left(TQ/G\right)^{*}$ and $\left.T^{\ast}Q\right/G$
in a canonical way, we can define the fiber bundle isomorphism $\hat{\alpha}_{A}:T^{\ast}Q/G\rightarrow T^{\ast}\mathcal{X}\oplus{\widetilde{\mathfrak{g}}}^{*}$
given by 
\[
\hat{\alpha}_{A}:=\left(\alpha_{A}^{-1}\right)^{\ast},
\]
where $(\widetilde{\mathfrak{g}})^{\ast}$ is identified with $\widetilde{(\mathfrak{g}^{\ast})}$
in a natural way. 
\item Again, for each $q\in Q$, the linear spaces $\left(\left.T^{*}Q\right/G\right)_{\pi\left(q\right)}$
and $T_{\pi\left(q\right)}^{*}\mathcal{X}\oplus\widetilde{\mathfrak{g}}_{\pi\left(q\right)}^{*}$
have the same dimension and, since $\rho$ is a free action, the map
\[
\hat{\alpha}_{A}\circ\hat{p}:T^{\ast}Q\rightarrow T^{\ast}\mathcal{X}\oplus\widetilde{\mathfrak{g}}^{\ast}
\]
defines a linear isomorphism with its image when restricted to each
$T_{q}^{*}Q$. 
\end{itemize}
\begin{rem}
\label{momentum} As for a standard (unconstrained) Hamiltonian system,
if the action of $G$ on $Q$ preserves the symplectic form (see Ref.
\cite{am} for more details), we have an application ${\bf J}:T^{*}Q\rightarrow\mathfrak{g}^{*}$,
called \textit{momentum map,}\footnote{The function $\mathbf{J}$ is a conserved quantity in the case of
unconstrained systems. In the case, for instance, of nonholonomic
systems, the momentum map is not conserved in general. Nevertheless,
it remains a relevant datum of the system and it is possible to know
its evolution along the trajectories \cite{bl96}.} defined (at least locally) by the formula 
\[
{\displaystyle \left<{\bf J}(\sigma_{q}),\eta\right>:=J(\eta)(\sigma_{q})=\left<\sigma_{q},\eta_{Q}(q)\right>},\;\;\;\forall\,\eta\in\mathfrak{g},
\]
where $J(\eta)$ is a smooth function on $T^{*}Q$ such that 
\[
i_{\eta_{Q}}\omega=dJ(\eta).
\]
By using this application, the isomorphism $\hat{\alpha}_{A}:T^{\ast}Q/G\rightarrow T^{\ast}\mathcal{X}\oplus{\widetilde{\mathfrak{g}}}^{*}$
can be written as 
\[
\hat{\alpha}_{A}([\sigma_{q}])=\left(hor_{q}^{*}\sigma_{q},[q,{\bf J}(\sigma_{q})]\right),
\]
where $hor_{q}^{*}:T_{q}^{*}Q\rightarrow T_{\pi(q)}^{*}\mathcal{X}$
is dual to the horizontal lift map $hor_{q}:T_{\pi(q)}\mathcal{X}\rightarrow T_{q}Q$
associated to the connection $A$. 
\end{rem}

\subsubsection{Generalized nonholonomic connection: \emph{reduced} horizontal and
vertical variations}

Given a $G$-invariant Hamiltonian GNHS $\left(H,D,C_{V}\right)$,
if $A$ is an arbitrary principal connection on $\pi$, the subset
\[
\alpha_{A}\circ p\left(C_{V}\right)=\alpha_{A}\left(\mathfrak{C}_{V}\right)\subset T\mathcal{X}\oplus\widetilde{\mathfrak{g}}
\]
defines a vector subbundle of $T\mathcal{X}\oplus\widetilde{\mathfrak{g}}$
whose elements can be called \emph{reduced variations}. Let us identify
$\mathfrak{C}_{V}$ and $\alpha_{A}\left(\mathfrak{C}_{V}\right)$,
i.e. let us write 
\[
\mathfrak{C}_{V}:=\alpha_{A}\circ p\left(C_{V}\right).
\]
As in Ref. \cite{cfg}, consider the \textit{generalized nonholonomic
connection} $A^{\bullet}:TQ\rightarrow\mathfrak{g}$ defined from
the variational constraints. If $\mathbb{H}^{\bullet}$ denotes the
horizontal subbundle related to $A^{\bullet}$, we can write 
\[
C_{V}=\left(C_{V}\cap\mathbb{H}^{\bullet}\right)\oplus\left(C_{V}\cap\mathbb{V}\right)
\]
and 
\[
\mathfrak{C}_{V}^{\bullet}:=\alpha_{A^{\bullet}}\circ p\left(C_{V}\right)
\]
By using \eqref{eh} and \eqref{ev}, we can prove that 
\[
\mathfrak{C}_{V}^{\bullet}=\mathfrak{C}_{V}^{\text{\textrm{hor}}}\oplus\mathfrak{C}_{V}^{\text{\textrm{ver}}},
\]
where 
\begin{equation}
\mathfrak{C}_{V}^{\text{\textrm{hor}}}:=\pi_{\ast}\left(C_{V}\right)=\mathfrak{C}_{V}^{\bullet}\cap T\mathcal{X}\label{ttt}
\end{equation}
and 
\begin{equation}
\mathfrak{C}_{V}^{\text{\textrm{ver}}}:=a^{\bullet}\left(C_{V}\right)=\mathfrak{C}_{V}^{\bullet}\cap\widetilde{\mathfrak{g}}.\label{sss}
\end{equation}
That is, using the connection $A^{\bullet}$, the reduced variations
decompose into horizontal and vertical parts which are mutually independent. 
\begin{rem}
\label{id} As we did with $\mathfrak{C}_{V}$, we shall see the reduced
kinematic constraint $\mathfrak{D}$ {[}see \eqref{cr}{]} as a subset
of $T^{\ast}\mathcal{X}\oplus\widetilde{\mathfrak{g}}^{\ast}$, i.e.
we shall make the identification 
\[
\mathfrak{D}=\hat{\alpha}_{A}\circ\hat{p}\left(D\right).
\]
Moreover, from now on, and if there is no risk of confusion, we shall
identify the fiber bundles $\left.TQ\right/G$ (resp. $\left.T^{\ast}Q\right/G$)
and $T\mathcal{X}\oplus\widetilde{\mathfrak{g}}$ (resp. $T^{\ast}\mathcal{X}\oplus\widetilde{\mathfrak{g}}^{\ast}$)
\emph{via} the map $\alpha_{A}$ (resp. $\hat{\alpha}_{A}$). If $A$
is an arbitrary principal connection, we will write $\mathfrak{C}_{V}$
and $\mathfrak{D}$. If, on the other hand, we use the generalized
nonholonomic connection $A^{\bullet}$, we shall write $\mathfrak{C}_{V}^{\bullet}$
and $\mathfrak{D}^{\bullet}$, respectively. 
\end{rem}

\subsubsection{Reduced variational principle}

\label{reducedextremal}

As we have seen in Section \ref{basic}, a curve $\Gamma:\left[t_{1},t_{2}\right]\rightarrow D\subset T^{\ast}Q$
is a trajectory of $\left(H,D,C_{V}\right)$ if and only if it is
an extremal of the action 
\[
S\left(\Gamma\right)=\int_{t_{1}}^{t_{2}}\left[\left\langle \Gamma\left(t\right),\gamma^{\prime}\left(t\right)\right\rangle -H\left(\Gamma\left(t\right)\right)\right]\ dt
\]
for all variations $\delta\Gamma$ such that $\delta\gamma$ lies
on $C_{V}$. We want to write this extremal condition in terms of
the reduced data $\left(h,\mathfrak{D},\mathfrak{C}_{V}\right)$.

\bigskip{}

Fix a principal connection $A$ (we are not assuming at this point
that $A=A^{\bullet}$). Using the identification between $T^{\ast}Q/G$
and $T^{\ast}\mathcal{X}\oplus\widetilde{\mathfrak{g}}^{\ast}$ given
by $\hat{\alpha}_{A}$, let us denote the composition 
\[
h\circ\hat{\alpha}_{A}^{-1}:T^{\ast}\mathcal{X}\oplus\widetilde{\mathfrak{g}}^{\ast}\rightarrow\mathbb{R}
\]
simply as $h$. Consider a curve $\Gamma:\left[t_{1},t_{2}\right]\rightarrow T^{\ast}Q$
and write $\pi_{Q}\left(\Gamma(t)\right)\eqqcolon\gamma(t)$. Following
the notation of \cite{cmpr} and \cite{cmr0}, let us define 
\begin{equation}
\ \ \ \ x\left(t\right):=\pi\left(\gamma\left(t\right)\right),\ \ \ \dot{x}\left(t\right)\oplus\bar{v}\left(t\right):=\pi_{\ast}\left(\gamma^{\prime}\left(t\right)\right)\oplus a\left(\gamma^{\prime}\left(t\right)\right)=\alpha_{A}\circ p\left(\gamma^{\prime}\left(t\right)\right)\label{def}
\end{equation}
and 
\begin{equation}
\varsigma\left(t\right):=y\left(t\right)\oplus\bar{\mu}\left(t\right):=\hat{\alpha}_{A}\circ\hat{p}\left(\Gamma\left(t\right)\right).\label{seda}
\end{equation}
Then, recalling \eqref{Hh}, it is easy to show that 
\[
\left\langle \Gamma\left(t\right),\gamma^{\prime}\left(t\right)\right\rangle -H\left(\Gamma\left(t\right)\right)=\left\langle y\left(t\right),\dot{x}\left(t\right)\right\rangle +\left\langle \bar{\mu}\left(t\right),\bar{v}\left(t\right)\right\rangle -h\left(\varsigma\left(t\right)\right)
\]
and consequently 
\[
S\left(\Gamma\right)=\int_{t_{1}}^{t_{2}}\left[\left\langle y\left(t\right),\dot{x}\left(t\right)\right\rangle +\left\langle \bar{\mu}\left(t\right),\bar{v}\left(t\right)\right\rangle -h\left(\varsigma\left(t\right)\right)\right]\ dt.
\]
Now, let $\nabla^{A}$ be the affine connection induced by $A$ in
$\widetilde{\mathfrak{g}}$ and $\widetilde{\mathfrak{g}}^{*}$ and
fix an affine connection $\nabla^{\mathcal{X}}$ on $\mathcal{X}$.
Also, denote by $B:TQ\times_{Q}TQ\rightarrow\mathfrak{g}$ the curvature
of $A$. Given a variation 
\[
\delta\Gamma\left(t\right)=\left.\frac{\partial}{\partial s}\right\vert _{0}\Gamma_{s}\left(t\right)
\]
with fixed end points, it can be shown that 
\begin{align*}
\left.\frac{\partial}{\partial s}\right\vert _{0}S\left(\Gamma_{s}\right) & =\int_{t_{1}}^{t_{2}}[\left\langle \delta y\left(t\right),\dot{x}\left(t\right)\right\rangle +\left\langle y\left(t\right),\delta\dot{x}\left(t\right)\right\rangle +\\
\\
 & +\left\langle \delta\bar{\mu}\left(t\right),\bar{v}\left(t\right)\right\rangle +\left\langle \bar{\mu}\left(t\right),\delta\bar{v}\left(t\right)\right\rangle -\delta h\left(\varsigma\left(t\right)\right)]\ dt
\end{align*}
for some curves $\delta y:\left[t_{1},t_{2}\right]\rightarrow T^{\ast}\mathcal{X}$
and $\delta\bar{\mu}:\left[t_{1},t_{2}\right]\rightarrow\widetilde{\mathfrak{g}}^{\ast}$,
and where 
\[
\delta x\left(t\right)=\pi_{\ast}\left(\delta\gamma\left(t\right)\right),\ \ \ \bar{\eta}\left(t\right)=a\left(\delta\gamma\left(t\right)\right),
\]
\[
\delta\dot{x}\left(t\right)=\frac{D\delta x}{Dt}\left(t\right),\ \ \ \delta\bar{v}\left(t\right)=\frac{D\bar{\eta}}{Dt}\left(t\right)+\left[\bar{v}\left(t\right),\bar{\eta}\left(t\right)\right]-\widetilde{B}\left(\dot{x}\left(t\right),\delta x\left(t\right)\right),
\]
and 
\begin{equation}
\widetilde{B}:T\mathcal{X}\times_{\mathcal{X}}T\mathcal{X}\rightarrow\widetilde{\mathfrak{g}}\ :\ \left(\pi_{\ast}\left(u_{q}\right),\pi_{\ast}\left(v_{q}\right)\right)\mapsto\left[q,B\left(u_{q},v_{q}\right)\right]\label{bt}
\end{equation}
is the \emph{reduced curvature }of $A$. Also, we can write 
\[
\delta h\left(\varsigma\left(t\right)\right)=\left\langle \frac{\partial h}{\partial x}\left(\varsigma\left(t\right)\right),\delta x\left(t\right)\right\rangle +\left\langle \delta y\left(t\right),\frac{\partial h}{\partial y}\left(\varsigma\left(t\right)\right)\right\rangle +\left\langle \delta\bar{\mu}\left(t\right),\frac{\partial h}{\partial\bar{\mu}}\left(\varsigma\left(t\right)\right)\right\rangle 
\]
where 
\[
\frac{\partial h}{\partial y}:T^{\ast}\mathcal{X}\oplus\widetilde{\mathfrak{g}}^{\ast}\rightarrow T\mathcal{X}\ \ \ \text{and}\ \ \frac{\partial h}{\partial\bar{\mu}}:T^{\ast}\mathcal{X}\oplus\widetilde{\mathfrak{g}}^{\ast}\rightarrow\widetilde{\mathfrak{g}}
\]
are the first and second components of the fiber derivative 
\[
\mathbb{F}h:T^{\ast}\mathcal{X}\oplus\widetilde{\mathfrak{g}}^{\ast}\rightarrow T\mathcal{X}\oplus\widetilde{\mathfrak{g}}
\]
of $h$ and 
\[
\frac{\partial h}{\partial x}:T^{\ast}\mathcal{X}\oplus\widetilde{\mathfrak{g}}^{\ast}\rightarrow T^{\ast}\mathcal{X}
\]
its base derivative with respect to an affine connection $\nabla^{\mathcal{X}}\oplus\nabla^{A}$.
See \cite{cmpr} and \cite{cmr0} for more details. Accordingly, integrating
by parts and using the fixed end points condition for $\delta\Gamma$,
\[
\left.\frac{\partial}{\partial s}\right\vert _{0}S\left(\Gamma_{s}\right)=0\ \ \ \mbox{if and only if}\ \ \ 
\]
\begin{align}
 & \left\langle -\frac{D}{Dt}y\left(t\right)-\frac{\partial h}{\partial x}\left(\varsigma\left(t\right)\right)-\left\langle \bar{\mu}\left(t\right),\widetilde{B}\left(\dot{x}\left(t\right),\cdot\right)\right\rangle ,\delta x\left(t\right)\right\rangle +\left\langle \delta y\left(t\right),\dot{x}\left(t\right)-\frac{\partial h}{\partial y}\left(\varsigma\left(t\right)\right)\right\rangle \nonumber \\
\nonumber \\
 & +\left\langle -\frac{D}{Dt}\bar{\mu}\left(t\right)+\ad_{\bar{v}\left(t\right)}^{\ast}\bar{\mu}\left(t\right),\bar{\eta}\left(t\right)\right\rangle +\left\langle \delta\bar{\mu}\left(t\right),\bar{v}\left(t\right)-\frac{\partial h}{\partial\bar{\mu}}\left(\varsigma\left(t\right)\right)\right\rangle =0,\label{rv}
\end{align}
where $\ad_{[q,v]}^{*}$ is the transpose of the map $\ad_{[q,v]}:\widetilde{\mathfrak{g}}_{\pi(q)}\rightarrow\widetilde{\mathfrak{g}}_{\pi(q)}$
given by 
\[
\ad_{[q,v]}([q,w])=\left[q,[v,w]\right].
\]
On the other hand, since the condition $\delta\Gamma\left(t\right)\in\left.\mathcal{V}\right\vert _{\Gamma\left(t\right)}$
only imposes that $\pi_{Q\ast}\left(\delta\Gamma\left(t\right)\right)=\delta\gamma\left(t\right)\in\left.C_{V}\right\vert _{\gamma\left(t\right)}$,
we have that $\delta y\left(t\right)$ and $\delta\bar{\mu}\left(t\right)$
are arbitrary and

\begin{equation}
\delta x\left(t\right)\oplus\bar{\eta}\left(t\right)=\alpha_{A}\circ p\left(\delta\gamma\left(t\right)\right)\in\alpha_{A}\circ p\left(C_{V}\right)=\mathfrak{C}_{V}.\label{dxec}
\end{equation}
As a consequence, the original variational principle \eqref{eve}
translates to the condition \eqref{rv}, with variations satisfying
\eqref{dxec}. This can be called the \textit{reduced variational
principle}. Let us study it in more detail.

\subsubsection{Generalized Hamilton-d'Alembert-Poincar\'e equations}

The arbitrariness of $\delta y\left(t\right)$ and $\delta\bar{\mu}\left(t\right)$
implies that {[}see Eq. \eqref{rv}{]}
\[
\dot{x}\left(t\right)-\frac{\partial h}{\partial y}\left(\varsigma\left(t\right)\right)=0\ \ \ \textrm{and}\ \ \ \bar{v}\left(t\right)-\frac{\partial h}{\partial\bar{\mu}}\left(\varsigma\left(t\right)\right)=0.
\]
Then, using the equality ${\displaystyle \dot{x}\left(t\right)}=x'\left(t\right)$
{[}see Eq. \eqref{def}{]}, we have that \eqref{rv} is equivalent
to 
\begin{equation}
\left\langle \frac{Dy}{Dt}\left(t\right)+\frac{\partial h}{\partial x}\left(\varsigma\left(t\right)\right)+\left\langle \bar{\mu}\left(t\right),\widetilde{B}\left(\frac{\partial h}{\partial y}\left(\varsigma\left(t\right)\right),\cdot\right)\right\rangle ,\delta x\left(t\right)\right\rangle +\left\langle \frac{D\bar{\mu}}{Dt}\left(t\right)-\ad_{\frac{\partial h}{\partial\bar{\mu}}\left(\varsigma\left(t\right)\right)}^{\ast}\bar{\mu}\left(t\right),\bar{\eta}\left(t\right)\right\rangle =0\label{hvj}
\end{equation}
and 
\begin{equation}
x'\left(t\right)=\frac{\partial h}{\partial y}\left(\varsigma\left(t\right)\right).\label{bep}
\end{equation}

So far we have seen that, if a curve $\Gamma\left(t\right)$ is a
trajectory of our GNHS, then the curve $\varsigma\left(t\right)$
given by \eqref{seda} is a solution of \eqref{hvj} and \eqref{bep},
with variations subjected to \eqref{dxec}. Reciprocally, it is easy
to show that, if $\varsigma\left(t\right)$ solves the last equations
and $\gamma\left(t\right)$ is a solution of 
\begin{equation}
\gamma'\left(t\right)=\left(\left.\alpha_{A}\circ p\right|_{\gamma\left(t\right)}\right)^{-1}\left(\varpi\left(t\right)\right),\label{recon}
\end{equation}
with 
\[
\varpi\left(t\right)=x'\left(t\right)\oplus\frac{\partial h}{\partial\bar{\mu}}\left(\varsigma\left(t\right)\right),
\]
then 
\[
\Gamma\left(t\right)\coloneqq\left(\left.\hat{\alpha}_{A}\circ\hat{p}\right|_{\gamma\left(t\right)}\right)^{-1}\left(\varsigma\left(t\right)\right)
\]
is a trajectory of our GNHS. Here, $\left.\alpha_{A}\circ p\right|_{q}$
(resp. $\left.\hat{\alpha}_{A}\circ\hat{p}\right|_{q}$) denotes the
linear isomorphism between $T_{q}Q$ (resp. $T_{q}^{*}Q$) and $T_{\pi\left(q\right)}\mathcal{X}\oplus\widetilde{\mathfrak{g}}_{\pi\left(q\right)}$
(resp. $T_{\pi\left(q\right)}^{*}\mathcal{X}\oplus\widetilde{\mathfrak{g}}_{\pi\left(q\right)}^{*}$)
described in Section \ref{gnhc}. The Eq. \eqref{recon} is precisely
a \textit{reconstruction equation}. (Notice that, in essence, it does
not depend on the system under consideration, but only on the configuration
space $Q$, the group $G$ and the chosen connection). As we said
in the introduction, we will not study in this paper the reconstruction
equations, but only the reduced ones. To continue our study of the
latter, let us assume that $A=A^{\bullet}$. This implies that {[}see
\eqref{ttt}, \eqref{sss} and \eqref{dxec}{]} 
\[
\delta x\left(t\right)\oplus\bar{\eta}\left(t\right)\in\mathfrak{C}_{V}^{\text{\textrm{hor}}}\oplus\mathfrak{C}_{V}^{\text{\textrm{ver}}}=\mathfrak{C}_{V}^{\bullet},
\]
with the reduced variations $\delta x\left(t\right)$ and $\bar{\eta}\left(t\right)$
varying independently inside 
\[
\left.\mathfrak{C}_{V}^{\text{\textrm{hor}}}\right\vert _{x\left(t\right)}\ \ \ \ \text{and\ }\ \ \ \left.\mathfrak{C}_{V}^{\text{\textrm{ver}}}\right\vert _{x\left(t\right)},
\]
respectively, what enables us to decompose Eq. \eqref{hvj} into two
parts, as we describe in the next result. 
\begin{thm}
\label{tm} Let $\left(H,D,C_{V}\right)$ be a $G$-invariant Hamiltonian
{\small{}{}GNHS} and let $A^{\bullet}:TQ\rightarrow\mathfrak{g}$
be the generalized nonholonomic connection of the system. Then, a
curve $\Gamma:\left[t_{1},t_{2}\right]\rightarrow T^{\ast}Q$ is a
trajectory of $\left(H,D,C_{V}\right)$ if and only if the curve 
\[
\varsigma:\left[t_{1},t_{2}\right]\rightarrow T^{\ast}\mathcal{X}\oplus\widetilde{\mathfrak{g}}^{\ast},\ \ \ \text{with\ base\ \ }x:\left[t_{1},t_{2}\right]\rightarrow\mathcal{X}
\]
and given by 
\[
\varsigma\left(t\right)=y\left(t\right)\oplus\bar{\mu}\left(t\right)=\hat{\alpha}_{A^{\bullet}}\circ\hat{p}\left(\Gamma\left(t\right)\right),
\]
satisfies the kinematic constraint 
\[
\varsigma\left(t\right)\in\mathfrak{D}^{\bullet},
\]
the \textbf{Horizontal Generalized Hamilton-d'Alembert-Poincar\'e (HdP)
Equations} 
\begin{equation}
\left\langle \frac{Dy}{Dt}\left(t\right)+\frac{\partial h}{\partial x}\left(\varsigma\left(t\right)\right)+\left\langle \bar{\mu}\left(t\right),\widetilde{B}\left(\frac{\partial h}{\partial y}\left(\varsigma\left(t\right)\right),\cdot\right)\right\rangle ,\delta x\left(t\right)\right\rangle =0\label{lph}
\end{equation}
and the \textbf{Vertical Generalized Hamilton-d'Alembert-Poincar\'e
(HdP) Equations} 
\begin{equation}
\left\langle \frac{D\bar{\mu}}{Dt}\left(t\right)-\ad_{\frac{\partial h}{\partial\bar{\mu}}\left(\varsigma\left(t\right)\right)}^{\ast}\bar{\mu}\left(t\right),\bar{\eta}\left(t\right)\right\rangle =0\label{lpv}
\end{equation}
for all curves 
\[
\delta x:\left[t_{1},t_{2}\right]\rightarrow T\mathcal{X}\quad\text{and}\quad\bar{\eta}:\left[t_{1},t_{2}\right]\rightarrow\widetilde{\mathfrak{g}}
\]
fulfilling 
\[
\delta x\left(t\right)\in\left.\mathfrak{C}_{V}^{\text{\textrm{hor}}}\right\vert _{x\left(t\right)}\quad\text{and\ }\quad\bar{\eta}\left(t\right)\in\left.\mathfrak{C}_{V}^{\text{\textrm{ver}}}\right\vert _{x\left(t\right)};
\]
and the base curve $x$ satisfies 
\begin{equation}
x'\left(t\right)=\frac{\partial h}{\partial y}\left(\varsigma\left(t\right)\right).\label{be}
\end{equation}
\end{thm}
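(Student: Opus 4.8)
The plan is to assemble the equivalences already established before the statement and to add the single new ingredient, namely the specialization of the (so far arbitrary) principal connection $A$ to the generalized nonholonomic connection $A^{\bullet}$. Recall from Section~\ref{basic} that $\Gamma$ is a trajectory of $(H,D,C_V)$ precisely when $\Gamma(t)\in D$ and $\Gamma$ extremizes the action $S$ for all variations $\delta\Gamma$ with $\delta\gamma\in C_V$; and recall from the reduced variational principle that this extremality is equivalent to \eqref{rv} holding for all reduced variations subject to \eqref{dxec}, with $\delta y$ and $\delta\bar\mu$ free. Using the freedom of $\delta y$ and $\delta\bar\mu$ I would first extract the two fiber-derivative relations $\dot x=\partial h/\partial y$ and $\bar v=\partial h/\partial\bar\mu$, which, via $\dot x=x'$, turn \eqref{rv} into the pair \eqref{bep}--\eqref{hvj}; equation \eqref{bep} is exactly the base equation \eqref{be}. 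Thus, before using $A^{\bullet}$ at all, $\Gamma$ is a trajectory iff \eqref{be}, the kinematic constraint, and \eqref{hvj} (for all admissible $\delta x\oplus\bar\eta\in\mathfrak{C}_V$) hold.

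For the kinematic constraint I would simply transport $\Gamma(t)\in D$ through the isomorphism $\hat\alpha_{A^{\bullet}}\circ\hat p$, which by the identification of Remark~\ref{id} reads $\varsigma(t)\in\hat\alpha_{A^{\bullet}}\circ\hat p(D)=\mathfrak{D}^{\bullet}$. The core of the argument is then to set $A=A^{\bullet}$ and split \eqref{hvj}. By construction of $A^{\bullet}$ one has $C_V=(C_V\cap\mathbb H^{\bullet})\oplus(C_V\cap\mathbb V)$, so that, applying \eqref{eh} and \eqref{ev} together with \eqref{ttt} and \eqref{sss}, the admissible reduced variations decompose as $\delta x\oplus\bar\eta\in\mathfrak{C}_V^{\mathrm{hor}}\oplus\mathfrak{C}_V^{\mathrm{ver}}=\mathfrak{C}_V^{\bullet}$, with $\delta x$ ranging over $\mathfrak{C}_V^{\mathrm{hor}}\vert_{x(t)}\subset T\mathcal X$ and $\bar\eta$ ranging over $\mathfrak{C}_V^{\mathrm{ver}}\vert_{x(t)}\subset\widetilde{\mathfrak g}$ independently. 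Because the first bracket of \eqref{hvj} pairs only with $\delta x$ and the second only with $\bar\eta$, this independence lets me conclude that \eqref{hvj} holds for all admissible variations iff the horizontal bracket annihilates every $\delta x\in\mathfrak{C}_V^{\mathrm{hor}}\vert_{x(t)}$ and the vertical bracket annihilates every $\bar\eta\in\mathfrak{C}_V^{\mathrm{ver}}\vert_{x(t)}$, that is, iff \eqref{lph} and \eqref{lpv} both hold. Conversely, adding \eqref{lph} and \eqref{lpv} recovers \eqref{hvj} on $\mathfrak{C}_V^{\bullet}$, so the two sets of equations are equivalent to the single reduced condition; combined with \eqref{be} and the kinematic constraint, this yields both implications of the theorem.

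The step I expect to be the main obstacle is justifying the \emph{independent} arbitrariness of the curves $\delta x$ and $\bar\eta$ inside their respective subbundles, pointwise in $t$ and with fixed end points. This is where the special role of $A^{\bullet}$ is essential: for a generic connection $A$ the admissible variations satisfy only $\delta x\oplus\bar\eta\in\mathfrak{C}_V$, a subbundle of $T\mathcal X\oplus\widetilde{\mathfrak g}$ that in general couples the two factors and does not permit splitting \eqref{hvj}, whereas the direct-sum structure $\mathfrak{C}_V^{\bullet}=\mathfrak{C}_V^{\mathrm{hor}}\oplus\mathfrak{C}_V^{\mathrm{ver}}$ decouples them. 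I would need to verify that every pair of prescribed curves $t\mapsto\delta x(t)\in\mathfrak{C}_V^{\mathrm{hor}}\vert_{x(t)}$ and $t\mapsto\bar\eta(t)\in\mathfrak{C}_V^{\mathrm{ver}}\vert_{x(t)}$ is realizable as $(\pi_{\ast}(\delta\gamma),a^{\bullet}(\delta\gamma))$ for some genuine fixed-endpoint variation $\delta\gamma$ of $\gamma$ (equivalently $\delta\Gamma$ of $\Gamma$), using \eqref{eqa} and \eqref{ioi}; granting this realizability, the usual fundamental lemma of the calculus of variations applied separately in the two factors closes the argument.
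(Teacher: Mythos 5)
Your proposal is correct and follows essentially the same route as the paper: the paper's proof consists precisely of the preceding discussion (extracting \eqref{be} and \eqref{hvj} from \eqref{rv}, then setting $A=A^{\bullet}$ so that $\mathfrak{C}_{V}^{\bullet}=\mathfrak{C}_{V}^{\text{\textrm{hor}}}\oplus\mathfrak{C}_{V}^{\text{\textrm{ver}}}$ decouples the two pairings) combined with Lemma 10 of Ref.~\cite{cfg}, which is invoked exactly for the realizability of independent horizontal and vertical variations that you correctly identify as the one nontrivial step.
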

This theorem can be easily proved by combining the discussion above
and Lemma 10 of Ref. \cite{cfg}.

\bigskip{}

\begin{rem}
So far, we have been dealing with a left action. For a right action,
we only have to change the sign of the Lie bracket $[v,w]$ in \eqref{rv}.
Accordingly, the term ${\displaystyle {\ad_{\frac{\partial h}{\partial\bar{\mu}}\left(\varsigma\left(t\right)\right)}^{\ast}\bar{\mu}\left(t\right)}}$
in \eqref{lpv} changes its sign and the Vertical Generalized Hamilton-d'Alembert-Poincar\'e
translates to
\[
\left\langle \frac{D\bar{\mu}}{Dt}\left(t\right)+\ad_{\frac{\partial h}{\partial\bar{\mu}}\left(\varsigma\left(t\right)\right)}^{\ast}\bar{\mu}\left(t\right),\bar{\eta}\left(t\right)\right\rangle =0.
\]
\end{rem}

Summing up, we have replaced Eq. \eqref{eve}, which, as it is well-known,
gives rise to a set of $\dim Q+\dim C_{V}$ first order ODEs, by 
\begin{itemize}
\item $\dim\mathfrak{C}_{V}^{\text{\textrm{hor}}}$ horizontal HdP equations
\eqref{lph},
\item plus $\dim\mathfrak{C}_{V}^{\text{\textrm{ver}}}$ vertical HdP equations
\eqref{lpv},
\item plus $\dim Q-\dim G$ equations for the base curve \eqref{be}, 
\end{itemize}
what gives rise to a number of 
\[
\dim Q+\dim C_{V}-\dim G
\]
first order ODEs. Thus, our reduction procedure corresponds to a reduction
of the number of equations that we must solve in order to find the
trajectories of the original GNHS (as it happens with the analogous process
for unconstrained Hamiltonian systems \cite{cmpr}).

\subsection{A reduction procedure using two connections}

\label{alt} Suppose that $\pi:Q\rightarrow\mathcal{X}$ is a trivial
bundle. In such a case, it would be desirable to take $A$ as the
related trivial connection. In fact, if we could make this choice,
then the curvature and the reduced curvature vanish, and Eq. \eqref{hvj}
would reduce to 
\[
\left\langle \frac{Dy}{Dt}\left(t\right)+\frac{\partial h}{\partial x}\left(\varsigma\left(t\right)\right),\delta x\left(t\right)\right\rangle +\left\langle \frac{D\bar{\mu}}{Dt}\left(t\right)-\ad_{\frac{\partial h}{\partial\bar{\mu}}\left(\varsigma\left(t\right)\right)}^{\ast}\bar{\mu}\left(t\right),\bar{\eta}\left(t\right)\right\rangle =0.
\]
Also, the calculation of the involved covariant derivatives are too
much easier. The problem is that the variations $\delta x$ and $\bar{\eta}$
are not independent, and we cannot decouple the above equation into
horizontal and vertical parts as we did for the $A=A^{\bullet}$ case.
(We only know that their sum must be an element of $\mathfrak{C}_{V}$).
In order to solve this problem we shall consider another reduction
procedure, which involves a second principal connection.

\subsubsection{The map $\varphi$}

\label{tc} Given a Hamiltonian GNHS $\left(H,D,C_{V}\right)$ with
symmetry, consider an arbitrary principal connection $A$ and the
generalized nonholonomic connection $A^{\bullet}$ defined from variational
constraints $C_{V}$. Consider also the isomorphisms 
\[
\alpha_{A},\alpha_{A^{\bullet}}:\left.TQ\right/G\rightarrow T\mathcal{X}\oplus\widetilde{\mathfrak{g}},
\]
and write

\[
\alpha_{A}\circ p\left(v\right)=\pi_{\ast}\left(v\right)\oplus a\left(v\right)\ \ \ \mbox{and}\ \ \alpha_{A^{\bullet}}\circ p\left(v\right)=\pi_{\ast}\left(v\right)\oplus a^{\bullet}\left(v\right).
\]
In order to avoid any confusion, given a curve $\delta\gamma$, we
shall write 
\[
\alpha_{A}\circ p\left(\delta\gamma\left(t\right)\right)=\pi_{\ast}\left(\delta\gamma\left(t\right)\right)\oplus a\left(\delta\gamma\left(t\right)\right)=\delta x\left(t\right)\oplus\bar{\eta}\left(t\right)
\]
and 
\[
\alpha_{A^{\bullet}}\circ p\left(\delta\gamma\left(t\right)\right)=\pi_{\ast}\left(\delta\gamma\left(t\right)\right)\oplus a^{\bullet}\left(\delta\gamma\left(t\right)\right)=\delta x^{\bullet}\left(t\right)\oplus\bar{\eta}^{\bullet}\left(t\right).
\]
Of course, if $\delta\gamma$ is inside $C_{V}$, then 
\[
\delta x\left(t\right)\oplus\bar{\eta}\left(t\right)\in\mathfrak{C}_{V}\quad\text{and}\quad\delta x^{\bullet}\left(t\right)\in\mathfrak{C}_{V}^{\text{\textrm{hor}}},\bar{\eta}^{\bullet}\left(t\right)\in\mathfrak{C}_{V}^{\text{\textrm{ver}}}.
\]

\bigskip{}
 In Ref. \cite{gz}, the relationship between variations $\delta x^{\bullet}$
and $\bar{\eta}^{\bullet}$ with variations $\delta x$ and $\bar{\eta}$
was found to be 
\begin{equation}
\delta x\left(t\right)=\delta x^{\bullet}\left(t\right)\ \text{and}\ \ \ \bar{\eta}\left(t\right)=\varphi\left(\delta x^{\bullet}\left(t\right)\right)+\bar{\eta}^{\bullet}\left(t\right)\label{rel}
\end{equation}
with 
\[
\varphi=P_{\widetilde{\mathfrak{g}}}\circ\alpha_{A}\circ(\alpha_{A^{\bullet}})^{-1}\circ I_{T\mathcal{X}}:T\mathcal{X}\rightarrow\widetilde{\mathfrak{g}},
\]
being 
\[
P_{\widetilde{\mathfrak{g}}}:T\mathcal{X}\oplus\widetilde{\mathfrak{g}}\rightarrow\widetilde{\mathfrak{g}}\ \ \ \text{and}\ \ \ I_{T\mathcal{X}}:T\mathcal{X}\rightarrow T\mathcal{X}\oplus\widetilde{\mathfrak{g}}
\]
the canonical projection and inclusion, respectively.\footnote{It is clear that $\varphi=0$ when $A=A^{\bullet}$.}

\subsubsection{Alternative generalized Hamilton-d'Alembert-Poincar\'e equations}

Using \eqref{rel}, Eq. \eqref{hvj} translates to the condition 
\[
\begin{array}{l}
{\displaystyle \left\langle \frac{Dy}{Dt}\left(t\right)+\frac{\partial h}{\partial x}\left(\varsigma\left(t\right)\right)+\left\langle \bar{\mu}\left(t\right),\widetilde{B}\left(\frac{\partial h}{\partial y}\left(\varsigma\left(t\right)\right),\cdot\right)\right\rangle ,\delta x^{\bullet}\left(t\right)\right\rangle +}\\
\\
{\displaystyle +\left\langle \varphi^{\ast}\left(\frac{D\bar{\mu}}{Dt}\left(t\right)-\ad_{\frac{\partial h}{\partial\bar{\mu}}\left(\varsigma\left(t\right)\right)}^{\ast}\bar{\mu}\left(t\right)\right),\delta x^{\bullet}\left(t\right)\right\rangle +}\\
\\
{\displaystyle +\left\langle \frac{D\bar{\mu}}{Dt}\left(t\right)-\ad_{\frac{\partial h}{\partial\bar{\mu}}\left(\varsigma\left(t\right)\right)}^{\ast}\bar{\mu}\left(t\right),\bar{\eta}^{\bullet}\left(t\right)\right\rangle =0}
\end{array}
\]
for all curves $\delta x^{\bullet}:\left[t_{1},t_{2}\right]\rightarrow T\mathcal{X}$
and $\bar{\eta}^{\bullet}:\left[t_{1},t_{2}\right]\rightarrow\widetilde{\mathfrak{g}}$
fulfilling 
\[
\delta x^{\bullet}\left(t\right)\in\left.\mathfrak{C}_{V}^{\text{\textrm{hor}}}\right\vert _{x\left(t\right)}\quad\text{and\ }\quad\bar{\eta}^{\bullet}\left(t\right)\in\left.\mathfrak{C}_{V}^{\text{\textrm{ver}}}\right\vert _{x\left(t\right)}.
\]
Moreover, since $\delta x^{\bullet}$ and $\bar{\eta}^{\bullet}$
are independent, we have the following result. 
\begin{thm}
Let $\left(H,D,C_{V}\right)$ be a $G$-invariant Hamiltonian {\small{}{}GNHS},
$A^{\bullet}$ its generalized nonholonomic connection and $A$ an
arbitrary principal connection. A curve $\Gamma:\left[t_{1},t_{2}\right]\rightarrow D\subset T^{\ast}Q$
is a trajectory of $\left(H,D,C_{V}\right)$ if and only if the curve
\[
\varsigma:\left[t_{1},t_{2}\right]\rightarrow T^{\ast}\mathcal{X}\oplus\widetilde{\mathfrak{g}}^{\ast},\quad\text{with base}\quad x:\left[t_{1},t_{2}\right]\rightarrow\mathcal{X}
\]
and given by 
\[
\varsigma\left(t\right)=y\left(t\right)\oplus\bar{\mu}\left(t\right)=\hat{\alpha}_{A}\circ\hat{p}\left(\Gamma\left(t\right)\right),
\]
satisfies 
\[
\varsigma\left(t\right)\in\mathfrak{D},
\]
the equations 
\begin{equation}
\begin{array}{l}
{\displaystyle \left\langle \frac{Dy}{Dt}\left(t\right)+\frac{\partial h}{\partial x}\left(\varsigma\left(t\right)\right)+\left\langle \bar{\mu}\left(t\right),\widetilde{B}\left(\frac{\partial h}{\partial y}\left(\varsigma\left(t\right)\right),\cdot\right)\right\rangle ,\delta x^{\bullet}\left(t\right)\right\rangle +}\\
\\
{\displaystyle +\left\langle \varphi^{\ast}\left(\frac{D\bar{\mu}}{Dt}\left(t\right)-\ad_{\frac{\partial h}{\partial\bar{\mu}}\left(\varsigma\left(t\right)\right)}^{\ast}\bar{\mu}\left(t\right)\right),\delta x^{\bullet}\left(t\right)\right\rangle =0}
\end{array}\label{ahph}
\end{equation}
and 
\begin{equation}
\left\langle \frac{D\bar{\mu}}{Dt}\left(t\right)-\ad_{\frac{\partial h}{\partial\bar{\mu}}\left(\varsigma\left(t\right)\right)}^{\ast}\bar{\mu}\left(t\right),\bar{\eta}^{\bullet}\left(t\right)\right\rangle =0,\label{ahpv}
\end{equation}
for all curves 
\[
\delta x^{\bullet}:\left[t_{1},t_{2}\right]\rightarrow T\mathcal{X}\quad\text{and\ }\quad\bar{\eta}^{\bullet}:\left[t_{1},t_{2}\right]\rightarrow\widetilde{\mathfrak{g}}
\]
fulfilling 
\[
\delta x^{\bullet}\left(t\right)\in\left.\mathfrak{C}_{V}^{\text{\textrm{hor}}}\right\vert _{x\left(t\right)}\quad\text{and\ }\quad\bar{\eta}^{\bullet}\left(t\right)\in\left.\mathfrak{C}_{V}^{\text{\textrm{ver}}}\right\vert _{x\left(t\right)};
\]
and the base curve $x\left(t\right)$ satisfies 
\[
x'\left(t\right)=\frac{\partial h}{\partial y}\left(\varsigma\left(t\right)\right).
\]
\end{thm}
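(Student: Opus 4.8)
The plan is to start not from scratch but from the reduced variational principle already established with an \emph{arbitrary} fixed connection $A$. Indeed, in Section~\ref{reducedextremal} it was shown, for any principal connection $A$ and with $\varsigma=\hat{\alpha}_{A}\circ\hat{p}\circ\Gamma$, that $\Gamma$ is a trajectory of $\left(H,D,C_{V}\right)$ if and only if the coupled condition \eqref{hvj} holds together with the base equation \eqref{bep}, for all reduced variations $\delta x\left(t\right)\oplus\bar{\eta}\left(t\right)=\alpha_{A}\circ p\left(\delta\gamma\left(t\right)\right)\in\mathfrak{C}_{V}$. Since that derivation never used $A=A^{\bullet}$, I would take \eqref{hvj}--\eqref{bep} as the point of departure; the only remaining task is to rewrite the admissible variations so that they decouple, and to read off the kinematic constraint in the $A$-frame.

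For this I would invoke the change of variables \eqref{rel} from Ref.~\cite{gz}: writing one and the same $\delta\gamma$ in the $A^{\bullet}$-frame gives $\delta x=\delta x^{\bullet}$ and $\bar{\eta}=\varphi\left(\delta x^{\bullet}\right)+\bar{\eta}^{\bullet}$, where now $\delta x^{\bullet}$ and $\bar{\eta}^{\bullet}$ range \emph{independently} over $\mathfrak{C}_{V}^{\mathrm{hor}}$ and $\mathfrak{C}_{V}^{\mathrm{ver}}$. Substituting into \eqref{hvj}, the vertical pairing splits as $\langle\xi,\varphi\left(\delta x^{\bullet}\right)+\bar{\eta}^{\bullet}\rangle$ with $\xi=\tfrac{D\bar{\mu}}{Dt}-\ad^{*}_{\partial h/\partial\bar{\mu}}\bar{\mu}$; moving $\varphi$ onto the covector through its transpose, $\langle\xi,\varphi\left(\delta x^{\bullet}\right)\rangle=\langle\varphi^{*}\left(\xi\right),\delta x^{\bullet}\rangle$, collects an extra horizontal term $\varphi^{*}\left(\xi\right)$ alongside $\tfrac{Dy}{Dt}+\tfrac{\partial h}{\partial x}+\langle\bar{\mu},\widetilde{B}\left(\tfrac{\partial h}{\partial y},\cdot\right)\rangle$. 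This is precisely the reorganization carried out in the paragraph preceding the statement.

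Finally, because $\delta x^{\bullet}$ and $\bar{\eta}^{\bullet}$ are independent, the resulting single scalar condition vanishes for all admissible variations if and only if the horizontal coefficient pairs to zero against every $\delta x^{\bullet}\in\mathfrak{C}_{V}^{\mathrm{hor}}$ and the vertical coefficient against every $\bar{\eta}^{\bullet}\in\mathfrak{C}_{V}^{\mathrm{ver}}$, which are exactly \eqref{ahph} and \eqref{ahpv}. The base equation carries over verbatim from \eqref{bep}, and the kinematic constraint $\varsigma\left(t\right)\in\mathfrak{D}$ is just $\Gamma\left(t\right)\in D$ read through $\hat{\alpha}_{A}\circ\hat{p}$ as in Remark~\ref{id}. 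The only genuinely delicate point is the legitimacy of \eqref{rel}: that expressing one admissible variation in the two frames produces independently varying $\delta x^{\bullet}$ and $\bar{\eta}^{\bullet}$ whose correspondence with the coupled pair $\left(\delta x,\bar{\eta}\right)\in\mathfrak{C}_{V}$ is a bijection. This is the content borrowed from Ref.~\cite{gz}; everything else reduces to the adjoint identity for $\varphi$ and the standard separation of independent test functions.
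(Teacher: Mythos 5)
Your proposal is correct and follows essentially the same route as the paper: starting from the reduced variational principle \eqref{hvj}--\eqref{bep} obtained with the arbitrary connection $A$, substituting the change of variations \eqref{rel}, transposing $\varphi$ onto the covector, and splitting by the independence of $\delta x^{\bullet}$ and $\bar{\eta}^{\bullet}$. The one ingredient you correctly flag as delicate --- the bijective correspondence between admissible coupled variations $\delta x\oplus\bar{\eta}\in\mathfrak{C}_{V}$ and independent pairs in $\mathfrak{C}_{V}^{\textrm{hor}}\oplus\mathfrak{C}_{V}^{\textrm{ver}}$ --- is exactly what the paper delegates to Lemma 4.6 ($l=0$ case) of Ref.~\cite{gz}.
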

The Theorem can be proved by combining above calculations and the
Lemma 4.6 (for the $l=0$ case) of Ref. \cite{gz}. The reduction
in the number of equations is the same as for the previous procedure.
\begin{rem}
If a right action is considered, we just have to change de sign of
${\displaystyle {\ad_{\frac{\partial h}{\partial\bar{\mu}}\left(\varsigma\left(t\right)\right)}^{\ast}\bar{\mu}\left(t\right)}}$.
\end{rem}

\begin{rem}
Note that the variables $x$, $y$ and $\bar{\mu}$ (and as a consequence
$\varsigma$), the submanifold $\mathfrak{D}$ and the curvature $B$
are related to $A$, while the variations $\delta x^{\bullet}$ and
$\bar{\eta}^{\bullet}$, and the subbundles $\mathfrak{C}_{V}^{\text{\textrm{hor}}}$
and $\mathfrak{C}_{V}^{\text{\textrm{ver}}}$, are related to $A^{\bullet}$. 
\end{rem}
Although Eqs. \eqref{ahph} and \eqref{ahpv} seem to be more complicated
than Eqs. \eqref{lph} and \eqref{lpv}, we shall see in the last
section that, for trivial principal bundles, the calculations involved
in the latter, in order to obtain the equations of motions of the
system, are substantially simpler than those involved in the former.

\section{Reduction of Hamiltonian HOCS}

The aim of this section is to extend the results of Section \ref{redgnhs},
valid for GNHS, to the case of higher order constrained systems (HOCS)
described in the Hamiltonian framework. Firstly, we shall recall the
definition of a Hamiltonian HOCS as presented in Ref. \cite{g2}.
Then, given a Lie group $G$, we shall define the idea of $G$-invariance
for these systems and develop a reduction procedure for them. Such
a procedure can be seen as a generalization of that presented in Section
\ref{alt}. First, let us introduce some notation on higher order
tangent bundles.

\bigskip{}

\textbf{Basic notation on higher-order tangent bundles. }For $k\geq0$,
let us denote by $T^{\left(k\right)}M$ the $k$\emph{-th order tangent
bundle} of $M$ (for details see \cite{LR1,Cr86}). The latter defines
a fiber bundle $\tau_{M}^{\left(k\right)}:T^{\left(k\right)}M\rightarrow M$
such that, for each $q\in M$, the fiber $T_{q}^{\left(k\right)}M$
is a set of equivalence classes $\left[\gamma\right]^{\left(k\right)}$
of curves $\gamma:\left(-\varepsilon,\varepsilon\right)\rightarrow M$
satisfying $\gamma\left(0\right)=q$. The equivalence relation is
established as follows: $\gamma_{1}\sim\gamma_{2}$ if and only if
for every local chart $\left(U,\varphi\right)$ of $M$ containing
$q$, the equations 
\[
\left.\frac{d^{s}\left(\varphi\circ\gamma_{1}\right)}{dt^{s}}\right\vert _{t=0}=\left.\frac{d^{s}\left(\varphi\circ\gamma_{2}\right)}{dt^{s}}\right\vert _{t=0},\qquad\text{for }s=0,\dots,k
\]
are fulfilled. With this definition we have the immediate identifications
$T^{\left(0\right)}M=M$ and $T^{\left(1\right)}M=TM$. Accordingly,
$\tau_{M}^{\left(0\right)}=id_{M}$ (the identity map) and $\tau_{M}^{\left(1\right)}=\tau_{M}$
(the canonical projection of $TM$ onto $M$).

Given a curve $\gamma:\left[t_{1},t_{2}\right]\rightarrow M$, its
$k$\emph{-lift }is the curve 
\[
\gamma^{\left(k\right)}:\left(t_{1},t_{2}\right)\rightarrow T^{\left(k\right)}M:\ \ t\mapsto\left[\gamma_{t}\right]^{\left(k\right)},
\]
being $\gamma_{t}:\left(-\varepsilon_{t},\varepsilon_{t}\right)\rightarrow M$
such that $\gamma_{t}\left(s\right)=\gamma\left(s+t\right)$ and $\varepsilon_{t}=\min\left\{ t-t_{1},t_{2}-t\right\} $.
If $f:N\rightarrow M$ is a smooth function, its $k$\emph{-lift}
${\displaystyle {f^{\left(k\right)}:T^{\left(k\right)}N\rightarrow T^{\left(k\right)}M}}$
is given by ${\displaystyle {f^{\left(k\right)}\left(\left[\gamma\right]^{\left(k\right)}\right)=\left[f\circ\gamma\right]^{\left(k\right)}}}$

With above identifications, the $1$-lift of a curve $\gamma$ is
precisely its velocity ${\gamma}^{\prime}:\left(t_{1},t_{2}\right)\rightarrow TM$
and $f^{\left(1\right)}=f_{\ast}$.

\subsection{Hamiltonian HOCS with symmetry}

\label{HHOCSsym} Following \cite{g,g2}, we recall the definition
of a Hamiltonian higher order constrained system. Consider a smooth
function $H:T^{\ast}Q\rightarrow\mathbb{R}$ and subsets 
\[
\mathcal{P}\subset T^{\left(k-1\right)}T^{\ast}Q\quad\text{and}\quad\mathcal{V}\subset T^{\left(l-1\right)}T^{\ast}Q\times_{T^{\ast}Q}TT^{\ast}Q
\]
with $k,l\geq1$ such that $\mathcal{P}$ is a submanifold and, for
all $\sigma\in T^{\ast}Q$ and for all $\xi\in T_{\sigma}^{\left(l-1\right)}T^{\ast}Q$,
the set 
\[
\mathcal{V}\left(\zeta\right):=\left(\left\{ \zeta\right\} \times T_{\sigma}T^{\ast}Q\right)\cap\mathcal{V},
\]
identified naturally with a subset of $T_{\sigma}T^{\ast}Q$, is either
empty or a linear subspace.\footnote{From now on, we will use such an identification and treat $\mathcal{V}\left(\zeta\right)$
as a linear subspace of $T_{\sigma}T^{\ast}Q$ without further comment.}

\begin{definition}\label{def:hocs} A Hamiltonian HOCS or simply
a HOCS is a triple $\left(H,\mathcal{P},\mathcal{V}\right)$ as given
above. We call $H$ the \textbf{Hamiltonian function}, $\mathcal{P}$
the \textbf{kinematic constraints submanifold of order} $k$ and $\mathcal{V}$
the \textbf{variational constraints subspace of order} $l$. A trajectory
of $\left(H,\mathcal{P},\mathcal{V}\right)$ is a curve $\Gamma:\left[t_{1},t_{2}\right]\rightarrow T^{\ast}Q$
such that:
\begin{enumerate}
\item $\Gamma^{\left(k-1\right)}\left(t\right)\in\mathcal{P},\quad\forall\,t\in\left(t_{1},t_{2}\right)$, 
\item the set of variations $\delta\Gamma$ of $\Gamma$ such that 
\[
\left(\Gamma^{\left(l-1\right)}\left(t\right),\delta\Gamma\left(t\right)\right)\in\mathcal{V},\quad\forall\,t\in\left(t_{1},t_{2}\right),
\]
or equivalently 
\[
\delta\Gamma\left(t\right)\in\mathcal{V}\left(\Gamma^{\left(l-1\right)}\left(t\right)\right),\quad\forall\,t\in\left(t_{1},t_{2}\right),
\]
is not empty; 
\item for all such variations the equation
\[
\int_{t_{1}}^{t_{2}}\left\langle \omega\left(\Gamma^{\prime}\left(t\right),\delta\Gamma\left(t\right)\right)-dH\left(\Gamma\left(t\right)\right),\delta\Gamma\left(t\right)\right\rangle \ dt=0
\]
must hold.
\end{enumerate}
\end{definition} 
\begin{rem}
\label{gh}Note that a GNHS $\left(H,D,\mathcal{V}\right)$ can be
seen as a HOCSs with kinematic constraints of order $k$ just defining
\[
\mathcal{P}\coloneqq\left(\tau_{T^{*}Q}^{\left(k-1\right)}\right)^{-1}\left(D\right).
\]
It is easy to see that the trajectories of $\left(H,D,\mathcal{V}\right)$
are the same as those of $\left(H,\mathcal{P},\mathcal{V}\right)$
passing through $D$. 
\end{rem}
\begin{definition} Given the space of variational constraints ${\displaystyle {\mathcal{V}\subset T^{\left(l-1\right)}T^{\ast}Q\times_{T^{\ast}Q}TT^{\ast}Q}}$,
the subset 
\[
\mathcal{W}\subset T^{\left(l-1\right)}T^{\ast}Q\times_{T^{\ast}Q}TT^{\ast}Q,
\]
defined for each $\sigma\in T^{\ast}Q$ and $\zeta\in T_{\sigma}^{\left(l-1\right)}T^{\ast}Q$
as 
\[
\mathcal{W}\left(\zeta\right):=\left(\left\{ \zeta\right\} \times T_{\sigma}T^{\ast}Q\right)\cap\mathcal{W}=\begin{cases}
\mathcal{V}^{\perp}\left(\zeta\right) & \text{if}\quad\mathcal{V}\left(\zeta\right)\neq\emptyset,\\
\emptyset & \text{otherwise},
\end{cases}
\]
is called the space of \textbf{constraint forces}. \end{definition}
Note that a Hamiltonian HOCS can also be described as the triple $\left(H,\mathcal{P},\mathcal{W}\right)$.
We are interested in HOCS such that, for all $\sigma\in T^{\ast}Q$
and $\zeta\in T_{\sigma}^{\left(l-1\right)}T^{\ast}Q$ for which $\mathcal{V}\left(\zeta\right)\neq\emptyset$,
\[
\mathcal{W}\left(\zeta\right)=\mathcal{V}^{\perp}\left(\zeta\right)\subset\ker\left(\pi_{\ast,\sigma}\right),
\]
i.e. $\mathcal{W}\left(\zeta\right)$ is a vertical subspace of $T_{\sigma}T^{\ast}Q$.
This condition is analogous to condition \textbf{A1} imposed on GNHSs
in Section \eqref{basic}: the constraint forces are given by vertical
vectors.

\bigskip{}

Now, fix an affine connection $\nabla$ on $Q$ and consider its related
isomorphism 
\[
\beta:TT^{\ast}Q\rightarrow T^{\ast}Q\oplus TQ\oplus T^{\ast}Q,
\]
given as follows. For $V\in TT^{\ast}Q$, consider a curve $u:\left(-\varepsilon,\varepsilon\right)\rightarrow T^{\ast}Q$
passing through $\tau_{T^{\ast}Q}\left(V\right)$ and with velocity
$V$ at $s=0$, i.e. $u_{\ast}\left(\left.d/ds\right\vert _{0}\right)=V$.
Then define 
\[
\beta\left(V\right):=\tau_{T^{\ast}Q}\left(V\right)\oplus\pi_{Q\ast}\left(V\right)\oplus\frac{Du}{Ds}\left(0\right),
\]
where $D/Ds$ is the covariant derivative related to $\nabla$. It
is clear that the verticality condition on $\mathcal{W}$ says that,
for all $\sigma\in T^{\ast}Q$ and all $\zeta\in T_{\sigma}^{\left(l-1\right)}T^{\ast}Q$
such that $\mathcal{V}\left(\zeta\right)\neq\emptyset$, 
\[
\beta\left(\mathcal{W}\left(\zeta\right)\right)=\sigma\oplus0\oplus F_{V}\left(\zeta\right)
\]
and (see Corollary 20, Eq. (41) on reference \cite{g2}) 
\[
\beta\left(\mathcal{V}\left(\zeta\right)\right)=\sigma\oplus C_{V}\left(\zeta\right)\oplus T_{\pi\left(\sigma\right)}^{\ast}Q,
\]
where\footnote{As is usual, $(\cdot)^{\circ}$ denote the annihilator of a vector
space.} 
\begin{equation}
F_{V}\left(\zeta\right)=\left(C_{V}\left(\zeta\right)\right)^{\circ}\label{fcv}
\end{equation}
and $C_{V}\left(\zeta\right)\subset T_{\pi\left(\sigma\right)}Q$
is a linear subspace. For later convenience, define 
\begin{equation}
C_{V}\coloneqq\bigcup_{\zeta\in T^{\left(l-1\right)}T^{\ast}Q}\left\{ \zeta\right\} \times C_{V}\left(\zeta\right)\subseteq T^{\left(l-1\right)}T^{\ast}Q\times TQ\label{eq:CV}
\end{equation}
and 
\begin{equation}
F_{V}\coloneqq\bigcup_{\zeta\in T^{\left(l-1\right)}T^{\ast}Q}\left\{ \zeta\right\} \times F_{V}\left(\zeta\right)\subseteq T^{\left(l-1\right)}T^{\ast}Q\times T^{*}Q.\label{eq:FV}
\end{equation}

\bigskip{}

The next proposition is a generalization of Eq. \eqref{eqa} for HOCS. 
\begin{prop}
Given a curve $\Gamma:\left[t_{1},t_{2}\right]\rightarrow T^{\ast}Q$,
define the curve $\gamma=$ $\pi_{Q}\circ\Gamma$. Then, a variation
$\delta\Gamma$ of $\Gamma$ satisfies $\delta\Gamma\left(t\right)\in\mathcal{V}\left(\Gamma^{\left(l-1\right)}\left(t\right)\right)$
if and only if the variation $\delta\gamma\left(t\right):=\pi_{Q\ast}\left(\delta\Gamma\left(t\right)\right)$
satisfies 
\[
\delta\gamma\left(t\right)\in C_{V}\left(\Gamma^{\left(l-1\right)}\left(t\right)\right).
\]
In other words, $\delta\Gamma$ is a variation of $\Gamma$ with values
in $\mathcal{V}$, if and only if $\delta\gamma$ is a variation of
$\gamma$ with values in $C_{V}$. 
\end{prop}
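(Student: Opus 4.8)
The plan is to reduce the statement to the structural description of the variational constraints provided by the isomorphism $\beta$ attached to the fixed affine connection $\nabla$. The essential input is the formula quoted from Corollary 20 of \cite{g2}: for every $\zeta\in T^{\left(l-1\right)}_{\sigma}T^{\ast}Q$ with $\mathcal{V}\left(\zeta\right)\neq\emptyset$,
\[
\beta\left(\mathcal{V}\left(\zeta\right)\right)=\sigma\oplus C_{V}\left(\zeta\right)\oplus T_{\pi_{Q}\left(\sigma\right)}^{\ast}Q.
\]
Because $\beta$ is a fibre bundle isomorphism, the condition $\delta\Gamma\left(t\right)\in\mathcal{V}\left(\zeta\right)$ is equivalent to $\beta\left(\delta\Gamma\left(t\right)\right)$ lying in the right-hand side, and the whole proof amounts to reading off the three components of $\beta\left(\delta\Gamma\left(t\right)\right)$.

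First I would fix $t\in\left(t_{1},t_{2}\right)$ and put $\sigma:=\Gamma\left(t\right)$ and $\zeta:=\Gamma^{\left(l-1\right)}\left(t\right)$, so that $\pi_{Q}\left(\sigma\right)=\gamma\left(t\right)$. Since $\delta\Gamma$ is an infinitesimal variation, $\delta\Gamma\left(t\right)\in T_{\sigma}T^{\ast}Q$, and therefore the first component of $\beta\left(\delta\Gamma\left(t\right)\right)$ is automatically $\tau_{T^{\ast}Q}\left(\delta\Gamma\left(t\right)\right)=\sigma$. By the very definition of $\beta$, its second component is $\pi_{Q\ast}\left(\delta\Gamma\left(t\right)\right)=\delta\gamma\left(t\right)$, while its third component $\tfrac{Du}{Ds}\left(0\right)$ is a covariant derivative that by construction always lands in the cofibre $T_{\gamma\left(t\right)}^{\ast}Q$. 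Hence $\beta\left(\delta\Gamma\left(t\right)\right)=\sigma\oplus\delta\gamma\left(t\right)\oplus w$ for some $w\in T_{\gamma\left(t\right)}^{\ast}Q$.

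With this normal form both implications would follow immediately. If $\delta\Gamma\left(t\right)\in\mathcal{V}\left(\zeta\right)$, then $\beta\left(\delta\Gamma\left(t\right)\right)\in\sigma\oplus C_{V}\left(\zeta\right)\oplus T_{\gamma\left(t\right)}^{\ast}Q$, and comparing the middle slots forces $\delta\gamma\left(t\right)\in C_{V}\left(\zeta\right)$. Conversely, if $\delta\gamma\left(t\right)\in C_{V}\left(\zeta\right)$, then the first component of $\beta\left(\delta\Gamma\left(t\right)\right)$ is already $\sigma$ and the third is some $w$ in the full cofibre $T_{\gamma\left(t\right)}^{\ast}Q$, so $\beta\left(\delta\Gamma\left(t\right)\right)$ belongs to $\sigma\oplus C_{V}\left(\zeta\right)\oplus T_{\gamma\left(t\right)}^{\ast}Q=\beta\left(\mathcal{V}\left(\zeta\right)\right)$; applying $\beta^{-1}$ gives $\delta\Gamma\left(t\right)\in\mathcal{V}\left(\zeta\right)$. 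Since $t$ was arbitrary, this is precisely the claimed equivalence.

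The single point that requires attention, though it is not really an obstacle, is the behaviour of the third slot: one must observe that $\tfrac{Du}{Ds}\left(0\right)$ ranges freely over the entire cofibre $T_{\gamma\left(t\right)}^{\ast}Q$, independently of the first two components. This independence is exactly what the factor $T_{\pi_{Q}\left(\sigma\right)}^{\ast}Q$ in the Corollary 20 formula records, and it is what lets the vertical (constraint-force) direction be absorbed so that membership in $\mathcal{V}$ collapses to a condition on $\delta\gamma$ alone. Finally I would note that the degenerate case $\mathcal{V}\left(\zeta\right)=\emptyset$ is vacuous, since then $C_{V}\left(\zeta\right)$ is empty as well and both sides of the equivalence fail simultaneously.
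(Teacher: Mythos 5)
Your argument is correct and follows exactly the route the paper intends: the proposition is read off from the quoted structural formula $\beta\left(\mathcal{V}\left(\zeta\right)\right)=\sigma\oplus C_{V}\left(\zeta\right)\oplus T_{\pi\left(\sigma\right)}^{\ast}Q$ of Corollary 20 in \cite{g2}, using that $\beta$ is an isomorphism whose middle slot is $\pi_{Q\ast}$ and whose first and third slots impose no constraint (the paper itself simply defers to \cite{g2} for this deduction). Your closing remarks on the third slot ranging over the full cofibre and on the vacuous case $\mathcal{V}\left(\zeta\right)=\emptyset$ are exactly the right points to check.
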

As an immediate consequence, we have the following result. 
\begin{thm}
\label{cangen} A curve $\Gamma:\left[t_{1},t_{2}\right]\rightarrow T^{\ast}Q$
is a trajectory of $\left(H,\mathcal{P},\mathcal{V}\right)$ if and
only if $\Gamma^{\left(k-1\right)}\left(t\right)\in\mathcal{P}$,
the set of variations $\delta\gamma\ $ of $\gamma=\pi_{Q}\circ\Gamma$
such that 
\[
\delta\gamma\left(t\right)\in C_{V}\left(\Gamma^{\left(l-1\right)}\left(t\right)\right)\ \ \forall t\in[t_{1},t_{2}]
\]
is not empty and for these variations 
\[
\ \gamma^{\prime}\left(t\right)=\mathbb{F}H\left(\Gamma\left(t\right)\right)\quad\text{and}\quad\left\langle \frac{D}{Dt}\Gamma\left(t\right)+\mathbb{B}H\left(\Gamma\left(t\right)\right),\delta\gamma\left(t\right)\right\rangle =0,
\]
being $\mathbb{F}H:T^{*}Q\rightarrow TQ$ and $\mathbb{B}H:T^{*}Q\rightarrow T^{*}Q$
the fiber and base derivatives of $H$, respectively. 
\end{thm}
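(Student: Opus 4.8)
The plan is to reduce Theorem~\ref{cangen} to the Proposition just proved, together with an explicit coordinate-free rewriting of the integrand in \eqref{eve} using the isomorphism $\beta$ and the affine connection $\nabla$. Since the Proposition already establishes the equivalence $\delta\Gamma\left(t\right)\in\mathcal{V}\left(\Gamma^{\left(l-1\right)}\left(t\right)\right)\Leftrightarrow\delta\gamma\left(t\right)\in C_{V}\left(\Gamma^{\left(l-1\right)}\left(t\right)\right)$, the content of Theorem~\ref{cangen} is to transform the variational condition
\[
\int_{t_{1}}^{t_{2}}\left(\omega\left(\Gamma^{\prime}\left(t\right),\delta\Gamma\left(t\right)\right)-\left\langle dH\left(\Gamma\left(t\right)\right),\delta\Gamma\left(t\right)\right\rangle\right)\ dt=0
\]
into the pair of pointwise conditions stated. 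First I would record that the nonemptiness clause and the kinematic constraint $\Gamma^{\left(k-1\right)}\left(t\right)\in\mathcal{P}$ transfer verbatim from Definition~\ref{def:hocs}, so no work is needed there beyond invoking the Proposition for the variation clause.

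The main computation is to evaluate the integrand. The strategy is to split the cotangent vector $\delta\Gamma\left(t\right)\in T_{\Gamma\left(t\right)}T^{*}Q$ via $\beta$ into its base component $\pi_{Q\ast}\left(\delta\Gamma\left(t\right)\right)=\delta\gamma\left(t\right)$ and its vertical (covariant-derivative) component. Because $\mathcal{W}\left(\zeta\right)=\mathcal{V}^{\perp}\left(\zeta\right)$ is vertical and $\beta\left(\mathcal{V}\left(\zeta\right)\right)=\sigma\oplus C_{V}\left(\zeta\right)\oplus T^{*}_{\pi\left(\sigma\right)}Q$, the vertical part of an admissible $\delta\Gamma$ ranges over the full fiber $T^{*}_{\pi\left(\sigma\right)}Q$ freely while the horizontal part is constrained to $C_{V}\left(\zeta\right)$. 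I would substitute the decomposition of $\omega=-d\theta$ and of $dH$ in terms of the fiber derivative $\mathbb{F}H$ and the base derivative $\mathbb{B}H$; the symplectic pairing $\omega\left(\Gamma^{\prime},\delta\Gamma\right)$ then separates into a term pairing $\gamma^{\prime}\left(t\right)$ against the vertical part of $\delta\Gamma$ and a term pairing the covariant derivative $\tfrac{D}{Dt}\Gamma\left(t\right)$ against $\delta\gamma\left(t\right)$. This is exactly the computation that underlies the identity $\omega=-d\theta$ expressed through $\nabla$, and it is the point where the affine connection enters: the covariant derivative $D\Gamma/Dt$ is the $\nabla$-dependent object that replaces the naive time derivative of the fiber coordinate.

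The decoupling into two equations is then a standard fundamental-lemma-of-the-calculus-of-variations argument, but applied in two stages because the admissible variations have two independent pieces. The vertical component of $\delta\Gamma$ is arbitrary (it sweeps out all of $T^{*}_{\pi\left(\sigma\right)}Q$), and its coefficient in the integrand is $\gamma^{\prime}\left(t\right)-\mathbb{F}H\left(\Gamma\left(t\right)\right)$; arbitrariness forces this to vanish, giving the first equation $\gamma^{\prime}\left(t\right)=\mathbb{F}H\left(\Gamma\left(t\right)\right)$. With that identity in hand, the surviving integrand pairs the remaining vector $\tfrac{D}{Dt}\Gamma\left(t\right)+\mathbb{B}H\left(\Gamma\left(t\right)\right)$ against $\delta\gamma\left(t\right)$, which is constrained to lie in $C_{V}\left(\Gamma^{\left(l-1\right)}\left(t\right)\right)$; since $\delta\gamma$ is otherwise free within that subspace and has fixed endpoints, the localized vanishing yields the second (constrained) equation. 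I expect the hard part to be the careful bookkeeping of the symplectic form under the $\beta$-decomposition, namely verifying that the cross terms assemble precisely into $\tfrac{D}{Dt}\Gamma+\mathbb{B}H$ rather than into a connection-dependent expression; this is essentially the content of Corollary~20 of Ref.~\cite{g2}, which I would cite to justify the form of $\beta\left(\mathcal{V}\left(\zeta\right)\right)$ and to guarantee that the apparent $\nabla$-dependence cancels so that the final equations are intrinsic. The remaining steps---integration by parts to move the time derivative off $\delta\gamma$ and application of the fixed-endpoint condition---are routine.
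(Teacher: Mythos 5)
Your proposal is correct and follows the route the paper intends: the paper gives no proof of Theorem \ref{cangen}, deferring entirely to Ref. \cite{g2} (Corollary 20), and your argument---splitting $\delta\Gamma$ via $\beta$ into the base component $\delta\gamma$ and an unconstrained fiber component, rewriting $\omega\left(\Gamma^{\prime},\delta\Gamma\right)-\left\langle dH,\delta\Gamma\right\rangle$ in terms of $\mathbb{F}H$, $\mathbb{B}H$ and $D\Gamma/Dt$, and applying the fundamental lemma separately to the two independent pieces of the variation---is precisely that computation. One minor correction: no integration by parts is needed at the final step, since the phase-space integrand in \eqref{eve} contains no derivatives of $\delta\Gamma$, so the pointwise conditions follow directly from the arbitrariness of the fiber component and of $\delta\gamma$ within $C_{V}\left(\Gamma^{\left(l-1\right)}\left(t\right)\right)$.
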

For a proof of these two results, you may consult \cite{g2}. 
\begin{rem}
\label{vcv}Observe that, as a consequence of the last theorem, every
Hamiltonian HOCS $\left(H,\mathcal{P},\mathcal{V}\right)$ may be
described alternatively with the triple $\left(H,\mathcal{P},C_{V}\right)$
{[}see Eq. \eqref{eq:CV}{]}, and this is what we shall do from now on. 
\end{rem}
\bigskip{}

An action $\rho$ of $G$ on $Q$ gives rise to an action $\rho^{(k)}$
of $G$ on $T^{(k)}T^{*}Q$ in a canonical way. We just must consider
the $k$-lift $\hat{\rho}_{g}^{\left(k\right)}:T^{(k)}T^{*}Q\rightarrow T^{(k)}T^{*}Q$
of $\hat{\rho}_{g}$ {[}recall Eq. \eqref{rhotilde}{]} for each $g\in G$.

\begin{definition} We say that a Hamiltonian HOCS $\left(H,\mathcal{P},C_{V}\right)$
is $G$\textbf{-invariant } if for all $g\in G$ 
\begin{description}
\item [{$\mathbf{a}.$}] $H\circ\hat{\rho}_{g}=H$, 
\item [{$\mathbf{b.}$}] $\hat{\rho}_{g}^{\left(k-1\right)}\left(\mathcal{P}\right)=\mathcal{P}$, 
\item [{$\mathbf{c.}$}] for each $\sigma\in T^{*}Q$ and $\zeta\in T_{\sigma}^{\left({l-1}\right)}T^{*}Q$,
\[
\rho_{g*}\left(C_{V}(\zeta)\right)=C_{V}\left(\hat{\rho}_{g}^{(l-1)}\left(\zeta\right)\right).
\]
\end{description}
\end{definition}

Let us assume that the canonical projection $\hat{p}_{k}:T^{\left(k-1\right)}T^{*}Q\rightarrow\left.T^{\left(k-1\right)}T^{*}Q\right/G$
gives rise to a principal fiber bundle. This enable us to define the
submanifold $\mathfrak{P}\subset\left.T^{\left(k-1\right)}T^{*}Q\right/G$
given by 
\[
\mathfrak{P}:=\hat{p}_{k}\left(\mathcal{P}\right)=\mathcal{P}/G,
\]
the \textbf{reduced kinematic constraints}, and the submanifold 
\[
\mathfrak{C}_{V}\subset\left.T^{\left(l-1\right)}T^{*}Q\right/G\times_{\left.T^{*}Q\right/G}\left.TT^{*}Q\right/G,
\]
defined through the subspaces 
\[
\mathfrak{C}_{V}\left(\hat{p}_{l}\left(\zeta\right)\right):=p\left(C_{V}\left(\zeta\right)\right)=C_{V}\left(\zeta\right)/G,\quad\forall\,\zeta\in T^{\left(l-1\right)}T^{*}Q,
\]
which we shall call the \textbf{reduced variational constraints}.

\subsection{A reduction procedure}

\label{RPGNCC} Let $\left(H,\mathcal{P},C_{V}\right)$ be a $G$-invariant
Hamiltonian HOCS. As in the case of a Hamiltonian GNHSs, we will write
the equations of motion of $\left(H,\mathcal{P},C_{V}\right)$ in
terms of the reduced data $h,\mathfrak{P}$ and $\mathfrak{C}_{V}$.
Following the same reasoning as in Section \ref{alt}, we have the
next result. 
\begin{prop}
Let $\Gamma:\left[t_{1},t_{2}\right]\rightarrow T^{*}Q$ be a curve
and define 
\[
\gamma(t)=\pi_{Q}(\Gamma(t))\qquad\text{and}\qquad x(t)=\pi\left(\gamma(t)\right).
\]
If $A$ is an arbitrary principal connection, $\Gamma$ is a trajectory
of $\left(H,\mathcal{P},C_{V}\right)$ if and only if 
\[
\hat{p}_{k}\left({\Gamma}^{\left(k-1\right)}\left(t\right)\right)\in\mathfrak{P},\quad\forall\,t\in[t_{1},t_{2}],
\]
and the curve $\varsigma:\left[t_{1},t_{2}\right]\rightarrow T^{*}\mathcal{X}\oplus\widetilde{\mathfrak{g}}^{*}$
given by 
\[
\varsigma\left(t\right)=\hat{\alpha}_{A}\circ\hat{p}\left({\Gamma}\left(t\right)\right)=y\left(t\right)\oplus\bar{\mu}\left(t\right)
\]
satisfies 
\[
x'\left(t\right)=\frac{\partial h}{\partial y}\left(\varsigma\left(t\right)\right)
\]
and 
\[
\left\langle \frac{Dy}{Dt}\left(t\right)+\frac{\partial h}{\partial x}\left(\varsigma\left(t\right)\right)+\left\langle \bar{\mu}\left(t\right),\widetilde{B}\left(\frac{\partial h}{\partial y}\left(\varsigma\left(t\right)\right),\cdot\right)\right\rangle ,\delta x\left(t\right)\right\rangle +\left\langle \frac{D\bar{\mu}}{Dt}\left(t\right)-\ad_{\frac{\partial h}{\partial\bar{\mu}}\left(\varsigma\left(t\right)\right)}^{\ast}\bar{\mu}\left(t\right),\bar{\eta}\left(t\right)\right\rangle =0
\]
for all variations $\delta x(t)$ and $\bar{\eta}(t)$ such that $\delta x\left(t\right)\oplus\bar{\eta}\left(t\right)\in\mathfrak{C}_{V}\left(\hat{p}_{l}\left({\Gamma}^{(l-1)}\left(t\right)\right)\right)$. 
\end{prop}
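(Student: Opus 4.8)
The plan is to reduce the \emph{pointwise} characterization supplied by Theorem~\ref{cangen} fiber by fiber in $t$, transporting every object to $T^{*}\mathcal{X}\oplus\widetilde{\mathfrak{g}}^{*}$ by means of the isomorphisms $\alpha_{A}$ and $\hat{\alpha}_{A}$ attached to the fixed (arbitrary) principal connection $A$. The decisive simplification is that Theorem~\ref{cangen} has already carried out the integration by parts and produced the fiber-derivative relation, so the residual equations it hands us hold pointwise in $t$; no further variational manipulation is needed, and the whole argument collapses to translating a linear-algebraic identity living over each point $\gamma(t)$.

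First I would dispose of the kinematic constraint. Condition \textbf{b.} of $G$-invariance gives $\hat{\rho}_{g}^{(k-1)}(\mathcal{P})=\mathcal{P}$, so $\mathcal{P}$ is $G$-saturated and $\hat{p}_{k}^{-1}(\mathfrak{P})=\mathcal{P}$ with $\mathfrak{P}=\mathcal{P}/G$; hence $\Gamma^{(k-1)}(t)\in\mathcal{P}$ if and only if $\hat{p}_{k}(\Gamma^{(k-1)}(t))\in\mathfrak{P}$. Next I would treat the variational constraint. Condition \textbf{c.}, namely $\rho_{g*}(C_{V}(\zeta))=C_{V}(\hat{\rho}_{g}^{(l-1)}(\zeta))$, is exactly what makes $\mathfrak{C}_{V}(\hat{p}_{l}(\zeta)):=p(C_{V}(\zeta))$ independent of the representative $\zeta$, so that $\mathfrak{C}_{V}$ is well defined over the reduced jet $\hat{p}_{l}(\zeta)$. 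Since $\alpha_{A}\circ p$ restricts to a linear isomorphism $T_{q}Q\to T_{\pi(q)}\mathcal{X}\oplus\widetilde{\mathfrak{g}}_{\pi(q)}$ (Section~\ref{gnhc}), the correspondence $\delta\gamma\leftrightarrow\delta x\oplus\bar{\eta}$ is a fiberwise bijection, whence $\delta\gamma(t)\in C_{V}(\Gamma^{(l-1)}(t))$ if and only if $\delta x(t)\oplus\bar{\eta}(t)\in\mathfrak{C}_{V}(\hat{p}_{l}(\Gamma^{(l-1)}(t)))$.

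It then remains to translate the two dynamical equations. Applying $\alpha_{A}\circ p$ to $\gamma'(t)=\mathbb{F}H(\Gamma(t))$, and using $h\circ\hat{p}=H$ together with the duality of $\alpha_{A}$ and $\hat{\alpha}_{A}$, the two components split off as $x'(t)=\partial h/\partial y(\varsigma(t))$ (the base equation) and $\bar{v}(t)=\partial h/\partial\bar{\mu}(\varsigma(t))$, where $\bar{v}(t)=a(\gamma'(t))$. For the force equation $\langle \frac{D}{Dt}\Gamma(t)+\mathbb{B}H(\Gamma(t)),\delta\gamma(t)\rangle=0$ I would pair against $\delta\gamma(t)$ after transporting $\frac{D}{Dt}\Gamma(t)$ through $\hat{\alpha}_{A}$ with respect to $\nabla^{\mathcal{X}}\oplus\nabla^{A}$; as in \cite{cmpr,cmr0,cfg}, this transport is not the naive one but produces the covariant derivatives $\frac{Dy}{Dt}$, $\frac{D\bar\mu}{Dt}$ together with a horizontal curvature correction $\langle\bar\mu,\widetilde{B}(\dot x,\cdot)\rangle$ (from the horizontal part of $\gamma'$) and a vertical correction $\ad^{*}_{\bar v}\bar\mu$ (from its vertical part $\bar v$), while $\mathbb{B}H$ contributes $\partial h/\partial x$. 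Substituting $\dot x=\partial h/\partial y(\varsigma)$ and $\bar v=\partial h/\partial\bar\mu(\varsigma)$ from the previous step turns this into exactly the stated combined equation, now tested against $\delta x\oplus\bar\eta\in\mathfrak{C}_{V}(\hat p_{l}(\Gamma^{(l-1)}(t)))$. Because every map used ($\alpha_{A}$, $\hat{\alpha}_{A}$, $\alpha_{A}\circ p$, $\hat p_{k}$) is an isomorphism onto its image, each step is reversible, giving the equivalence in both directions.

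The main obstacle I anticipate is the covariant-derivative transport in the last step: tracking exactly which curvature and $\ad^{*}$ terms are generated when $\frac{D}{Dt}\Gamma$ is pushed through the $A$-dependent isomorphism $\hat{\alpha}_{A}$ along the moving base point $\gamma(t)$. This is the same computation performed for the unconstrained case in \cite{cmpr} and for GNHS in Section~\ref{redgnhs}, and the point worth stressing is that it is purely pointwise and hence completely insensitive to the higher-order nature of the constraints; the order $l$ enters only through the label $\Gamma^{(l-1)}(t)$ of the fiber $C_{V}(\Gamma^{(l-1)}(t))$. It is precisely this $\zeta$-dependence that forbids the horizontal/vertical decoupling available in Theorem~\ref{tm}, which is why the statement is phrased with an arbitrary connection $A$ and a single undecoupled equation rather than the split HdP form.
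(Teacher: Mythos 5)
Your proposal is correct, but it follows a different route from the paper's. The paper obtains this Proposition by repeating the reduced-variational-principle computation of Section \ref{reducedextremal} (``following the same reasoning as in Section \ref{alt}''): write $S(\Gamma)$ in the reduced variables $(x,y,\bar{\mu})$, differentiate under the integral, integrate by parts, and read off Eq. \eqref{rv} together with the base equation; the only new ingredient for HOCS is that the admissible reduced variations now live in the $\zeta$-dependent spaces $\mathfrak{C}_{V}\left(\hat{p}_{l}\left(\Gamma^{(l-1)}(t)\right)\right)$. You instead take as input the already integrated-by-parts, pointwise characterization of Theorem \ref{cangen} and push it fiberwise through $\alpha_{A}\circ p$ and $\hat{\alpha}_{A}\circ\hat{p}$. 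Both routes carry essentially the same computational load: the curvature term $\left\langle \bar{\mu},\widetilde{B}\left(\dot{x},\cdot\right)\right\rangle$ and the coadjoint term $\ad_{\bar{v}}^{\ast}\bar{\mu}$ that you must generate when transporting $\tfrac{D}{Dt}\Gamma$ through the $A$-dependent trivialization are exactly the terms that, in the paper's route, arise from the formula for $\delta\bar{v}$ in the variational computation, and both you and the paper ultimately defer that identity to \cite{cmpr,cmr0}. What your route buys is conceptual clarity: it makes manifest that the higher-order nature of the constraints is irrelevant to the dynamical identity and enters only as a relabelling of the fibers of $C_{V}$, which is precisely why no horizontal/vertical decoupling is available here. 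What it costs is that you must verify separately the naturality $\mathbb{F}h\circ\hat{\alpha}_{A}\circ\hat{p}=\alpha_{A}\circ p\circ\mathbb{F}H$ (and its analogue for base derivatives), which the variational route produces automatically. Your treatment of the constraint sets is sound, including the two points that are easy to overlook: the $G$-saturation of $\mathcal{P}$ giving $\hat{p}_{k}^{-1}(\mathfrak{P})=\mathcal{P}$, and the freeness of the action, which is what lets you pass from $[v]\in p\left(C_{V}(\zeta)\right)$ back to $v\in C_{V}(\zeta)$ when $v$ and $\zeta$ sit over the same base point.
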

We want to decompose the last equation into horizontal and vertical
parts as we have done for Hamiltonian GNHS. In order to do that, we
need to decompose each subspace $\mathfrak{C}_{V}\left(\hat{p}_{l}\left({\Gamma}^{(l-1)}\left(t\right)\right)\right)$.
Since these subspaces depend not only on $x\in\mathcal{X}$ but on
the points of $\left(\left.T^{\left(l-1\right)}T^{*}Q\right/G\right)_{x}$,
a standard connection is not useful in this case. We need a more general
object.

\subsubsection{The cotangent $l$-connections}

In \cite{gz}, in order to establish a reduction procedure for Lagrangian
HOCSs, the notion of an $l$-connection was presented. Analogously,
to develop a reduction for Hamiltonian HOCSs, we shall define a naturally
dual object. 
\begin{defn}
Given $l\in\mathbb{N}$, a \textbf{cotangent }$l$-\textbf{connection}
on the principal fiber bundle $\pi$ is a map 
\[
A:T^{\left(l-1\right)}T^{*}Q\times_{Q}TQ\rightarrow\mathfrak{g},
\]
such that, $\forall q\in Q$, $\forall\sigma\in T_{q}^{*}Q$ and $\forall\zeta\in T_{\sigma}^{(l-1)}T^{*}Q$,
its restriction to $\left\{ \zeta\right\} \times T_{q}Q$ is a linear
transformation and, $\forall v\in T_{q}Q$, $\forall g\in G$ and
$\forall\eta\in\mathfrak{g}$ we have that {[}compare to Eq. \eqref{cpc}{]}
\[
A\left(\zeta,\eta_{Q}\left(q\right)\right)=\eta\quad\text{and}\quad A\left(\hat{\rho}_{g}^{\left(l-1\right)}\left(\zeta\right),\rho_{g\ast}\left(v\right)\right)=\Ad_{g}A\left(\zeta,v\right).
\]
\end{defn}
\begin{rem}
Let us note that, when $l=1$, and identifying $Q\times_{Q}TQ$ with
$TQ$, we have a genuine principal connection. 
\end{rem}
From now on, and unless we state otherwise, $\sigma$ is an element
of $T_{q}^{*}Q$ for some $q\in Q$.

\begin{proposition} \label{lc}A cotangent $l$-connection is equivalent
to an assignment of a linear subspace $\mathbb{H}(\zeta)\subset T_{q}Q$
for each $\zeta\in{T_{\sigma}^{(l-1)}}T^{*}Q$ such that: 
\begin{itemize}
\item $T_{q}Q=\mathbb{H}\left(\zeta\right)\oplus\mathbb{V}\left(\zeta\right)$,
where $\mathbb{V}\left(\zeta\right)=\mathbb{V}{_{q}}=\ker\,\pi_{*,q}$, 
\item $\mathbb{H}\left(\hat{\rho}_{g}^{(l-1)}\left(\zeta\right)\right)=\rho_{g\ast}\left(\mathbb{H}\left(\zeta\right)\right)$,
$\forall\,g\in G$, and 
\item the subspaces $\mathbb{H}\left(\zeta\right)$, which we shall call
\textbf{horizontal spaces}, depend differentially on $q$ and $\zeta$. 
\end{itemize}
Given a cotangent $l$-connection $A$, the associated horizontal
spaces $\mathbb{H}\left(\zeta\right)$ are defined by 
\[
\mathbb{H}\left(\zeta\right)=\left\{ v\in T_{q}Q:A\left(\zeta,v\right)=0\right\} .
\]
Reciprocally, given horizontal spaces $\mathbb{H}\left(\zeta\right)$
satisfying the properties listed above, the corresponding cotangent
$l$-connection $A$ is defined by the formula 
\[
A\left(\zeta,v\right)=\eta,
\]
where $\eta\in\mathfrak{g}$ is such that $v-\eta_{Q}\left(q\right)\in\mathbb{H}\left(\zeta\right)$.
\end{proposition}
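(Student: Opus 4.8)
The plan is to mimic the classical equivalence between a principal connection one-form and its horizontal distribution (see \cite{kn}), carried out fibrewise over the enlarged base $T^{(l-1)}T^{*}Q$ rather than over $Q$. Throughout I would use that, since $\rho$ is free, for each $q$ the map $\mathfrak{g}\to\mathbb{V}_{q}$, $\eta\mapsto\eta_{Q}(q)$, is a linear isomorphism onto the vertical space $\mathbb{V}_{q}=\ker\pi_{*,q}$.

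First I would treat the direction $A\mapsto\mathbb{H}$. Given a cotangent $l$-connection $A$, set $\mathbb{H}(\zeta):=\{v\in T_{q}Q:A(\zeta,v)=0\}$. For the splitting $T_{q}Q=\mathbb{H}(\zeta)\oplus\mathbb{V}(\zeta)$: if $v\in\mathbb{H}(\zeta)\cap\mathbb{V}_{q}$, write $v=\eta_{Q}(q)$, so that $0=A(\zeta,v)=A(\zeta,\eta_{Q}(q))=\eta$ and hence $v=0$; conversely any $v$ decomposes as $v=(v-\eta_{Q}(q))+\eta_{Q}(q)$ with $\eta:=A(\zeta,v)$, and linearity of $A(\zeta,\cdot)$ together with $A(\zeta,\eta_{Q}(q))=\eta$ gives $v-\eta_{Q}(q)\in\mathbb{H}(\zeta)$. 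For equivariance, if $v\in\mathbb{H}(\zeta)$ then $A(\hat{\rho}_{g}^{(l-1)}(\zeta),\rho_{g\ast}(v))=\Ad_{g}A(\zeta,v)=0$, so $\rho_{g\ast}(\mathbb{H}(\zeta))\subseteq\mathbb{H}(\hat{\rho}_{g}^{(l-1)}(\zeta))$; applying the same inclusion to $g^{-1}$ and $\hat{\rho}_{g}^{(l-1)}(\zeta)$ in place of $\zeta$ yields the reverse inclusion. Finally, $\mathbb{H}$ depends differentiably on $\zeta$ because $A(\zeta,\cdot)$ is surjective (its image contains every $\eta=A(\zeta,\eta_{Q}(q))$), hence of constant rank $\dim\mathfrak{g}$, so its kernel is a smooth subbundle.

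Next I would treat $\mathbb{H}\mapsto A$. Given horizontal spaces as in the statement, the decomposition $T_{q}Q=\mathbb{H}(\zeta)\oplus\mathbb{V}_{q}$ writes each $v$ uniquely as $v=h+\eta_{Q}(q)$ with $h\in\mathbb{H}(\zeta)$; set $A(\zeta,v):=\eta$. This is well defined and linear in $v$, being the projection $T_{q}Q\to\mathbb{V}_{q}$ along $\mathbb{H}(\zeta)$ followed by the isomorphism $\mathbb{V}_{q}\cong\mathfrak{g}$, and it is smooth because $\mathbb{H}$ is. The identity $A(\zeta,\eta_{Q}(q))=\eta$ is immediate. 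The equivariance is the one step requiring care: from $v-\eta_{Q}(q)\in\mathbb{H}(\zeta)$ with $\eta=A(\zeta,v)$, apply $\rho_{g\ast}$ and use the standard identity $\rho_{g\ast}(\eta_{Q}(q))=(\Ad_{g}\eta)_{Q}(\rho_{g}(q))$ for fundamental vector fields together with property $2$, to obtain $\rho_{g\ast}(v)-(\Ad_{g}\eta)_{Q}(\rho_{g}(q))\in\mathbb{H}(\hat{\rho}_{g}^{(l-1)}(\zeta))$, i.e. $A(\hat{\rho}_{g}^{(l-1)}(\zeta),\rho_{g\ast}(v))=\Ad_{g}\eta=\Ad_{g}A(\zeta,v)$.

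Finally I would observe that the two constructions are mutually inverse: passing $A\mapsto\mathbb{H}\mapsto A$ recovers the original map because $A(\zeta,\cdot)$ vanishes exactly on its kernel, while the opposite round-trip is immediate from the defining decomposition. I expect the only genuinely nontrivial points to be the fundamental-vector-field identity $\rho_{g\ast}(\eta_{Q}(q))=(\Ad_{g}\eta)_{Q}(\rho_{g}(q))$ underlying the equivariance, and the bookkeeping check that the base point of $\hat{\rho}_{g}^{(l-1)}(\zeta)$ is indeed $\rho_{g}(q)$, so that the $\mathfrak{g}$-valued identification is performed over the correct fibre; the latter follows from $\rho_{g}\circ\pi_{Q}=\pi_{Q}\circ\hat{\rho}_{g}$ in \eqref{lad}. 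Everything else is the classical connection argument applied pointwise in the extra variable $\zeta$.
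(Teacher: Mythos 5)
Your proof is correct. The paper itself does not prove Proposition \ref{lc} but simply refers the reader to Ref. \cite{gz}, where the analogous statement for tangent $l$-connections is established by exactly the argument you give: the classical connection-form/horizontal-distribution equivalence carried out fibrewise in the extra variable $\zeta$, with the only nonroutine ingredients being the identity $\rho_{g\ast}\left(\eta_{Q}\left(q\right)\right)=\left(\Ad_{g}\eta\right)_{Q}\left(\rho_{g}\left(q\right)\right)$ and the compatibility $\rho_{g}\circ\pi_{Q}=\pi_{Q}\circ\hat{\rho}_{g}$, both of which you identify and use correctly.
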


(For a proof, see Ref. \cite{gz}). Related to a cotangent $l$-connection
we have a map 
\begin{equation}
\alpha_{A}:\left.T^{\left(l-1\right)}T^{*}Q/G\times_{\mathcal{X}}TQ\right/G\rightarrow T\mathcal{X}\oplus\widetilde{\mathfrak{g}},\label{aah}
\end{equation}
similar to the Atiyah isomorphism of a principal connection, defined
in the following way: 
\begin{enumerate}
\item Take $[\zeta]\in\left(T^{\left(l-1\right)}T^{*}Q\right)/G$ and $[v]\in TQ/G$,
both of them based on the same point $x\in\mathcal{X}$. 
\item Consider representatives $\zeta\in T_{\sigma}^{\left(l-1\right)}T^{*}Q$
and $v\in T_{q}Q$ of each one of these classes, such that $\pi\left(q\right)=x$
(observe that this is always possible). 
\item Then, define 
\[
\alpha_{A}\left(\left[\zeta\right],\left[v\right]\right)\coloneqq\pi_{\ast}\left(v\right)\oplus\left[q,A\left(\zeta,v\right)\right].
\]
\end{enumerate}
Following \cite{gz}, we can see that $\alpha_{A}$ is well defined.
Besides, we can prove that, for each $\zeta\in T^{(l-1)}T^{*}Q$,
the map 
\[
\alpha_{A}^{\left[\zeta\right]}:\left(\left.TQ\right/G\right)_{\pi\left(q\right)}\rightarrow T_{\pi\left(q\right)}\mathcal{X}\oplus\widetilde{\mathfrak{g}}_{\pi\left(q\right)},
\]
given by 
\begin{equation}
\alpha_{A}^{\left[\zeta\right]}\left(\left[v\right]\right)\coloneqq\alpha_{A}\left(\left[\zeta\right],\left[v\right]\right),\label{aah2}
\end{equation}
defines a linear isomorphism.

\bigskip{}

For later convenience, let us define the map $a:T^{\left(l-1\right)}T^{*}Q\times_{Q}TQ\rightarrow\widetilde{\mathfrak{g}}$
such that 
\[
a\left(\zeta,v\right)\coloneqq\left[q,A\left(\zeta,v\right)\right],
\]
and the maps $a_{\zeta}:TQ\rightarrow\widetilde{\mathfrak{g}}$ given
by 
\[
a_{\zeta}\left(v\right)\coloneqq\left[q,A\left(\zeta,v\right)\right].
\]
It follows that 
\begin{equation}
\alpha_{A}\left(\left[\zeta\right],\left[v\right]\right)=\pi_{\ast}\left(v\right)\oplus a\left(\zeta,v\right),\label{aapa}
\end{equation}
where $\zeta$ and $v$ are representatives based on the same point
$q$.

\subsubsection{The higher order cotangent connection}

\label{hocc}

In this subsection we shall see that to each $G$-invariant Hamiltonian
HOCS a particular cotangent $l$-connection can be assigned. It will
be called \emph{higher order cotangent connection}, and it will enable
us to separate the reduced virtual displacements $\mathfrak{C}_{V}$
into horizontal and vertical components. The construction of such
an object will be done in several steps (compare with the higher order
$l$-connection appearing in \cite{gz}). 
\begin{enumerate}
\item Fix a $G$-invariant metric on $Q$. We shall assume that $H$ is
simple, and that we choose the Riemannian metric defining its kinetic
term. 
\item For each $q\in Q$, $\sigma\in T_{q}^{*}Q$ and $\zeta\in T_{\sigma}^{(l-1)}T^{*}Q$,
consider 
\[
\mathcal{S}\left(\zeta\right):=C_{V}\left(\zeta\right)\cap\mathbb{V}\left(\zeta\right)
\]
and write 
\[
C_{V}\left(\zeta\right)=\mathcal{T}\left(\zeta\right)\oplus\mathcal{S}\left(\zeta\right)\ \ \ \ \text{and}\ \ \ \mathbb{V}\left(\zeta\right)=\mathcal{S}\left(\zeta\right)\oplus\mathcal{U}\left(\zeta\right),
\]
where $\mathcal{T}\left(\zeta\right)$ and $\mathcal{U}\left(\zeta\right)$
are the orthogonal complements of $\mathcal{S}\left(\zeta\right)$
in $C_{V}\left(\zeta\right)$ and $\mathbb{V}\left(\zeta\right)$,
respectively. Recall that $\mathbb{V}(\zeta)=\mathbb{V}_{q}$ is the
vertical space at $q$ associated to $\pi$. 
\item Consider the orthogonal complement of $C_{V}\left(\zeta\right)+\mathbb{V}\left(\zeta\right)$
in $T_{q}Q$. Let us denote it $\mathcal{R}\left(\zeta\right)$.

We shall assume that the spaces $\mathcal{R}\left(\zeta\right)\oplus\mathcal{T}\left(\zeta\right)$
depend differentially on $q$ and $\zeta$. 
\item Define \textbf{higher order cotangent} $l$-\textbf{connection} $A^{\bullet}:T^{\left(l-1\right)}T^{*}Q\times_{Q}TQ\rightarrow\mathfrak{g}$, with horizontal subspaces (see Proposition \ref{lc}) 
\[
\mathbb{H}^{\bullet}\left(\zeta\right):=\mathcal{R}\left(\zeta\right)\oplus\mathcal{T}\left(\zeta\right).
\]
In other words, given $v\in T_{q}Q$, define 
\[
A^{\bullet}\left(\zeta,v\right)=\eta
\]
if $v-\eta_{Q}\left(q\right)\in\mathbb{H}^{\bullet}\left(\zeta\right)$. 
\end{enumerate}
It is easy to show that $A^{\bullet}$ is effectively a cotangent
$l$-connection. In particular, 
\[
T_{q}Q=\mathbb{H}^{\bullet}\left(\zeta\right)\oplus\mathbb{V}_{q}.
\]
Note that 
\[
\mathcal{T}\left(\zeta\right)=C_{V}\left(\zeta\right)\cap\mathbb{H}^{\bullet}\left(\zeta\right).
\]
Thus, 
\begin{equation}
C_{V}\left(\zeta\right)=\left[C_{V}\left(\zeta\right)\cap\mathbb{H}^{\bullet}\left(\zeta\right)\right]\oplus\left[C_{V}\left(\zeta\right)\cap\mathbb{V}_{q}\right].\label{decocv}
\end{equation}

Using the isomorphisms 
\[
\alpha_{A^{\bullet}}^{\left[\zeta\right]}:\left(\left.TQ\right/G\right)_{\pi\left(q\right)}\rightarrow T_{\pi\left(q\right)}\mathcal{X}\oplus\widetilde{\mathfrak{g}}_{\pi\left(q\right)}
\]
and Eqs. \eqref{aah} and \eqref{aah2}, we have 
\begin{equation}
\mathbb{H}^{\bullet}\left(\zeta\right)/G\simeq\alpha_{A^{\bullet}}^{\left[\zeta\right]}\left(\mathbb{H}^{\bullet}\left(\zeta\right)/G\right)=\pi_{\ast}\left(\mathbb{H}^{\bullet}\left(\zeta\right)\right)=T_{\pi\left(q\right)}\mathcal{X}\label{htx}
\end{equation}
and {[}see Eq. \eqref{aapa}{]} 
\begin{equation}
\mathbb{V}_{q}/G\simeq\alpha_{A^{\bullet}}^{\left[\zeta\right]}\left(\mathbb{V}_{q}/G\right)=a_{\zeta}^{\bullet}\left(\mathbb{V}_{q}\right)=\widetilde{\mathfrak{g}}_{\pi\left(q\right)}.\label{vg}
\end{equation}
Accordingly, combining \eqref{decocv}, \eqref{htx} and \eqref{vg},
the next result is immediate. 
\begin{prop}
\label{decoh} If we define $\mathfrak{C}_{V}^{\bullet}\left(\left[\zeta\right]\right):=\alpha_{A^{\bullet}}^{\left[\zeta\right]}\circ p\left(C_{V}\left(\zeta\right)\right)$,
we have 
\[
\mathfrak{C}_{V}^{\bullet}\left(\left[\zeta\right]\right)=\mathfrak{C}_{V}^{\text{\textrm{hor}}}\left(\left[\zeta\right]\right)\oplus\mathfrak{C}_{V}^{\text{\textrm{ver}}}\left(\left[\zeta\right]\right)
\]
where 
\[
\mathfrak{C}_{V}^{\text{\textrm{hor}}}\left(\left[\zeta\right]\right)\simeq\pi_{\ast}\left(C_{V}\left(\zeta\right)\right)=T_{\pi\left(q\right)}\mathcal{X}\cap\mathfrak{C}_{V}^{\bullet}\left(\left[\zeta\right]\right)
\]
and 
\[
\mathfrak{C}_{V}^{\text{\textrm{ver}}}\left(\left[\zeta\right]\right)\simeq a_{\zeta}^{\bullet}\left(C_{V}\left(\zeta\right)\right)=\widetilde{\mathfrak{g}}_{\pi\left(q\right)}\cap\mathfrak{C}_{V}^{\bullet}\left(\left[\zeta\right]\right).
\]
\end{prop}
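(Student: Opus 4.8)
The plan is to obtain the claimed splitting by transporting the direct-sum decomposition \eqref{decocv} of $C_{V}\left(\zeta\right)$ through the fiberwise isomorphism $\alpha_{A^{\bullet}}^{\left[\zeta\right]}\circ p$ and then reading off the two summands with the aid of \eqref{htx} and \eqref{vg}. Since $p$ restricts to a linear isomorphism of $T_{q}Q$ onto $\left(\left.TQ\right/G\right)_{\pi\left(q\right)}$ (the action being free) and $\alpha_{A^{\bullet}}^{\left[\zeta\right]}$ is a linear isomorphism by \eqref{aah2}, the composite $\alpha_{A^{\bullet}}^{\left[\zeta\right]}\circ p$ is a fiber isomorphism; applying it to \eqref{decocv} gives
\begin{equation*}
\mathfrak{C}_{V}^{\bullet}\left(\left[\zeta\right]\right)=\alpha_{A^{\bullet}}^{\left[\zeta\right]}\circ p\left(C_{V}\left(\zeta\right)\cap\mathbb{H}^{\bullet}\left(\zeta\right)\right)\oplus\alpha_{A^{\bullet}}^{\left[\zeta\right]}\circ p\left(C_{V}\left(\zeta\right)\cap\mathbb{V}_{q}\right),
\end{equation*}
the sum remaining direct because an isomorphism carries internal direct sums to internal direct sums. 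It then suffices to identify each summand.

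For the first summand I would use that, on horizontal vectors, $\alpha_{A^{\bullet}}^{\left[\zeta\right]}\circ p$ coincides with $\pi_{\ast}$: indeed, for $v\in\mathbb{H}^{\bullet}\left(\zeta\right)$ one has $A^{\bullet}\left(\zeta,v\right)=0$, so $a_{\zeta}^{\bullet}\left(v\right)=0$ and \eqref{aapa} reduces to $\pi_{\ast}\left(v\right)$ [consistently with \eqref{htx}]. Hence the image lands in $T_{\pi\left(q\right)}\mathcal{X}$ and equals $\pi_{\ast}\left(C_{V}\left(\zeta\right)\cap\mathbb{H}^{\bullet}\left(\zeta\right)\right)$. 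Because $\mathbb{V}_{q}=\ker\pi_{\ast}$ annihilates the vertical part of \eqref{decocv}, this coincides with $\pi_{\ast}\left(C_{V}\left(\zeta\right)\right)=\mathfrak{C}_{V}^{\text{\textrm{hor}}}\left(\left[\zeta\right]\right)$. Symmetrically, on vertical vectors $\alpha_{A^{\bullet}}^{\left[\zeta\right]}\circ p$ coincides with $a_{\zeta}^{\bullet}$ [by \eqref{vg}], and since $a_{\zeta}^{\bullet}$ annihilates $\mathbb{H}^{\bullet}\left(\zeta\right)$ (as $A^{\bullet}$ vanishes there), the image of the vertical part equals $a_{\zeta}^{\bullet}\left(C_{V}\left(\zeta\right)\cap\mathbb{V}_{q}\right)=a_{\zeta}^{\bullet}\left(C_{V}\left(\zeta\right)\right)=\mathfrak{C}_{V}^{\text{\textrm{ver}}}\left(\left[\zeta\right]\right)$, landing in $\widetilde{\mathfrak{g}}_{\pi\left(q\right)}$.

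Finally, to justify the intersection descriptions, I would invoke that $T_{\pi\left(q\right)}\mathcal{X}$ and $\widetilde{\mathfrak{g}}_{\pi\left(q\right)}$ are complementary summands of the Whitney sum $T\mathcal{X}\oplus\widetilde{\mathfrak{g}}$, so their intersection is trivial; intersecting $\mathfrak{C}_{V}^{\bullet}\left(\left[\zeta\right]\right)=\mathfrak{C}_{V}^{\text{\textrm{hor}}}\left(\left[\zeta\right]\right)\oplus\mathfrak{C}_{V}^{\text{\textrm{ver}}}\left(\left[\zeta\right]\right)$ with $T_{\pi\left(q\right)}\mathcal{X}$ (resp. with $\widetilde{\mathfrak{g}}_{\pi\left(q\right)}$) then returns exactly $\mathfrak{C}_{V}^{\text{\textrm{hor}}}\left(\left[\zeta\right]\right)$ (resp. $\mathfrak{C}_{V}^{\text{\textrm{ver}}}\left(\left[\zeta\right]\right)$). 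There is no genuine obstacle: the whole argument is the fiberwise transport of \eqref{decocv} under an isomorphism whose horizontal and vertical behaviour is already pinned down by \eqref{htx} and \eqref{vg}. The only point that deserves a moment's care is the passage from the intersections with $\mathbb{H}^{\bullet}\left(\zeta\right)$ and $\mathbb{V}_{q}$ to the full space $C_{V}\left(\zeta\right)$, which rests precisely on the facts that $\pi_{\ast}$ kills vertical vectors while $a_{\zeta}^{\bullet}$ kills horizontal ones.
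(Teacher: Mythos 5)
Your argument is correct and follows exactly the route the paper intends: the paper states that the proposition is "immediate" from combining \eqref{decocv}, \eqref{htx} and \eqref{vg}, and your proof simply fills in the details of transporting that decomposition through the fiberwise isomorphism $\alpha_{A^{\bullet}}^{\left[\zeta\right]}\circ p$ and reading off the summands. The observations that $\pi_{\ast}$ kills the vertical part and $a_{\zeta}^{\bullet}$ kills the horizontal part, together with the triviality of $T_{\pi\left(q\right)}\mathcal{X}\cap\widetilde{\mathfrak{g}}_{\pi\left(q\right)}$ in the Whitney sum, are precisely the points being elided in the paper.
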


\subsubsection{The maps $\varphi^{\left[\zeta\right]}$}

Let us relate the description of $p\left(C_{V}\left(\zeta\right)\right)$
via $A^{\bullet}$ and an arbitrary connection $A$. Consider a curve
$\Gamma:(t_{1},t_{2})\rightarrow T^{*}Q$ and the projected curve
on $Q$ given by $\gamma(t)=\pi_{Q}(\Gamma(t))$. If $\delta\gamma$
denotes an infinitesimal variation on $\gamma$, let us write 
\[
\alpha_{A}\circ p\left(\delta\gamma\left(t\right)\right)=\pi_{\ast}\left(\delta\gamma\left(t\right)\right)\oplus a\left(\delta\gamma\left(t\right)\right)=\delta x\left(t\right)\oplus\bar{\eta}\left(t\right)
\]
as before, and 
\[
\alpha_{A^{\bullet}}^{\left[\zeta\right]}\circ p\left(\delta\gamma\left(t\right)\right)=\pi_{\ast}\left(\delta\gamma\left(t\right)\right)\oplus a_{\zeta\left(t\right)}^{\bullet}\left(\delta\gamma\left(t\right)\right)=\delta x^{\bullet}\left(t\right)\oplus\bar{\eta}^{\bullet}\left(t\right),
\]
where $\zeta\left(t\right)=\Gamma^{\left(l-1\right)}\left(t\right)$.
It is clear that, if $\delta\gamma\left(t\right)\in C_{V}\left(\Gamma^{\left(l-1\right)}\left(t\right)\right)$
then
\[
\delta x\left(t\right)\oplus\bar{\eta}\left(t\right)\in\mathfrak{C}_{V}\left(\left[\Gamma^{\left(l-1\right)}\left(t\right)\right]\right)
\]
and 
\[
\delta x^{\bullet}\left(t\right)\in\mathfrak{C}_{V}^{\text{\textrm{hor}}}\left(\left[\Gamma^{\left(l-1\right)}\left(t\right)\right]\right)\ \ \ \text{and}\ \ \ \bar{\eta}^{\bullet}\left(t\right)\in\mathfrak{C}_{V}^{\text{\textrm{ver}}}\left(\left[\Gamma^{\left(l-1\right)}\left(t\right)\right]\right).
\]
By using Proposition \ref{decoh}, all the reduced variations inside
$\mathfrak{C}_{V}$ can be written in terms of independent variations
$\delta x^{\bullet}\in\mathfrak{C}_{V}^{\text{\textrm{hor}}}$ and
$\bar{\eta}^{\bullet}\in\mathfrak{C}_{V}^{\text{\textrm{ver}}}$.
As we noticed in Section \ref{tc}, we can write expressions for the
variations $\delta x$ and $\bar{\eta}$ in terms of $\delta x^{\bullet}$,
$\bar{\eta}^{\bullet}$ and the canonical projections as follows 
\[
\delta x\left(t\right)=\delta x^{\bullet}\left(t\right),\ \ \ \bar{\eta}\left(t\right)=\varphi^{\left[\zeta\right]}\left(\delta x^{\bullet}\left(t\right)\right)+\bar{\eta}^{\bullet}\left(t\right),
\]
where $\varphi^{\left[\zeta\right]}:T_{\pi\left(q\right)}\mathcal{X}\rightarrow\widetilde{\mathfrak{g}}_{\pi\left(q\right)}$
is given by 
\[
\varphi^{\left[\zeta\right]}\left(u\right)\coloneqq P_{\widetilde{\mathfrak{g}}}\circ\alpha_{A}\circ\left(\alpha_{A^{\bullet}}^{\left[\zeta\right]}\right)^{-1}\circ I_{T\mathcal{X}}\left(u\right).
\]
Observe that $\varphi^{\left[\zeta\right]}$ gives rise to another
map 
\[
\varphi:T^{(l-1)}T^{*}Q/G\times_{\mathcal{X}}T\mathcal{X}\rightarrow\widetilde{\mathfrak{g}}
\]
defined by 
\[
\varphi\left(\left[\zeta\right],u\right)=\varphi^{\left[\zeta\right]}\left(u\right),\quad\forall\,q\in Q,\;\sigma\in T_{q}^{*}Q,\;\zeta\in T_{\sigma}^{\left(l-1\right)}T^{*}Q\quad\text{and}\quad u\in T_{\pi\left(q\right)}\mathcal{X}.
\]

\subsubsection{The Higher Order Hamilton-d'Alembert-Poincar\'e (HdP) equations}

We shall finally derive a set of equations describing the dynamics
of a $G$-invariant Hamiltonian {\small{}{}HOCS} $\left(H,\mathcal{P},C_{V}\right)$
in terms of their corresponding reduced variables on $T^{*}\mathcal{X}$
and $\widetilde{\mathfrak{g}}^{*}$. In order to write these equations,
we shall prove that the fiber bundles 
\[
T^{(n)}T^{*}Q/G\quad\text{and}\quad T^{(n)}T^{*}\mathcal{X}\times_{\mathcal{X}}\left[(n+1)\widetilde{\mathfrak{g}}^{*}\oplus n\widetilde{\mathfrak{g}}\right]
\]
are isomorphic. In the first place, we need the next result. 
\begin{lem}
If $A:TQ\rightarrow\mathfrak{g}$ is a principal connection on $\pi:Q\rightarrow\mathcal{X}$,
then $\hat{A}\coloneqq A\circ\pi_{Q*}:TT^{*}Q\rightarrow\mathfrak{g}$
is a principal connection on $\hat{p}:T^{*}Q\rightarrow T^{*}Q/G$. 
\end{lem}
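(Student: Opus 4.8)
The plan is to verify directly the two defining properties of a principal connection from Eq. \eqref{cpc}, now for the map $\hat{A}=A\circ\pi_{Q*}$ on the bundle $\hat{p}\colon T^{*}Q\rightarrow T^{*}Q/G$. Since $\rho$ is assumed free, its cotangent lift $\hat{\rho}$ is free as well, so $\hat{p}$ is genuinely a principal $G$-bundle and it makes sense to ask whether $\hat{A}$ is a connection on it. Smoothness and fiberwise linearity of $\hat{A}$ are immediate, because $\pi_{Q*}$ is smooth and linear on each $T_{\sigma}T^{*}Q$ and $A$ has the same properties; so the entire content reduces to checking (i) that $\hat{A}$ reproduces the fundamental vector fields of $\hat{\rho}$, and (ii) that $\hat{A}$ is $\Ad$-equivariant.

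The single geometric input I would use is the equivariance of the bundle projection $\pi_{Q}$, recorded in Eq. \eqref{lad}: $\rho_{g}\circ\pi_{Q}=\pi_{Q}\circ\hat{\rho}_{g}$ for all $g\in G$. Differentiating this identity gives, at the level of tangent maps, $\rho_{g*}\circ\pi_{Q*}=\pi_{Q*}\circ\hat{\rho}_{g*}$. Moreover, because $\pi_{Q}$ intertwines the two actions, it carries the fundamental vector field $\eta_{T^{*}Q}$ of $\hat{\rho}$ on $T^{*}Q$ to the fundamental vector field $\eta_{Q}$ of $\rho$ on $Q$; concretely, for $\sigma\in T^{*}Q$,
\[
\pi_{Q*}\left(\eta_{T^{*}Q}(\sigma)\right)=\left.\frac{d}{dt}\right|_{0}\pi_{Q}\left(\hat{\rho}_{\exp(t\eta)}(\sigma)\right)=\left.\frac{d}{dt}\right|_{0}\rho_{\exp(t\eta)}\left(\pi_{Q}(\sigma)\right)=\eta_{Q}\left(\pi_{Q}(\sigma)\right).
\]

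With these two facts in hand, each condition becomes a one-line computation. For (i), invoking the defining property of $A$,
\[
\hat{A}\left(\eta_{T^{*}Q}(\sigma)\right)=A\left(\pi_{Q*}\left(\eta_{T^{*}Q}(\sigma)\right)\right)=A\left(\eta_{Q}(\pi_{Q}(\sigma))\right)=\eta.
\]
For (ii), combining the tangent-level equivariance above with the $\Ad$-equivariance of $A$,
\[
\hat{A}\left(\hat{\rho}_{g*}(V)\right)=A\left(\pi_{Q*}\left(\hat{\rho}_{g*}(V)\right)\right)=A\left(\rho_{g*}\left(\pi_{Q*}(V)\right)\right)=\Ad_{g}\,A\left(\pi_{Q*}(V)\right)=\Ad_{g}\,\hat{A}(V),
\]
for every $V\in TT^{*}Q$. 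This establishes both conditions of \eqref{cpc} for $\hat{A}$.

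I expect the proof to be essentially mechanical; the only point deserving a moment of care is the pushforward of fundamental vector fields under $\pi_{Q*}$, which I would justify precisely as above from the equivariance \eqref{lad} rather than quoting it as folklore. Everything else follows by composing $A$ with $\pi_{Q*}$ and reading off the required identities.
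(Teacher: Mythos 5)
Your proof is correct and follows essentially the same route as the paper: compose $A$ with $\pi_{Q*}$, use the equivariance $\rho_{g}\circ\pi_{Q}=\pi_{Q}\circ\hat{\rho}_{g}$ and the fact that $\pi_{Q*}$ carries $\eta_{T^{*}Q}$ to $\eta_{Q}$, then read off the two conditions of \eqref{cpc}. Your explicit justification of the pushforward of fundamental vector fields is a welcome addition, and your $\Ad_{g}$ (rather than the $\Ad_{g^{-1}}$ appearing in the paper's computation) is the version consistent with the paper's own convention \eqref{cpc} for left actions.
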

\begin{proof}
It is clear that the function $\hat{A}:TT^{*}Q\rightarrow\mathfrak{g}$,
defined as 
\[
\hat{A}(v_{\sigma_{q}})=A\left(\pi_{Q*}(v_{\sigma_{q}})\right),
\]
is linear and, for all $\eta\in\mathfrak{g}$ and all $g\in G$, it satisfies
\[
\hat{A}(\eta_{T^{*}Q}(\sigma_{q}))=A(\pi_{Q*}(\eta_{T^{*}Q}(\sigma_{q}))=A(\eta_{Q}(q))=\eta,
\]
and

\begin{align*}
\hat{A}\left(\hat{\rho}_{g}^{(1)}(v_{\sigma_{q}})\right) & =A\left(\pi_{Q*}\left(\hat{\rho}_{g}^{(1)}(v_{\sigma_{q}})\right)\right)=A\left(\rho_{g*}\left(\pi_{Q*}(v_{\sigma_{q}})\right)\right)\\
 & =\Ad_{g^{-1}}A(\pi_{Q*}(v_{\sigma_{q}}))=\Ad_{g^{-1}}\hat{A}(v_{\sigma_{q}}).
\end{align*}
So, $\hat{A}$ is indeed a principal connection on $\hat{p}$. 
\end{proof}
Now, let us consider the fiber bundle $\hat{\mathfrak{g}}:=\left(T^{*}Q\times\mathfrak{g}\right)/G$
where $G$ acts on $T^{*}Q$ (resp. on $\mathfrak{g}$) through the
canonical lifted (resp. adjoint) action. Observe that this bundle
is the adjoint bundle of $\hat{p}$ with base $T^{*}Q/G$. Its elements
will be denoted by $[\sigma,\eta]$, where $\sigma\in T^{*}Q$ and
$\eta\in\mathfrak{g}$.

It is clear that $\hat{A}$ gives rise to the isomorphism 
\[
\alpha_{\hat{A}}:TT^{*}Q/G\rightarrow T\left(T^{*}Q/G\right)\oplus\hat{\mathfrak{g}}
\]
given by 
\[
\alpha_{\hat{A}}\left(\left[v_{\sigma_{q}}\right]\right)=\hat{p}_{*}(v_{\sigma_{q}})\oplus\left[\sigma_{q},\hat{A}\left(v_{\sigma_{q}}\right)\right],\quad\forall\,v_{\sigma_{q}}\in TT^{*}Q.
\]
Denoting by $\hat{a}$ the map 
\[
\hat{a}:TT^{*}Q\rightarrow\hat{\mathfrak{g}}\ :\ v_{\sigma_{q}}\mapsto\left[\sigma_{q},\hat{A}\left(v_{\sigma_{q}}\right)\right],
\]
we have that 
\[
\alpha_{\hat{A}}\left(\left[v_{\sigma_{q}}\right]\right)=\hat{p}\left(v_{\sigma_{q}}\right)\oplus\hat{a}\left(v_{\sigma_{q}}\right).
\]
Moreover, according to Reference \cite{cmr0}, related to $\hat{A}$
we have the following isomorphisms. 
\begin{lem}
\label{l1}For each $n\geq1$, we have a bundle isomorphism 
\[
\alpha_{\hat{A}}^{(n)}:T^{(n)}T^{*}Q/G\rightarrow T^{(n)}\left(T^{*}Q/G\right)\oplus n\hat{\mathfrak{g}},
\]
where $n\hat{\mathfrak{g}}$ denotes the Whitney sum of $n$ copies
of $\hat{\mathfrak{g}}$. For a curve $\Gamma:[t_{1},t_{2}]\rightarrow T^{*}Q$
, this isomorphism is given by 
\[
\alpha_{\hat{A}}^{(n)}\left(\left[\Gamma^{(n)}(t)\right]\right)=\left([\hat{p}\circ\Gamma]^{(n)}(t),\oplus_{i=0}^{n-1}\frac{D^{i}\hat{a}\left(\Gamma'(t)\right)}{Dt^{i}}\right)
\]
where ${\displaystyle {\frac{D^{i}\hat{a}\left(\Gamma'(t)\right)}{Dt^{i}}}}$
denotes the $i$-th covariant derivative of the curve $\hat{a}\left(\Gamma'(t)\right)$
in $\hat{\mathfrak{g}}$. 
\end{lem}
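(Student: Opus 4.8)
The plan is to regard $\alpha_{\hat{A}}^{(n)}$ as given by the stated formula on $n$-jets of curves and to establish, in three steps, that it is a well-defined bundle isomorphism, with the case $n=1$ serving as the anchor, since there the formula is exactly the Atiyah isomorphism $\alpha_{\hat{A}}$ described above. First I would check independence of the representative curve: the projected jet $[\hat{p}\circ\Gamma]^{(n)}(t)$ depends only on the $n$-jet $[\Gamma^{(n)}(t)]$, and the $i$-th covariant derivative $\tfrac{D^{i}\hat{a}(\Gamma'(t))}{Dt^{i}}$ involves derivatives of $\Gamma$ of order at most $i+1\le n$, so it too depends only on the $n$-jet.

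For $G$-invariance, the key observation is that $\hat{a}:TT^{*}Q\to\hat{\mathfrak{g}}$ is itself $G$-invariant, $\hat{a}(\hat{\rho}_{g}^{(1)}v)=\hat{a}(v)$, which follows immediately from the equivariance of the principal connection $\hat{A}$ (previous Lemma) together with the very definition of the adjoint bundle $\hat{\mathfrak{g}}=(T^{*}Q\times\mathfrak{g})/G$. Consequently the curve $\hat{a}(\Gamma'(t))$ in $\hat{\mathfrak{g}}$ is unchanged when $\Gamma$ is replaced by $\hat{\rho}_{g}\circ\Gamma$; since the induced covariant derivative on $\hat{\mathfrak{g}}$ is $G$-natural, each term $\tfrac{D^{i}\hat{a}(\Gamma')}{Dt^{i}}$ is invariant as well, while $[\hat{p}\circ\Gamma]^{(n)}$ is plainly invariant. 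Hence the formula descends to $T^{(n)}T^{*}Q/G$ and takes values in $T^{(n)}(T^{*}Q/G)\oplus n\hat{\mathfrak{g}}$, a bundle over $T^{*}Q/G$; smoothness is clear from the construction.

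For bijectivity I would exhibit the inverse fibrewise. Fix $\bar{\sigma}\in T^{*}Q/G$ and data $\big(\xi,(\bar{w}_{0},\dots,\bar{w}_{n-1})\big)$ with $\xi\in T_{\bar{\sigma}}^{(n)}(T^{*}Q/G)$ represented by a curve $c$, and $\bar{w}_{i}\in\hat{\mathfrak{g}}_{\bar{\sigma}}$. Choose $\sigma\in\hat{p}^{-1}(\bar{\sigma})$ and let $c^{h}$ be the $\hat{A}$-horizontal lift of $c$ through $\sigma$, so that $\hat{a}((c^{h})')=0$. Correct it by a vertical motion $\Gamma(t):=\hat{\rho}_{g(t)}(c^{h}(t))$ with $g(0)=e$: because the $G$-action is fibrewise, $[\hat{p}\circ\Gamma]^{(n)}=\xi$ for every such $g$, whereas $\hat{a}(\Gamma'(t))$ becomes a $\hat{\mathfrak{g}}$-valued curve whose value and successive covariant derivatives at $0$ are governed by the $n$-jet of $g$ at $0$. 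Since the $n$ coefficients of a $\hat{\mathfrak{g}}$-valued curve (its value and first $n-1$ covariant derivatives) can be prescribed freely by choosing the $n$ free derivatives of $g$, there is a unique jet of $g$ realising $\hat{a}(\Gamma'(0))=\bar{w}_{0},\dots,\tfrac{D^{n-1}\hat{a}(\Gamma')}{Dt^{n-1}}(0)=\bar{w}_{n-1}$. This produces a representative $n$-jet mapping to the prescribed data (surjectivity), and running the reconstruction backwards shows the $G$-orbit of the $n$-jet is forced (injectivity). The equality of total dimensions, $(n+1)\cdot2\dim Q-\dim G$ on both sides, is a useful consistency check.

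The main obstacle is precisely this last correspondence: proving that the $n$-jet of $\Gamma$ at $\sigma$ is determined by, and realises, exactly the pair formed by the $n$-jet of its projection and the tuple $\big(\tfrac{D^{i}\hat{a}(\Gamma')}{Dt^{i}}(0)\big)_{i=0}^{n-1}$. Here the affine (non-linear) nature of the higher-order jet fibres and the non-tensorial behaviour of ordinary time-derivatives must be handled with care; it is exactly the induced connection $\nabla^{\hat{A}}$ on $\hat{\mathfrak{g}}$---and, at orders $\ge 2$, the curvature of $\hat{A}$---that converts the raw derivatives of the vertical part into genuine tensorial data, turning the map between jet coefficients and the covariant-derivative tuple into a bijection. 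Alternatively, this step may be organised as the induction itself, using the canonical $G$-equivariant embedding $T^{(n+1)}T^{*}Q\hookrightarrow T\big(T^{(n)}T^{*}Q\big)$, $[\Gamma]^{(n+1)}\mapsto(\Gamma^{(n)})'(0)$, and applying the case $n=1$ together with the inductive hypothesis to the newly appearing top-order term.
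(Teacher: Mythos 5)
The paper does not prove this lemma at all: it is imported from Ref.~\cite{cmr0}, so the only available comparison is with the argument given there, and your outline reproduces it --- well-definedness on $n$-jets, descent to the quotient via the $G$-invariance of $\hat{a}$, and a fibrewise inverse built from an $\hat{A}$-horizontal lift $c^{h}$ of a representative curve, corrected by a curve $g(t)$ in $G$ with $g(0)=e$. The approach is correct, and the step you flag as the ``main obstacle'' closes more easily than you suggest; in particular the curvature of $\hat{A}$ plays no role. Writing $\Gamma(t)=\hat{\rho}_{g(t)}(c^{h}(t))$ and $\eta(t):=\hat{A}(\Gamma'(t))$, horizontality of $c^{h}$ gives that $\eta$ is the logarithmic derivative of $g$ (namely $g'(t)g(t)^{-1}$ or $g(t)^{-1}g'(t)$, depending on conventions), so that $\eta^{(i)}(0)=g^{(i+1)}(0)+P_{i}\bigl(g'(0),\dots,g^{(i)}(0)\bigr)$ for a universal polynomial $P_{i}$; moreover, in the representative $\hat{a}(\Gamma'(t))=[\Gamma(t),\eta(t)]$ the correction term of the induced covariant derivative on $\hat{\mathfrak{g}}$ is $\pm[\hat{A}(\Gamma'(t)),\eta(t)]=\pm[\eta(t),\eta(t)]=0$, whence $\tfrac{D^{i}}{Dt^{i}}\hat{a}(\Gamma'(t))=[\Gamma(t),\eta^{(i)}(t)]$. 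The correspondence between the $n$-jet of $g$ at $0$ and the tuple $(\bar{w}_{0},\dots,\bar{w}_{n-1})$ is therefore triangular with identity diagonal, hence bijective; this yields surjectivity and injectivity on each fibre simultaneously and completes your argument without the inductive detour through $T^{(n+1)}T^{*}Q\hookrightarrow T\bigl(T^{(n)}T^{*}Q\bigr)$.
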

Consider the maps $a$ and $\hat{a}$ related to the connections $A$
and $\hat{A}$. 
\begin{lem}
There exists a fiber bundle morphism $\mathfrak{p}:\hat{\mathfrak{g}}\rightarrow\widetilde{\mathfrak{g}}$
such that 
\[
\mathfrak{p}\circ\hat{a}=a\circ\pi_{Q*}.
\]
Moreover, the map 
\[
Id_{T^{(n)}\left(T^{*}Q/G\right)}\times\mathfrak{p}:T^{(n)}\left(T^{*}Q/G\right)\oplus_{T^{*}Q/G}\hat{\mathfrak{g}}\rightarrow T^{(n)}\left(T^{*}Q/G\right)\oplus_{\mathcal{X}}\widetilde{\mathfrak{g}}
\]
is a fiber bundle isomorphism. 
\end{lem}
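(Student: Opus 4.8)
The plan is to take for $\mathfrak{p}$ the obvious map that forgets the covector and remembers only its base point, and then to observe that, although $\mathfrak{p}$ covers the nontrivial projection of $T^*Q/G$ onto $\mathcal{X}$ and so is not itself a bundle isomorphism, the augmented map $Id\times\mathfrak{p}$ is one because the untouched factor $T^{(n)}(T^*Q/G)$ supplies exactly the $T^*Q/G$-information that $\mathfrak{p}$ discards.

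First I would define $\mathfrak{p}:\hat{\mathfrak{g}}\to\widetilde{\mathfrak{g}}$ by
\[
\mathfrak{p}\bigl([\sigma_q,\eta]\bigr):=[q,\eta],\qquad q=\pi_Q(\sigma_q),
\]
and check that it is well defined. Using the equivariance $\pi_Q\circ\hat{\rho}_g=\rho_g\circ\pi_Q$ of \eqref{lad}, replacing the representative $(\sigma_q,\eta)$ by $(\hat{\rho}_g(\sigma_q),\Ad_g\eta)$ moves the base point $q$ to $\rho_g(q)$, whence $[\rho_g(q),\Ad_g\eta]=[q,\eta]$ in $\widetilde{\mathfrak{g}}$ and the value is unchanged; the same computation shows $\mathfrak{p}$ is linear on fibers and covers the induced projection $\bar{\pi}_Q:T^*Q/G\to\mathcal{X}$. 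The intertwining identity is then a one-line unwinding of the definitions: for $v_{\sigma_q}\in TT^*Q$, since $\hat{A}=A\circ\pi_{Q*}$,
\[
\mathfrak{p}\bigl(\hat{a}(v_{\sigma_q})\bigr)=\mathfrak{p}\bigl([\sigma_q,A(\pi_{Q*}(v_{\sigma_q}))]\bigr)=[q,A(\pi_{Q*}(v_{\sigma_q}))]=a\bigl(\pi_{Q*}(v_{\sigma_q})\bigr),
\]
because $\pi_{Q*}(v_{\sigma_q})\in T_qQ$ and $a(w)=[q,A(w)]$ there; hence $\mathfrak{p}\circ\hat{a}=a\circ\pi_{Q*}$.

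For the second assertion the key remark is that $\mathfrak{p}$ restricts to a linear isomorphism on each fiber: over $[\sigma_q]$ both $\hat{\mathfrak{g}}_{[\sigma_q]}$ and $\widetilde{\mathfrak{g}}_{\bar{\pi}_Q([\sigma_q])}=\widetilde{\mathfrak{g}}_{\pi(q)}$ are modeled on $\mathfrak{g}$ through the chosen representative, and in these models $\mathfrak{p}$ is the identity. I would then view both the source $T^{(n)}(T^*Q/G)\oplus_{T^*Q/G}\hat{\mathfrak{g}}$ and the target $T^{(n)}(T^*Q/G)\oplus_{\mathcal{X}}\widetilde{\mathfrak{g}}$ as fiber bundles over the common base $T^{(n)}(T^*Q/G)$ via their first projection (the target's $\widetilde{\mathfrak{g}}$-factor being matched through $\bar{\pi}_Q$): over a point $w$ lying above $[\sigma_q]$ above $x=\pi(q)$, the fibers are $\hat{\mathfrak{g}}_{[\sigma_q]}$ and $\widetilde{\mathfrak{g}}_{x}$ respectively. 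In this picture $Id\times\mathfrak{p}$ covers the identity of $T^{(n)}(T^*Q/G)$ and acts fiberwise as $\mathfrak{p}$, so it is a fiberwise linear isomorphism, smooth with smooth inverse, i.e. a fiber bundle isomorphism.

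The only genuine subtlety — and the one place I would be careful — is the mismatch of base manifolds ($T^*Q/G$ on the left against $\mathcal{X}$ on the right): one must verify that the fibered product over $\mathcal{X}$ on the target is legitimately recast as a bundle over $T^{(n)}(T^*Q/G)$ and that the coordinate forgotten by $\mathfrak{p}$ is precisely recorded by the retained first factor, which is what renders $Id\times\mathfrak{p}$ invertible even though $\mathfrak{p}$ alone is not. Everything else is a direct substitution into the definitions of the adjoint bundles $\hat{\mathfrak{g}},\widetilde{\mathfrak{g}}$ and the maps $a,\hat{a}$.
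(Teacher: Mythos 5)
Your proposal is correct and follows essentially the same route as the paper: the same definition $\mathfrak{p}([\sigma_q,\eta])=[\pi_Q(\sigma_q),\eta]$, the same well-definedness check via the equivariance \eqref{lad}, and the same one-line verification of $\mathfrak{p}\circ\hat{a}=a\circ\pi_{Q*}$. The only cosmetic difference is in the second part, where the paper writes down the inverse explicitly (reinstating the representative $\zeta_0$ recorded by the first factor) while you argue that $Id\times\mathfrak{p}$ is a fiberwise linear isomorphism over the common base $T^{(n)}(T^*Q/G)$ --- the two arguments rest on exactly the same observation.
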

\begin{proof}
First, let us define $\mathfrak{p}:\hat{\mathfrak{g}}\rightarrow\widetilde{\mathfrak{g}}$
as ${\displaystyle {\mathfrak{p}([\sigma,\eta])=[\pi_{Q}(\sigma),\eta]}}$.
Using Eq. \eqref{lad}, it easily follows that $\mathfrak{p}$ is
well defined. It is clear that this function is a fiber bundle morphism
between the fiber bundles $\hat{\mathfrak{g}}$ and $\widetilde{\mathfrak{g}}$
over the quotient map $[\pi_Q]:T^*Q/G\rightarrow \mathcal X$ sucht that $[\pi_Q]([\alpha])=[\pi_Q(\alpha)]$ for all $[\alpha]\in T^*Q/G$. Also, 
\[
\mathfrak{p}(\hat{a}(v_{\sigma_{q}}))=\mathfrak{p}([\sigma_{q},\hat{A}(v_{\sigma_{q}})])=[\pi_{Q}(\sigma_{q}),\hat{A}(v_{\sigma_{q}})]=[\pi_{Q}(\sigma_{q}),A(\pi_{Q*}(v_{\sigma_{q}}))]=a(\pi_{Q*}(v_{\sigma_{q}})),
\]
what implies that $\mathfrak{p}\circ\hat{a}=a\circ\pi_{Q*}$. Finally,
given a curve $\zeta:\left(-\varepsilon,\varepsilon\right)\rightarrow T^{*}Q$,
define the map 
\[
\left([\zeta]^{(n)},[q,\eta]\right)\in T^{(n)}\left(T^{*}Q/G\right)\oplus_{\mathcal{X}}\widetilde{\mathfrak{g}}\mapsto\left([\zeta]^{(n)},[\zeta_{0},\eta]\right)\in T^{(n)}\left(T^{*}Q/G\right)\oplus_{T^{*}Q/G}\hat{\mathfrak{g}},
\]
where $\zeta_{0}=\zeta(0)\in T^{*}Q$ and $(q,\eta)\in Q\times\mathfrak{g}$
is a representative such that\footnote{Notice that this is always possible.}
$\pi_{Q}(\zeta_{0})=q$. A straightforward computation shows that
such a map is well defined and is a smooth inverse of $Id_{\left(T^{(n)}T^{*}Q\right)/G}\times\mathfrak{p}$. 
\end{proof}
As an immediate consequence we have the following. 
\begin{cor}
\label{coro1}The map $Id_{T^{(n)}\left(\left(T^{*}Q\right)/G\right)}\times m\mathfrak{p}$
is a fiber bundle isomorphism between the spaces 
\[
T^{(n)}\left(T^{*}Q/G\right)\oplus_{T^{*}Q/G}m\hat{\mathfrak{g}}\qquad\text{and}\qquad T^{(n)}\left(T^{*}Q/G\right)\oplus_{\mathcal{X}}m\widetilde{\mathfrak{g}}
\]
for every $n,m\geq1$. 
\end{cor}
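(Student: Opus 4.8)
The plan is to deduce the statement directly from the preceding Lemma by applying its inverse construction slot-by-slot. Recall that that Lemma shows $Id_{T^{(n)}(T^*Q/G)}\times\mathfrak{p}$ to be an isomorphism, and that its inverse was produced by using a single base point $\zeta_0=\zeta(0)\in T^*Q$---fixed, up to the $G$-action, by the zeroth-order part of the class $[\zeta]^{(n)}\in T^{(n)}(T^*Q/G)$---to lift each class $[q,\eta]\in\widetilde{\mathfrak{g}}$ to $[\zeta_0,\eta]\in\hat{\mathfrak{g}}$, after choosing the representative so that $\pi_Q(\zeta_0)=q$.

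First I would note that $m\mathfrak{p}=\mathfrak{p}\oplus\cdots\oplus\mathfrak{p}$ is, fiberwise, the $m$-fold direct sum of the linear maps $\mathfrak{p}$, hence again a fiber bundle morphism over the quotient map $[\pi_Q]:T^*Q/G\rightarrow\mathcal{X}$; consequently $Id_{T^{(n)}((T^*Q)/G)}\times m\mathfrak{p}$ is a smooth bundle morphism covering the identity on $\mathcal{X}$. Since the Whitney sum $T^{(n)}(T^*Q/G)\oplus_{\mathcal{X}}m\widetilde{\mathfrak{g}}$ fibers all $m$ copies of $\widetilde{\mathfrak{g}}$ over one and the same $x\in\mathcal{X}$, while the single factor $T^{(n)}(T^*Q/G)$ still determines one representative $\zeta_0$, I would define the candidate inverse by
\[
\left([\zeta]^{(n)},[q,\eta_1],\dots,[q,\eta_m]\right)\longmapsto\left([\zeta]^{(n)},[\zeta_0,\eta_1],\dots,[\zeta_0,\eta_m]\right),
\]
i.e.\ by applying the Lemma's lifting to each of the $m$ entries using the common base point $\zeta_0$. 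Well-definedness (independence of the representatives) and smoothness then follow entry-by-entry exactly as in the $m=1$ case, and the two composites reduce to the identity because they do so in each copy separately.

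The only point that genuinely requires attention---and the closest thing to an obstacle---is verifying that one copy of the factor $T^{(n)}(T^*Q/G)$ supplies enough information to invert all $m$ copies of $\mathfrak{p}$ at once, rather than needing one such factor per copy. This is immediate once one observes that the sole role of that factor in the Lemma is to recover the base point $\zeta_0\in T^*Q$ lying over $x=\pi(q)$, and that this same $\zeta_0$ can be reused in every slot (the representative-matching step $\pi_Q(\zeta_0)=q$ being carried out independently in each). Equivalently, one may simply invoke that a finite Whitney sum of fiber bundle isomorphisms is again a fiber bundle isomorphism, applied to the $m$ copies of the map of the Lemma with their shared $T^{(n)}(T^*Q/G)$ factor identified; either route makes the Corollary immediate, as claimed.
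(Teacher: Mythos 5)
Your argument is correct and takes essentially the same route as the paper: the paper states this corollary without proof, as an immediate consequence of the preceding lemma, and your slot-by-slot application of that lemma's inverse construction (reusing the single base point $\zeta_{0}$ recovered from the one $T^{(n)}\left(T^{*}Q/G\right)$ factor in each of the $m$ copies of $\widetilde{\mathfrak{g}}$) is precisely the routine verification being left implicit. Nothing further is needed.
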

Now, consider the next result on general vector bundles. 
\begin{lem}
\label{l2}Given a vector bundle $\Pi:V\rightarrow Y$ and an affine
connection on it, there exists an isomorphism between the fiber bundles
\[
T^{(n)}V\quad\text{and}\quad T^{(n)}Y\times_{Y}(n+1)\,V.
\]
And given a second vector vector bundle $W\rightarrow Y$ and an affine
connection on it, we have the isomorphisms 
\[
T^{(n)}\left(V\oplus W\right)\simeq T^{(n)}Y\times_{Y}(n+1)\,V\times_{Y}(n+1)\,W
\]
and 
\[
T^{(n)}\left(V\oplus W\right)\simeq T^{(n)}V\times_{Y}(n+1)\,W.
\]
\end{lem}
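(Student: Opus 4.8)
The plan is to prove the three claimed isomorphisms essentially by the standard technique for decomposing higher-order tangent bundles of vector bundles via an affine connection, which is the same mechanism already invoked in Lemma \ref{l1} (following Ref.\ \cite{cmr0}). The key idea is that an affine connection on a vector bundle $\Pi:V\to Y$ lets one trivialize the $k$-th order tangent bundle along the fibre directions by repeatedly taking covariant derivatives. First I would construct the map for the single-bundle case $T^{(n)}V\simeq T^{(n)}Y\times_Y(n+1)V$: given a curve $v:(-\varepsilon,\varepsilon)\to V$ representing an element $[v]^{(n)}\in T^{(n)}_{v(0)}V$, with base curve $y=\Pi\circ v$, I send it to
\[
[v]^{(n)}\;\longmapsto\;\left([y]^{(n)},\,v(0),\,\tfrac{Dv}{Dt}(0),\,\dots,\,\tfrac{D^{n}v}{Dt^{n}}(0)\right),
\]
where $D/Dt$ is the covariant derivative along $y$ induced by the connection. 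The heart of the argument is checking this is well defined on the $n$-th order equivalence classes: two curves agree to order $n$ iff their base curves agree to order $n$ and all covariant derivatives up to order $n$ agree at $t=0$. This follows because in a local frame the covariant derivative is a first-order differential operator whose top-order term is $d/dt$, so an inductive argument on the order shows that the data $(y^{(n)},v,Dv/Dt,\dots,D^nv/Dt^n)$ determines and is determined by the $n$-jet of $v$.

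Next I would derive the two Whitney-sum versions. For the first, $T^{(n)}(V\oplus W)\simeq T^{(n)}Y\times_Y(n+1)V\times_Y(n+1)W$, I would apply the single-bundle isomorphism to the vector bundle $V\oplus W\to Y$ equipped with the direct-sum affine connection, and then observe that covariant differentiation in $V\oplus W$ splits component-wise: $\frac{D^i}{Dt^i}(v,w)=\bigl(\frac{D^iv}{Dt^i},\frac{D^iw}{Dt^i}\bigr)$. Hence the $(n+1)(V\oplus W)$ factor regroups canonically as $(n+1)V\times_Y(n+1)W$. The second version, $T^{(n)}(V\oplus W)\simeq T^{(n)}V\times_Y(n+1)W$, follows by composing: apply the first version to split off the $W$-data, then reassemble the factor $T^{(n)}Y\times_Y(n+1)V$ back into $T^{(n)}V$ using the inverse of the single-bundle isomorphism for $V$ alone. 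All three maps are fibre-preserving, fibrewise linear in the appropriate slots, and smooth, with smooth inverses given by reconstructing a representative curve from the jet-and-derivative data.

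\emph{The main obstacle} will be verifying well-definedness and smoothness of the inverse in the single-bundle case, i.e.\ that from the tuple $(y^{(n)},v_0,w_1,\dots,w_n)$ one can smoothly produce a curve in $V$ whose base is a chosen representative of $[y]^{(n)}$ and whose covariant derivatives at $0$ are the prescribed $w_i$. Concretely, in a local trivialization the covariant derivative reads $\frac{Dv}{Dt}=\dot v+\Gamma(\dot y)v$ with $\Gamma$ the connection coefficients, so recovering $\dot v(0),\ddot v(0),\dots$ from the covariant derivatives requires inverting a triangular system whose entries are polynomial in the Christoffel symbols and their derivatives along $y$; these expressions are smooth but somewhat involved. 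I would handle this inductively and note that independence of the local frame is guaranteed precisely because the covariant-derivative data is connection-defined and hence intrinsic. I would not grind through the explicit polynomial recursion, since it is routine once the inductive scheme is set up and parallels the construction already used in Lemma \ref{l1}.
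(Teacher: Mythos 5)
Your proposal is correct and follows essentially the same route as the paper: the paper's proof simply exhibits the map $\left[\Gamma\right]^{(n)}\mapsto\bigl(\left[\Pi\circ\Gamma\right]^{(n)},\oplus_{i=0}^{n}\tfrac{D^{i}\Gamma}{Dt^{i}}(0)\bigr)$, notes that the first statement suffices, and leaves the remaining details to the reader. Your write-up supplies exactly those omitted details (well-definedness on $n$-jets, the triangular recursion for the inverse, and the componentwise splitting of the direct-sum connection for the Whitney-sum versions), so it is a faithful, more complete version of the paper's argument.
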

\begin{proof}
It is enough to show the first statement. Given a curve $\Gamma:\left(-\varepsilon,\varepsilon\right)\rightarrow V$,
a possible isomorphism is given by the assignment 
\[
\left[\Gamma\right]^{\left(n\right)}\mapsto\left(\left[\Pi\circ\Gamma\right]^{\left(n\right)},\oplus_{i=0}^{n}\frac{D^{i}\Gamma}{Dt^{i}}\left(0\right)\right).
\]
The details are left to the reader. 
\end{proof}
Using above identifications, we can prove the wanted isomorphism. 
\begin{thm}
Any principal connection $A$ on $\pi$ gives rise to an isomorphism
between the fiber bundles 
\[
T^{(n)}T^{*}Q/G\quad\text{and}\quad T^{(n)}T^{*}\mathcal{X}\times_{\mathcal{X}}\left[(n+1)\widetilde{\mathfrak{g}}^{*}\oplus n\widetilde{\mathfrak{g}}\right].
\]
\end{thm}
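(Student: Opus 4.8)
The plan is to obtain the desired isomorphism as a composition of the bundle isomorphisms already established, keeping careful track of the base space over which each fibered product is formed. Throughout, $\simeq$ denotes a fiber bundle isomorphism over $\mathcal{X}$, and I use the affine connections $\nabla^{\mathcal{X}}$ on $\mathcal{X}$ (hence on $T^*\mathcal{X}$ by duality) together with $\nabla^A$ on $\widetilde{\mathfrak{g}}$ and $\widetilde{\mathfrak{g}}^*$ whenever an auxiliary connection is required.

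First I would apply Lemma \ref{l1} to the principal connection $\hat{A}=A\circ\pi_{Q*}$ on $\hat{p}:T^*Q\rightarrow T^*Q/G$, obtaining
\[
T^{(n)}T^*Q/G\;\simeq\;T^{(n)}(T^*Q/G)\oplus_{T^*Q/G}n\hat{\mathfrak{g}}.
\]
Next, Corollary \ref{coro1} (taken with $m=n$) converts the copies of the adjoint bundle $\hat{\mathfrak{g}}$ of $\hat{p}$ into copies of $\widetilde{\mathfrak{g}}$ while re-basing the Whitney sum from $T^*Q/G$ to $\mathcal{X}$:
\[
T^{(n)}(T^*Q/G)\oplus_{T^*Q/G}n\hat{\mathfrak{g}}\;\simeq\;T^{(n)}(T^*Q/G)\oplus_{\mathcal{X}}n\widetilde{\mathfrak{g}}.
\]

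Now I would attack the remaining factor $T^{(n)}(T^*Q/G)$. The Atiyah isomorphism $\hat{\alpha}_A:T^*Q/G\rightarrow T^*\mathcal{X}\oplus\widetilde{\mathfrak{g}}^*$ of Section \ref{gnhc} is a vector bundle isomorphism over $\mathcal{X}$; applying the functor $T^{(n)}$ to it yields a fiber bundle isomorphism
\[
(\hat{\alpha}_A)^{(n)}:T^{(n)}(T^*Q/G)\;\longrightarrow\;T^{(n)}(T^*\mathcal{X}\oplus\widetilde{\mathfrak{g}}^*).
\]
To the target I would then apply the third form of Lemma \ref{l2}, with $V=T^*\mathcal{X}$ and $W=\widetilde{\mathfrak{g}}^*$ over $Y=\mathcal{X}$, giving
\[
T^{(n)}(T^*\mathcal{X}\oplus\widetilde{\mathfrak{g}}^*)\;\simeq\;T^{(n)}T^*\mathcal{X}\times_{\mathcal{X}}(n+1)\widetilde{\mathfrak{g}}^*.
\]

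Composing all of these, and using that for vector bundles over $\mathcal{X}$ the fibered product coincides with the Whitney sum, I arrive at
\[
T^{(n)}T^*Q/G\;\simeq\;\left[T^{(n)}T^*\mathcal{X}\times_{\mathcal{X}}(n+1)\widetilde{\mathfrak{g}}^*\right]\times_{\mathcal{X}}n\widetilde{\mathfrak{g}}\;=\;T^{(n)}T^*\mathcal{X}\times_{\mathcal{X}}\left[(n+1)\widetilde{\mathfrak{g}}^*\oplus n\widetilde{\mathfrak{g}}\right],
\]
which is the asserted isomorphism. I do not expect any deep obstacle, since the statement is essentially a bookkeeping composition of the preceding lemmas; the one point requiring genuine care is the compatibility of base spaces at each juncture. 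In particular I would check that $(\hat{\alpha}_A)^{(n)}$ covers $id_{\mathcal{X}}$ (which holds because $\hat{\alpha}_A$ does, by functoriality of the $n$-lift), and that the fibered product appearing after Corollary \ref{coro1} is formed via the projection $[\pi_Q]:T^*Q/G\rightarrow\mathcal{X}$, so that the final reassembly over $\mathcal{X}$ is legitimate.
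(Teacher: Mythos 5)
Your proposal is correct and follows essentially the same route as the paper's proof: Lemma \ref{l1} applied to $\hat{A}$, then Corollary \ref{coro1} to pass from $n\hat{\mathfrak{g}}$ to $n\widetilde{\mathfrak{g}}$, then the $n$-lift of $\hat{\alpha}_A$ combined with the third isomorphism of Lemma \ref{l2} to decompose $T^{(n)}(T^*Q/G)$ as $T^{(n)}T^*\mathcal{X}\times_{\mathcal{X}}(n+1)\widetilde{\mathfrak{g}}^*$, and finally composition. Your extra attention to the base spaces of the fibered products and to the fact that $(\hat{\alpha}_A)^{(n)}$ covers $id_{\mathcal{X}}$ is a welcome clarification of points the paper leaves implicit.
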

\begin{proof}
Combining Lemma \ref{l1} and Corollary \ref{coro1}, for any $n\in\mathbb{N}$,
we have the fiber bundle isomorphism 
\[
(Id_{T^{(n)}\left(T^{*}Q/G\right)}\times n\mathfrak{p})\circ\alpha_{\hat{A}}^{(n)}:\left(T^{(n)}T^{*}Q\right)/G\rightarrow T^{(n)}\left(T^{*}Q/G\right)\oplus_{\mathcal{}}n\widetilde{\mathfrak{g}}.
\]
On the other hand, using the $n$-lift 
\[
\left(\hat{\alpha}_{A}\right){}^{(n)}:T^{(n)}\left(T^{*}Q/G\right)\rightarrow T^{(n)}\left(T^{*}\mathcal{X}\oplus\widetilde{\mathfrak{g}}^{*}\right)
\]
of $\hat{\alpha}_{A}:T^{*}Q/G\rightarrow T^{*}\mathcal{X}\oplus\widetilde{\mathfrak{g}}^{*}$
and the third equation in Lemma \ref{l2}, it is immediate that fixing
an affine connection on $T^{*}\mathcal{X}\oplus\widetilde{\mathfrak{g}}^{*}$
we can construct an isomorphism between $T^{(n)}\left(T^{*}Q/G\right)$
and 
\[
T^{(n)}T^{*}\mathcal{X}\times_{\mathcal{X}}(n+1)\,\widetilde{\mathfrak{g}}^{*}.
\]
Composing the above mentioned isomorphisms, the theorem follows. 
\end{proof}
\begin{rem}
Given a curve $\Gamma:[t_{1},t_{2}]\rightarrow T^{*}Q$, consider
as in Section \ref{RPGNCC} the curve $\zeta:[t_{1},t_{2}]\rightarrow T^{*}\mathcal{X}\oplus\widetilde{\mathfrak{g}}^{*}$,
given by 
\[
\zeta(t)\coloneqq\hat{\alpha}_{A}\circ\hat{p}(\Gamma(t))\eqqcolon y(t)\oplus\bar{\mu}(t).
\]
Consider also the curves 
\[
\gamma(t)=\pi_{Q}(\Gamma(t))\quad\text{and}\quad x(t)=\pi(\gamma(t)),
\]
and 
\[
\dot{x}(t)\oplus\bar{v}(t)\coloneqq\alpha_{A}([\gamma'(t)]).
\]
Also, consider on $T^{*}\mathcal{X}\oplus\widetilde{\mathfrak{g}}^{*}$
an affine connection $\nabla=\nabla^{\mathcal{X}}\oplus\nabla^{A}$.

It is worth mentioning that the covariant derivative of a curve on
$\widetilde{\mathfrak{g}}^{*}$ with respect to the affine connection
$\nabla^{A}$ coincides, by definition, with the covariant derivative
with respect to the principal connection $A$.

Then, the isomorphism constructed in the proof of the last theorem
is given by 
\[
[\Gamma^{(n)}(t)]\mapsto\left(y^{(n)}(t),\oplus_{i=0}^{n}\frac{D^{i}\bar{\mu}}{Dt^{i}}(t),\oplus_{i=0}^{n-1}\frac{D^{i}\bar{v}}{Dt^{i}}(t)\right),
\]
where $\frac{D^{i}\bar{\mu}}{Dt^{i}}(t)$ (resp. $\frac{D^{i}\bar{v}}{Dt^{i}}(t)$)
denote the $i$-th covariant derivative of curves on $\mathfrak{\widetilde{g}^{*}}$
(resp. $\mathfrak{\widetilde{g}}$) with respect to the affine connection
$\nabla^{A}$ (resp. the principal connection $A$).
\end{rem}
Now, we are able to write down the desired equations. 
\begin{thm}
Let $\left(H,\mathcal{P},C_{V}\right)$ be a $G$-invariant Hamiltonian
{\small{}{}HOCS} and let us denote by $A^{\bullet}$ its associated
higher order cotangent $l$-connection and $A$ an arbitrary principal
connection on $\pi$. A curve $\Gamma:\left[t_{1},t_{2}\right]\rightarrow T^{*}Q$
is a trajectory of $\left(H,\mathcal{P},C_{V}\right)$ if and only
if the curve 
\[
\varsigma:\left[t_{1},t_{2}\right]\rightarrow T^{*}\mathcal{X}\oplus\widetilde{\mathfrak{g}}^{*},
\]
given by 
\[
\varsigma\left(t\right)=\hat{\alpha}_{A}\circ\hat{p}\left({\Gamma}\left(t\right)\right)=y\left(t\right)\oplus\bar{\mu}\left(t\right),
\]
satisfies 
\[
\left(y^{(k)}(t),\oplus_{i=0}^{k}\frac{D^{i}\bar{\mu}}{Dt^{i}}(t),\oplus_{i=0}^{k-1}\frac{D^{i}}{Dt^{i}}\left(\frac{\partial h}{\partial\bar{\mu}}\left(\varsigma\left(t\right)\right)\right)\right)\in\mathfrak{P},
\]
\textbf{the Higher Order HdP Horizontal Equations} 
\begin{align*}
 & \left\langle \frac{Dy}{Dt}\left(t\right)+\frac{\partial h}{\partial x}\left(\varsigma\left(t\right)\right)+\left\langle \bar{\mu}\left(t\right),\widetilde{B}\left(\frac{\partial h}{\partial y}\left(\varsigma\left(t\right)\right),\cdot\right)\right\rangle ,\delta x^{\bullet}\left(t\right)\right\rangle \nonumber \\
 & +\left\langle \left(\varphi^{\mathfrak{c}(t)}\right)^{*}\left(\frac{D\bar{\mu}}{Dt}\left(t\right)-\ad_{\frac{\partial h}{\partial\bar{\mu}}\left(\varsigma\left(t\right)\right)}^{\ast}\bar{\mu}\left(t\right)\right),\delta x^{\bullet}\left(t\right)\right\rangle =0\label{hdph}
\end{align*}
and \textbf{the Higher Order HdP Vertical Equations} 
\begin{equation*}
\left\langle \frac{D\bar{\mu}}{Dt}\left(t\right)-\ad_{\frac{\partial h}{\partial\bar{\mu}}\left(\varsigma\left(t\right)\right)}^{\ast}\bar{\mu}\left(t\right),\bar{\eta}^{\bullet}\left(t\right)\right\rangle =0\label{hdpv}
\end{equation*}
for all curves 
\[
\delta x^{\bullet}:\left[t_{1},t_{2}\right]\rightarrow T\mathcal{X}\quad\text{and}\quad\bar{\eta}^{\bullet}:\left[t_{1},t_{2}\right]\rightarrow\widetilde{\mathfrak{g}}
\]
satisfying 
\[
\delta x^{\bullet}\left(t\right)\in\mathfrak{C}_{V}^{\text{\textrm{hor}}}\left(\mathfrak{c}\left(t\right)\right)\quad\text{and}\quad\bar{\eta}^{\bullet}\left(t\right)\in\mathfrak{C}_{V}^{\text{\textrm{ver}}}\left(\mathfrak{c}\left(t\right)\right)
\]
where 
\[
\mathfrak{c}\left(t\right)=\left(y^{(l)}(t),\oplus_{i=0}^{l}\frac{D^{i}\bar{\mu}}{Dt^{i}}(t),\oplus_{i=0}^{l-1}\frac{D^{i}}{Dt^{i}}\left(\frac{\partial h}{\partial\bar{\mu}}\left(\varsigma\left(t\right)\right)\right)\right),
\]
and the base curve $x(t)$ satisfies 
\[
x'\left(t\right)=\frac{\partial h}{\partial y}\left(\varsigma\left(t\right)\right).
\]
\end{thm}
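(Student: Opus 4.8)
The plan is to establish the equivalence by reducing it to the characterization already available for an \emph{arbitrary} principal connection and then performing the horizontal/vertical splitting exactly as in the two-connection procedure of Section \ref{alt}. I would start from the Proposition at the beginning of Section \ref{RPGNCC}, which states that $\Gamma$ is a trajectory of $\left(H,\mathcal{P},C_V\right)$ if and only if $\hat{p}_k\left(\Gamma^{(k-1)}(t)\right)\in\mathfrak{P}$, the base curve satisfies $x'(t)=\frac{\partial h}{\partial y}\left(\varsigma(t)\right)$, and the single reduced identity
\[
\left\langle \mathcal{A}_1(t),\delta x(t)\right\rangle+\left\langle \mathcal{A}_2(t),\bar{\eta}(t)\right\rangle=0
\]
holds for every variation with $\delta x(t)\oplus\bar{\eta}(t)\in\mathfrak{C}_V\left(\hat{p}_l\left(\Gamma^{(l-1)}(t)\right)\right)$, where $\mathcal{A}_1(t)=\frac{Dy}{Dt}(t)+\frac{\partial h}{\partial x}\left(\varsigma(t)\right)+\left\langle\bar{\mu}(t),\widetilde{B}\left(\frac{\partial h}{\partial y}\left(\varsigma(t)\right),\cdot\right)\right\rangle$ and $\mathcal{A}_2(t)=\frac{D\bar{\mu}}{Dt}(t)-\ad^{*}_{\frac{\partial h}{\partial\bar{\mu}}\left(\varsigma(t)\right)}\bar{\mu}(t)$. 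Since that Proposition is itself an equivalence, it suffices to show that its conditions are term by term equivalent to the kinematic constraint and the Horizontal and Vertical equations of the statement.

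Next I would remove the coupling between $\delta x$ and $\bar{\eta}$ caused by the arbitrary connection $A$. Setting $\zeta(t)=\Gamma^{(l-1)}(t)$ and using the maps $\varphi^{[\zeta]}$, the membership $\delta\gamma(t)\in C_V\left(\zeta(t)\right)$ becomes $\delta x(t)=\delta x^{\bullet}(t)$ and $\bar{\eta}(t)=\varphi^{[\zeta(t)]}\left(\delta x^{\bullet}(t)\right)+\bar{\eta}^{\bullet}(t)$, with $\delta x^{\bullet}(t)\in\mathfrak{C}_V^{\mathrm{hor}}\left([\zeta(t)]\right)$ and $\bar{\eta}^{\bullet}(t)\in\mathfrak{C}_V^{\mathrm{ver}}\left([\zeta(t)]\right)$. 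Substituting and gathering the $\delta x^{\bullet}$ terms gives
\[
\left\langle \mathcal{A}_1(t)+\left(\varphi^{[\zeta(t)]}\right)^{*}\mathcal{A}_2(t),\delta x^{\bullet}(t)\right\rangle+\left\langle \mathcal{A}_2(t),\bar{\eta}^{\bullet}(t)\right\rangle=0 .
\]
By Proposition \ref{decoh} the higher order cotangent connection $A^{\bullet}$ splits each fibre as $\mathfrak{C}_V\left([\zeta(t)]\right)=\mathfrak{C}_V^{\mathrm{hor}}\left([\zeta(t)]\right)\oplus\mathfrak{C}_V^{\mathrm{ver}}\left([\zeta(t)]\right)$, so $\delta x^{\bullet}$ and $\bar{\eta}^{\bullet}$ vary independently over the two summands. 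Choosing first $\bar{\eta}^{\bullet}=0$ and then $\delta x^{\bullet}=0$ decouples the displayed identity into precisely the stated Horizontal and Vertical Higher Order HdP equations.

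It remains to rewrite the base point $[\zeta(t)]=\hat{p}_l\left(\Gamma^{(l-1)}(t)\right)$ and the kinematic condition $\hat{p}_k\left(\Gamma^{(k-1)}(t)\right)\in\mathfrak{P}$ in terms of $\varsigma$ and its covariant derivatives. For this I would invoke the bundle isomorphism $T^{(n)}T^{*}Q/G\cong T^{(n)}T^{*}\mathcal{X}\times_{\mathcal{X}}\left[(n+1)\widetilde{\mathfrak{g}}^{*}\oplus n\widetilde{\mathfrak{g}}\right]$ together with the explicit formula recorded in the Remark following that theorem, which sends $[\Gamma^{(n)}(t)]$ to the jet $y^{(n)}(t)$ and the covariant derivatives $\oplus_{i=0}^{n}\frac{D^{i}\bar{\mu}}{Dt^{i}}(t)$, $\oplus_{i=0}^{n-1}\frac{D^{i}\bar{v}}{Dt^{i}}(t)$. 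Because the reduction of $\gamma'(t)=\mathbb{F}H\left(\Gamma(t)\right)$ (Theorem \ref{cangen}) forces $\bar{v}(t)=\frac{\partial h}{\partial\bar{\mu}}\left(\varsigma(t)\right)$, substituting this equality into the isomorphism turns the two membership conditions into the reduced expressions involving $\mathfrak{P}$ and $\mathfrak{c}(t)$ displayed in the statement, and identifies the fibres $\mathfrak{C}_V^{\mathrm{hor}}\left(\mathfrak{c}(t)\right)$, $\mathfrak{C}_V^{\mathrm{ver}}\left(\mathfrak{c}(t)\right)$ over which the variations range. As every step in this chain is an equivalence, both implications of the theorem follow simultaneously.

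The main difficulty is the bookkeeping of the $\zeta$-dependence: unlike the GNHS case of Section \ref{alt}, the subspaces $C_V(\zeta)$, the splitting $\mathbb{H}^{\bullet}(\zeta)\oplus\mathbb{V}_q$ and the maps $\varphi^{[\zeta]}$ all depend on the higher order jet $\zeta=\Gamma^{(l-1)}(t)$, so I must check that these pointwise data assemble into smooth objects along the trajectory and that the decomposition of $\mathfrak{C}_V$ commutes with the reduction isomorphism fibrewise at each $t$. This is exactly what the higher order cotangent $l$-connection and the isomorphism theorem were constructed to guarantee; granting them, the computation is the verbatim higher order analogue of the two-connection argument, and can be closed by appealing to Lemma 4.6 of Ref. \cite{gz} for general $l$ in place of the $l=0$ specialization used in Section \ref{alt}.
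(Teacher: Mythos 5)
Your proposal is correct and follows essentially the same route the paper indicates: it assembles the Proposition of Section 3.2 (reduction with an arbitrary connection), the variation-splitting via the maps $\varphi^{[\zeta]}$ together with the decomposition of $\mathfrak{C}_V$ given by the higher order cotangent $l$-connection, and the jet-space isomorphism with its explicit covariant-derivative formula, exactly the chain of previous results the authors invoke alongside Lemma 4.6 of the cited reference. The only substantive point you flag --- the $\zeta$-dependent bookkeeping --- is indeed what the cotangent $l$-connection construction is designed to handle, so no gap remains.
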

The theorem can be proved by combining our previous results and the
proof of Lemma 4.6 of Ref. \cite{gz}. Regarding the number of reduced
equations, we have the same as for the case of GNHSs.
\begin{rem}
For a right action, recall that we have to change the sign of ${\displaystyle {\ad_{\frac{\partial h}{\partial\bar{\mu}}\left(\varsigma\left(t\right)\right)}^{\ast}\bar{\mu}\left(t\right)}}$.
\end{rem}
\begin{rem}
The variables $x$, $y$ and $\bar{\mu}$, the submanifold $\mathfrak{P}$,
the curvature $B$ and the curve $\mathfrak{c}\left(t\right)$ are
related to $A$, while the variations $\delta x^{\bullet}$ and $\bar{\eta}^{\bullet}$,
and subspaces $\mathfrak{C}_{V}^{\text{\textrm{hor}}}\left(\mathfrak{c}\left(t\right)\right)$
and $\mathfrak{C}_{V}^{\text{\textrm{ver}}}\left(\mathfrak{c}\left(t\right)\right)$,
are related to $A^{\bullet}$. 
\end{rem}

\section{The case of trivial bundles}

\label{trivial}

In this section we study the form that reduced equations obtained
in the previous sections adopt when the configuration space $Q$ of
the system is a trivial principal bundle with structure group $G$.
In the first place we shall focus on GNHSs. At the end of the section
we briefly explain how to deal with the case of HOCSs.

\bigskip{}

We describe the reduction procedure using right actions, instead of
left ones, to emphasize that all the computations are analogous for
both kind of actions.

\bigskip{}

Given a manifold $\mathcal{X}$ and a Lie group $G$, let us consider
the product manifold $Q=\mathcal{X}\times G$ and the \textbf{right
action of} $G$ on $Q$ induced by right translation of $G$, i.e.\footnote{We denote by $L_{g}$ and $R_{g}$ the left and right translation
on the group $G$ by an element $g\in G$, respectively.} 
\[
Q\times G\rightarrow Q:\left(\left(x,h\right),g\right)\mapsto\left(x,R_{g}h\right)=\left(x,hg\right).
\]
They make $\pi:Q\rightarrow\mathcal{X}:\left(x,h\right)\mapsto x$
a right trivial principal fiber bundle with base $\mathcal{X}$ and
structure group $G$. For the lifted actions we shall use the notation
\[
TQ\times G\rightarrow TQ:\left(\left(x,h,\dot{x},\dot{h}\right),g\right)\mapsto\left(x,hg,\dot{x},\dot{h}g\right)
\]
and 
\[
T^{*}Q\times G\rightarrow T^{*}Q:\left(\left(x,h,y,\sigma\right),g\right)\mapsto\left(x,hg,y,\sigma g\right).
\]
Then, a principal connection $A:TQ\rightarrow\mathfrak{g}$ is given
by 
\begin{equation}
A\left(x,h,\dot{x},\dot{h}\right)=\Ad_{h^{-1}}(\mathcal{A}(x)\dot{x})+h^{-1}\dot{h}\label{aadh}
\end{equation}
where $\mathcal{A}$ is a $\mathfrak{g}$-valued $1$-form on $\mathcal{X}$,
i.e. $\mathcal{A}:\mathcal{X}\rightarrow T^{\ast}\mathcal{X}\otimes\mathfrak{g}$,
and it is given by the formula $\mathcal{A}\left(x\right)\dot{x}=A\left(x,e,\dot{x},0\right)$.
We shall say that the connection $A$ is the \textbf{trivial connection}
if $\mathcal{A}\left(x\right)=0$ for all $x\in\mathcal{X}$.

\bigskip{}
 Let us enumerate some identifications that we can do for these bundles. 
\begin{itemize}
\item The adjoint bundle $\widetilde{\mathfrak{g}}$ can be identified with
$\mathcal{X}\times\mathfrak{g}$ using the map 
\[
\widetilde{\mathfrak{g}}\rightarrow\mathcal{X}\times\mathfrak{g}:\left[\left(x,h\right),\xi\right]\longmapsto\left(x,\Ad_{h^{-1}}\xi\right),
\]
with inverse 
\[
\left(x,\xi\right)\longmapsto\left[\left(x,e\right),\xi\right].
\]
Analogously, $\widetilde{\mathfrak{g}}^{\ast}$ and $\mathcal{X}\times\mathfrak{g}^{\ast}$
can be identified by 
\[
\widetilde{\mathfrak{g}}^{\ast}\rightarrow\mathcal{X}\times\mathfrak{g}^{\ast}:\left[\left(x,h\right),\mu\right]\longmapsto\left(x,\Ad_{h}^{\ast}\mu\right),
\]
with inverse 
\[
\left(x,\mu\right)\longmapsto\left[\left(x,e\right),\mu\right].
\]
\item Using the above identifications, $T\mathcal{X}\oplus\widetilde{\mathfrak{g}}$
and $T^{\ast}\mathcal{X}\oplus\widetilde{\mathfrak{g}}^{\ast}$ are
naturally identified with $T\mathcal{X}\times\mathfrak{g}$ and $T^{\ast}\mathcal{X}\times\mathfrak{g}^{\ast}$,
respectively. As a consequence, $\alpha_{A}$ can be seen as the map
$\alpha_{A}:TQ/G\rightarrow T\mathcal{X}\times\mathfrak{g}$, given
by {[}recall \eqref{pma} and \eqref{aadh}{]} 
\[
\alpha_{A}\circ p\left(x,h,\dot{x},\dot{h}\right)=\left(\left(x,\dot{x}\right),\mathcal{A}\left(x\right)\dot{x}+h^{-1}\dot{h}\right).
\]
Then, the isomorphism $\hat{\alpha}_{A}:T^{*}Q/G\rightarrow T^{*}\mathcal{X}\times\mathfrak{g}^{*}$
is given by the formula (see Remark \ref{momentum}) 
\[
\hat{\alpha}_{A}\circ\hat{p}(x,h,y,\sigma)=\left(\left(x,y-\left(\mathcal{A}\left(x\right)\right)^{*}\left(\Ad_{h}^{*}(\sigma)\right)\right),\Ad_{h}^{*}{\bf J}\left(x,h,y,\sigma\right)\right).
\]
\item The curvature $B$ of the principal connection $A$ can be written
as 
\[
B\left(\left(x,e,\dot{x},0\right),\left(x,e,\delta x,0\right)\right)=d\mathcal{A}\left(\left(x,\dot{x}\right),\left(x,\delta x\right)\right)-\left[\mathcal{A}\left(x\right)\dot{x},\mathcal{A}\left(x\right)\delta x\right].
\]
Thus, from the very definition of $\widetilde{B}$ {[}see \eqref{bt}{]},
and identifying $\widetilde{\mathfrak{g}}$ and $\mathcal{X}\times\mathfrak{g}$,
we have that 
\begin{align}
\widetilde{B}\left(\left(x,\dot{x}\right),\left(x,\delta x\right)\right) & =\left(x,B\left(\left(x,e,\dot{x},0\right),\left(x,e,\delta x,0\right)\right)\right)\nonumber \\
 & =\left(x,d\mathcal{A}\left(\left(x,\dot{x}\right),\left(x,\delta x\right)\right)-\left[\mathcal{A}\left(x\right)\dot{x},\mathcal{A}\left(x\right)\delta x\right]\right).\label{btil}
\end{align}
\item The reduced hamiltonian $h$ can be seen as a map $h:T^{\ast}\mathcal{X}\times\mathfrak{g}^{\ast}\rightarrow\mathbb{R}$,
and its fiber and base derivatives as maps 
\[
\frac{\partial h}{\partial y}:T^{\ast}\mathcal{X}\times\mathfrak{g}^{\ast}\rightarrow T\mathcal{X}\ ,\quad\frac{\partial h}{\partial\bar{\mu}}:T^{\ast}\mathcal{X}\times\mathfrak{g}^{\ast}\rightarrow\mathcal{X}\times\mathfrak{g}
\]
and 
\[
\frac{\partial h}{\partial x}:T^{\ast}\mathcal{X}\times\mathfrak{g}^{\ast}\rightarrow T^{\ast}\mathcal{X},
\]
respectively. Following these observations, given $\left(x_{0},y_{0},\mu_{0}\right)\in T^{\ast}\mathcal{X}\times\mathfrak{g}^{\ast}$,
the element $\partial h/\partial\bar{\mu}\left(x_{0},y_{0},\mu_{0}\right)\in\mathcal{X}\times\mathfrak{g}$
is, essentially, the partial derivative of $h$ w.r.t. the vector
space variable $\mu\in\mathfrak{g}^{\ast}$. More precisely, we can
write 
\[
\frac{\partial h}{\partial\bar{\mu}}\left(x_{0},y_{0},\mu_{0}\right)=\left(x_{0},\frac{\partial h}{\partial\mu}\left(x_{0},y_{0},\mu_{0}\right)\right),
\]
where the second component is the usual derivative of $h(x_{0},y_{0},\mu)$
at $\mu_{0}$ as a function between the vector spaces $\mathfrak{g}^{*}$
and $\mathfrak{g}$. In these terms, given a curve $\varsigma\left(t\right)=\left(x\left(t\right),y\left(t\right),\mu\left(t\right)\right)$,
denoting by $\bar{\mu}\left(t\right)\in\widetilde{\mathfrak{g}}^{\ast}$
the curve $\left(x\left(t\right),\mu\left(t\right)\right)\in\mathcal{X}\times\mathfrak{g}^{\ast}$,
we have that 
\begin{equation}
\ad_{\frac{\partial h}{\partial\bar{\mu}}\left(\varsigma\left(t\right)\right)}^{\ast}\bar{\mu}\left(t\right)=\left(x(t),\ad_{\frac{\partial h}{\partial\mu}\left(\varsigma\left(t\right)\right)}^{\ast}\mu\left(t\right)\right),\label{addh}
\end{equation}
where the second $\ad^{*}$ is the usual coadjoint action of $\mathfrak{g}$
on $\mathfrak{g}^{*}$. Under the same identifications, the base derivative
of $h$ can be seen as a map 
\[
\frac{\partial h}{\partial x}:T^{*}\mathcal{X}\times\mathfrak{g}^{*}\rightarrow T^{*}\mathcal{X}.
\]
Recall that the latter is defined by an affine connection $\nabla$
on $T\mathcal{X}\oplus\widetilde{\mathfrak{g}}$ given as a sum $\nabla=\nabla^{\mathcal{X}}\oplus\nabla^{A}$. 
\item We can see the map $\varphi:T\mathcal{X}\rightarrow\widetilde{\mathfrak{g}}$,
defined in Section \ref{tc}, as a function $\varphi:T\mathcal{X}\rightarrow\mathcal{X}\times\mathfrak{g}$.
Suppose that $\varphi$ is related to the trivial connection $A$
and to the nonholonomic connection $A^{\bullet}$, and denote by $\mathcal{A}^{\bullet}$
to the $1$-form related to $A^{\bullet}$. In Ref. \cite{gz}, it
was proved that 
\begin{equation}
\varphi\left(x,v\right)=\left(x,-\mathcal{A}^{\bullet}\left(x\right)v\right).\label{fia}
\end{equation}
\item In the same Reference, the covariant derivative of a curve on $\widetilde{\mathfrak{g}}$
corresponding to $\nabla^{A}$ was calculated. From that, we can easily
write the covariant derivative of a curve $\bar{\mu}(t)\in\widetilde{\mathfrak{g}}^{*}$
(w.r.t. the affine connection dual to $\nabla^{A}$), as 
\begin{equation}
\frac{D\bar{\mu}}{Dt}\left(t\right)=\frac{D\left(x\left(t\right),\mu\left(t\right)\right)}{Dt}=\left(x\left(t\right),\mu'\left(t\right)+\ad_{\mathcal{A}\left(x\left(t\right)\right)x'\left(t\right)}^{\ast}\,\mu\left(t\right)\right).\label{cdgd}
\end{equation}
\item In addition, 
\begin{equation}
\left\langle \frac{\partial h}{\partial x},\delta x\right\rangle =\left\langle \frac{\partial^{c}h}{\partial x},\delta x\right\rangle +\left\langle \frac{\partial h}{\partial\mu},\ad_{\mathcal{A}\left(x\right)\delta x}^{*}\mu\right\rangle ,\label{bdc2}
\end{equation}
where $\partial^{c}h/\partial x$ is the base derivative of $h$ with
fixed $\mu$ and with respect to $\nabla^{\mathcal{X}}$. 
\end{itemize}
In the following we shall write the reduced equations (using one and
two connections) in the case of a general trivial bundle and then
we will consider some useful particular situations.

\paragraph{Case 1: General case.}

Suppose now that $A\neq A^{\bullet}$. Based on above observations,
the horizontal reduced equations for a trivial principal bundle are
{[}recall Eqs. \eqref{addh}, \eqref{cdgd} and \eqref{bdc2}{]}\footnote{For simplicity, we are omitting the dependency on $t$ and $\varsigma(t)$.}
\[
\left\langle \frac{Dy}{Dt}+\frac{\partial^{c}h}{\partial x}+\varphi^{*}\left(\mu'+\ad_{\mathcal{A}\left(x\right)x'}^{\ast}\mu-\ad_{\frac{\partial h}{\partial\mu}}^{\ast}\mu\right),\delta x^{\bullet}\right\rangle +\left\langle \mu,\widetilde{B}\left(\frac{\partial h}{\partial y},\delta x^{\bullet}\right)+\ad_{\mathcal{A}\left(x\right)\delta x^{\bullet}}\frac{\partial h}{\partial\mu}\right\rangle =0
\]
while the vertical reduced equations read 
\[
\left\langle \mu'+\ad_{\mathcal{A}\left(x\right)x'}^{\ast}\mu-\ad_{\frac{\partial h}{\partial\mu}}^{\ast}\mu,\eta^{\bullet}\right\rangle =0,
\]
where $\eta^{\bullet}$ is seen as a curve on $\mathfrak{g}$, and
$\varphi$ and $\widetilde{B}$ as maps taking values in $\mathfrak{g}$,
rather than $\widetilde{\mathfrak{g}}$ {[}recall Eqs. \eqref{fia}
and \eqref{btil}{]}. If we choose to work with only one connection,
i.e. we take $A=A^{\bullet}$ (and accordingly $\varphi=0$), then
the equations take the form 
\[
\left\langle \frac{Dy}{Dt}+\frac{\partial^{c}h}{\partial x},\delta x\right\rangle +\left\langle \mu,\widetilde{B}\left(\frac{\partial h}{\partial y},\delta x\right)+\ad_{\mathcal{A}\left(x\right)\delta x}\frac{\partial h}{\partial\mu}\right\rangle =0
\]
and 
\[
\left\langle \mu'+\ad_{\mathcal{A}\left(x\right)x'}^{\ast}\mu-\ad_{\frac{\partial h}{\partial\mu}}^{\ast}\mu,\eta\right\rangle =0.
\]

\begin{rem}
\label{notat} In the above expression we are omitting the dot $\bullet$,
since we have only one connection and we do not need to make any distinction
(as in Section \ref{reducedextremal}). 
\end{rem}

\paragraph{Case 2: Choosing $A$ as the trivial connection.}

Assume now that we choose $A$ as the trivial connection on $\mathcal{X}\times G$.
Hence $\mathcal{A}=0$, which implies that $\widetilde{B}=0$, and
consequently the reduced equations read 
\begin{equation}
\left\langle \frac{Dy}{Dt}+\frac{\partial^{c}h}{\partial x}+\varphi^{*}\left(\mu'-\ad_{\frac{\partial h}{\partial\mu}}^{\ast}\mu\right),\delta x^{\bullet}\right\rangle =0\label{redhoreqC2}
\end{equation}
and 
\begin{equation}
\left\langle \mu'-\ad_{\frac{\partial h}{\partial\mu}}^{\ast}\mu,\eta^{\bullet}\right\rangle =0,\label{redvereqC2}
\end{equation}
If in addition we use Eq. \eqref{fia}, the horizontal equations can
be written 
\[
\left\langle \frac{Dy}{Dt}+\frac{\partial^{c}h}{\partial x},\delta x^{\bullet}\right\rangle -\left\langle \mu'-\ad_{\frac{\partial h}{\partial\mu}}^{\ast}\mu,\mathcal{A}^{\bullet}\left(x\right)\delta x^{\bullet}\right\rangle =0.
\]
We emphasize that this last simplification cannot be done in the one-connection-approach,
because $A^{\bullet}$ does not necessarily coincide with the trivial
connection.

\paragraph{Case 3: $T^{*}\mathcal{X}$ is a trivial bundle and $A$ is again
the trivial connection.}

If $T^{*}\mathcal{X}$ is trivial, then $\partial h/\partial y$ can
be seen as a partial derivative in a linear space. In addition, if
we choose $\nabla^{\mathcal{X}}$ as the trivial affine connection,
then the covariant derivative is a standard derivative of a vector
variable with respect to $t$, i.e. 
\[
\frac{Dy}{Dt}=y'.
\]
On the other hand, $\partial^{c}h/\partial x$ is also a standard
partial derivative: $\partial h/\partial x$. Therefore, the reduced
equations in the two-connection-approach get simplified as 
\[
\left\langle y'+\frac{\partial h}{\partial x},\delta x^{\bullet}\right\rangle -\left\langle \mu'-\ad_{\frac{\partial h}{\partial\mu}}^{\ast}\mu,\mathcal{A}^{\bullet}\left(x\right)\delta x^{\bullet}\right\rangle =0
\]
and 
\[
\left\langle \mu'-\ad_{\frac{\partial h}{\partial\mu}}^{\ast}\mu,\eta^{\bullet}\right\rangle =0.
\]

\paragraph{The case of HOCS. }

Similar calculations can be made for HOCSs. We just must replace the
standard connections by cotangent $l$-connections $A:T^{\left(l\right)}T^{*}Q\times_{Q}TQ\rightarrow\mathfrak{g}$,
which can be written 
\begin{align*}
A\left(\zeta;x,h,\dot{x},\dot{h}\right) & =\Ad_{h}\left(\mathcal{A}\left(\left[\zeta\right]\right)\,\dot{x}\right)+\dot{h}\,h^{-1},
\end{align*}
with $\mathcal{A}:\left.T^{\left(l\right)}T^{*}Q\right/G\rightarrow T^{\ast}\mathcal{X}\otimes\mathfrak{g}$
given by 
\[
\mathcal{A}\left(\left[\zeta\right]\right)\,\dot{x}=A\left(\zeta;x,e,\dot{x},0\right).
\]
Also, the map $\varphi$ must be replaced by the maps $\varphi^{\left[\zeta\right]}$.
In the case in which one of the connection is trivial and the other
is the higher-order $l$-connection $A^{\bullet}$, the maps $\varphi^{\left[\zeta\right]}$
are given by (under usual identifications) 
\[
\varphi^{\left[\zeta\right]}\left(x,\dot{x}\right)=\left(x,-\mathcal{A}^{\bullet}\left[\zeta\right]\,\dot{x}\right).
\]


\section{A ball rolling without sliding over another ball}

Let us consider now a mechanical system consisting of two balls $B_{1}$
and $B_{2}$ of radii $r_{1}$ and $r_{2}$, respectively, in the
presence of gravity (see Figure ...). Suppose that $r_{1}>r_{2}$
and that $B_{2}$ is rolling without sliding over the surface of $B_{1}$.
Assume also that the center of $B_{1}$ is fixed with respect to a
given inertial reference system, and that $B_{1}$ can freely rotate
around its center. Now, consider the following control problem: stabilize
asymptotically the smaller ball $B_{2}$ on the top of the bigger
one $B_{1}$ by making a torque on $B_{1}$. (Such a torque would
be the feedback controller). This problem can be addressed by imposing
a so-called Lyapunov constraint \cite{gym}. In such a case, the mentioned
torque is given by the related constraint force. Since we have more
constraint forces directions than constraints, the system of equations
will be underdeterminated. Here, what it is important for us is that
the original Hamiltonian system, with the nonholonomic (rolling-without-sliding)
constraint, the Lyapunov constraint and the torque direction, all
together, define a HOCS (see Ref. \cite{gmp}). Moreover, we shall
see immediately that such a HOCS can be chosen $SO(3)$-invariant
(in the sense of Section \ref{HHOCSsym}). The purpose of the present
section is to apply the reduction procedure developed previously to
this kind of HOCS. Concretely, our main aim is to find an expression
of the horizontal and vertical HdP equations for it.

\bigskip{}

To begin with, let us denote by $R$ and $I_{1}$ the rotation matrix
and the moment of inertia of $B_{1}$, respectively, and let us indicate
by $C$, $I_{2}$ and $m_{2}$ the rotation matrix, the moment of
inertia and the mass of $B_{2}$. In addition, denote by $\mathbf{e}$
the unit vector with origin in the center of $B_{1}$ and pointing
in the direction of the center of $B_{2}$.
\begin{itemize}
\item \textbf{Configuration space.}

It is clear that the configuration of the system can be described
by the triple $(R,\mathbf{e},C)$, i.e. its configuration space is
given by 
\[
Q=SO(3)\times S^{2}\times SO(3).
\]
The elements of the tangent bundle $TQ$ will be denoted $(R,\dot R,\mathbf{e},\dot{\mathbf{e}},C,\dot C)$, except when we refer to elements in $C_V$, where the notation $(R,\delta R,\mathbf{e},\delta \mathbf{e},C,\delta C)$ will be used instead.
To make computations easier, we will use the following identifications:
\begin{equation}
TSO(3)\simeq SO(3)\times\mathfrak{so}(3)\simeq SO(3)\times\mathbb{R}^{3}.\label{idtan}
\end{equation}
The first one is the well-known trivialization by right translations.
The second identification uses a Lie algebra isomorphism 
\[
\widehat{}:(\mathbb{R}^{3},\times)\rightarrow(\mathfrak{so}(3),[\cdot,\cdot])
\]
given by\footnote{We identify $\mathfrak{so}(3)$ with the set of skew-symmetric matrices of dimension 3.}
\[
\eta=(\eta^{1},\eta^{2},\eta^{3})\mapsto\widehat{\eta}=\begin{pmatrix}0 & -\eta^{3} & \eta^{2}\\
\eta^{3} & 0 & -\eta^{1}\\
-\eta^{2} & \eta^{1} & 0
\end{pmatrix}.
\]
In $\mathbb{R}^{3}$ the lie bracket is given by the cross product
$\times$ of vectors. Under these identifications, we have 
\[
(C,\dot C)\simeq(C,\dot CC^{-1})\simeq(C,\xi)\quad\text{and}\quad(R,\dot R)\simeq(R,\dot RR^{-1})\simeq(R,\eta),
\]
where $\widehat{\xi}=\dot CC^{-1}$ and $\widehat{\eta}=\dot RR^{-1}$.
As a consequence, 
\[
TQ\simeq SO(3)\times\mathbb{R}^{3}\times TS^{2}\times SO(3)\times\mathbb{R}^{3}.
\]
Since $S^{2}$ is a submanifold of the euclidean space $\mathbb{R}^{3}$,
we will sometimes see the space $T_{\mathbf{e}}S^{2}$ as a linear
subspace of $\mathbb{R}^{3}$. Analogously, the cotangent space will
be identified as 
\begin{equation}
T^{*}Q\simeq SO(3)\times\mathbb{R}^{3}\times T^{*}S^{2}\times SO(3)\times\mathbb{R}^{3}.\label{idcot}
\end{equation}
A covector at $(R,\mathbf{e},C)$ will be written $(R,\pi,\mathbf{e},\sigma,C,\gamma)\in T^{*}Q$,
with $\pi,\gamma\in\mathbb{R}^{3}$.

\bigskip{}

\item \textbf{Hamiltonian function.}

Now, let us describe the dynamics of the system. The Lagrangian $L:TQ\rightarrow\mathbb{R}$
is given by 
\[
L(R,\eta,\mathbf{e},\dot{\mathbf{e}},C,\xi)=\frac{1}{2}I_{1}\eta^{2}+\frac{1}{2}m_{2}\,\dot{\mathbf{e}}^{2}+\frac{1}{2}I_{2}\xi^{2}-m_{2}g\,\mathbf{e}\cdot\mathbf{z},
\]
where $\cdot$ and $(\cdot)^{2}$ denote the euclidean inner product
and the squared euclidean norm on $\mathbb{R}^{3}$, respectively;
$g$ is the acceleration of gravity and $\mathbf{z}=(0,0,1)$ is the
vertical unit vector pointing upwards (see Figure ...). In order to
obtain the Hamiltonian $H:T^{*}Q\rightarrow\mathbb{R}$ of the system,
we must use the Legendre transform $\mathbb{F}L:TQ\rightarrow T^{*}Q$,
given by 
\[
\mathbb{F}L(R,\eta,\mathbf{e},\dot{\mathbf{e}},C,\xi)=(R,\pi,\mathbf{e},\sigma,C,\gamma)=(R,I_{1}\eta,\mathbf{e},m_{2}\dot{\mathbf{e}},C,I_{2}\xi).
\]
It is easy to show that 
\begin{equation}
H(R,\pi,\mathbf{e},\sigma,C,\gamma)=\frac{1}{2I_{1}}\pi^{2}+\frac{1}{2m_{2}}\sigma^{2}+\frac{1}{2I_{2}}\gamma^{2}+m_{2}g\,\mathbf{e}\cdot\mathbf{z}.\label{Hs}
\end{equation}

\bigskip{}

\item \textbf{Lyapunov constraint and related constraint force.}

Given two non-negative functions $V,\mu\in C^{\infty}(T^{*}Q)$, consider
the submanifold 
\begin{equation}
\mathcal{P}^{\text{Lyap}}\coloneqq\bigcup_{\alpha\in T^{*}Q}\left\{ w\in T_{\alpha}T^{*}Q\;:\;\left<dV(\alpha),w\right>=-\mu(\alpha)\right\} \subset TT^{*}Q.\label{plyap}
\end{equation}
If $V$ is positive-definite around some point of $T^{*}Q$, according
to Ref. \cite{gym}, above submanifold defines a Lyapunov constraint.
Observe that the latter is a second order constraint, i.e. $k=2$
(see Def. \ref{def:hocs}). We shall assume that $V$ is of the form
\begin{equation}
V(R,\pi,\mathbf{e},\sigma,C,\gamma)=\frac{1}{2}(\pi^{t},\sigma^{t},\gamma^{t})\Phi(R,\mathbf{e})(\pi,\sigma,\gamma)+v(R,\mathbf{e}),\label{vlyap}
\end{equation}
where $\Phi$ is a positive-definite matrix depending smoothly on
$(R,\mathbf{e})$ and $v\in C^{\infty}(SO(3)\times S^{2})$ is nonnegative.

If we want to implement this constraint by making a torque on the
ball $B_{1}$, then the space of constraint forces and its related
variational constraints would be, respectively, 
\[
F_{V}^{\text{Lyap}}(R,\mathbf{e},C):=\mathbb{R}^{3}\times\{0\}\times\{0\}\subseteq T_{R}^{*}SO\left(3\right)\times T_{\mathbf{e}}^{*}S^{2}\times T_{C}^{*}SO\left(3\right)
\]
and {[}see Eq. \eqref{fcv}{]} 
\[
C_{V}^{\text{Lyap}}(R,\mathbf{e},C)=\left(F_{V}^{\text{Lyap}}(R,\mathbf{e},C)\right)^{\circ}=\{0\}\times T_{\mathbf{e}}S^{2}\times\mathbb{R}^{3}.
\]

\bigskip{}

\item \textbf{Rolling constraint and d'Alembert's Principle.}

The rolling constraint in the Lagrangian formulation, and using the
notation introduced above, is given by the submanifold 
\[
C_{K}^{\textrm{Rol}}\coloneqq\left\{ (R,\eta,\mathbf{e},\dot{\mathbf{e}},C,\xi)\in TQ\;:\;\dot{\mathbf{e}}=\frac{1}{r_{1}+r_{2}}\left(r_{1}\eta+r_{2}\xi\right)\times\mathbf{e}\right\} .
\]
To obtain the Hamiltonian counterpart, it is enough to perform the
Legendre transform to find 
\begin{equation}
\mathcal{D}^{\textrm{Rol}}\coloneqq\mathbb{F}L(C_{K}^{\textrm{Rol}})=\left\{ (R,\pi,\mathbf{e},\sigma,C,\gamma)\in T^{*}Q\;:\;\frac{1}{m_{2}}\sigma=\frac{1}{r_{1}+r_{2}}\left(\frac{r_{1}}{I_{1}}\pi+\frac{r_{2}}{I_{2}}\gamma\right)\times\mathbf{e}\right\} .\label{drod}
\end{equation}
Finally, assuming d'Alembert's Principle, i.e. assuming that the space
of constraint forces implementing above constraint is given by 
\[
F^{\textrm{Rol}}\coloneqq\left(C_{K}^{\textrm{Rol}}\right)^{\circ},
\]
then the set of related variational constraints will be given by $C_{V}^{\textrm{Rol}}:=C_{K}^{\textrm{Rol}}$.

\bigskip{}

\item \textbf{Resulting HOCS.}

The set defined by all the kinematic constraints is 
\[
\mathcal{P}\coloneqq\tau_{T^{*}Q}^{-1}\left(\mathcal{D}^{\textrm{Rol}}\right)\cap\mathcal{P}^{\text{Lyap}},
\]
while the set of variational ones reads 
\[
C_{V}\coloneqq C_{V}^{\textrm{Rol}}\cap C_{V}^{\text{Lyap}}=\left\{ (R,0,\mathbf{e},\delta\mathbf{e},C,\xi)\in TQ\;:\;\delta\mathbf{e}=r_{12}\left(\xi\times\mathbf{e}\right)\right\} ,
\]
where $r_{12}=\frac{r_{2}}{r_{1}+r_{2}}$. This data, together with
the Hamiltonian function \eqref{Hs}, gives rise to the HOCS $\left(H,\mathcal{P},C_{V}\right)$
(see Remarks \ref{gh} and \ref{vcv}).

\bigskip{}

\item \textbf{Symmetry group.}

We are now ready to define the symmetry of the system. Let us consider
the right action of $SO(3)$ on $Q$, $\rho:Q\times SO(3)\rightarrow Q$,
given by 
\[
\rho\left((R,\mathbf{e},C),B\right)=(R,\mathbf{e},CB).
\]
It is essentially the right translation of $SO(3)$ onto itself. Thus,
it is a free action making the map $SO(3)\times S^{2}\times SO(3)\rightarrow SO(3)\times S^{2}=\mathcal{X}$
into a trivial principal bundle with $SO(3)$ as a structure group.
It is easy to prove, using the identifications \eqref{idtan} and
\eqref{idcot}, that the lifted actions to $TQ$ and $T^{*}Q$ are
given by 
\[
\rho_{B*}(R,\eta,\mathbf{e},\dot{\mathbf{e}},C,\xi)=(R,\eta,\mathbf{e},\dot{\mathbf{e}},CB,\xi)\quad\text{and}\quad\hat{\rho}_{B}(R,\pi,\mathbf{e},\sigma,C,\gamma)=(R,\pi,\mathbf{e},\sigma,CB,\gamma).
\]
Both the Lagrangian and the Hamiltonian functions are clearly invariant
with respect to these actions. On the other hand, a simple calculation
shows that $\mathcal{P}$ and $C_{V}$ are invariant too. Therefore,
$(H,\mathcal{P},C_{V})$ is indeed a $SO(3)$-invariant HOCS.

\bigskip{}

\item \textbf{Generalized nonholonomic connection and associated maps.}

Since the variational constraints $C_{V}$ of our system are given
by a subbundle of $TQ$, i.e. we have variations of order $l=1$,
to decompose them into vertical and horizontal parts, it is enough
to consider a standard connection. We will now construct the generalized nonholonomic connection associated to our problem, following the steps described at the beginning of Section \ref{hocc}. To do that, we need
to construct the spaces $\mathcal{S}$, $\mathcal{U}$, $\mathcal{T}$
and $\mathcal{R}$. Observe that the vertical space is given by 
\[
\mathbb{V}=\{0\}\times\{0\}\times\mathbb{R}^{3}=\{(R,0,\mathbf{e},0,C,\xi)\in TQ\;:\;\xi\in\mathbb{R}^{3}\}
\]
and 
\[
\mathcal{S}=\mathbb{V}\cap C_{V}=\{(R,0,\mathbf{e},0,C,\xi)\in TQ\;:\;\xi\in span\{\mathbf{e}\}\}.
\]
In order to calculate the spaces $\mathcal{U}$ and $\mathcal{T}$,
we must use the Riemannian metric of the kinetic term of $H$ to take
the orthogonal complements. We obtain\footnote{The superscript $\perp$ denote orthogonal complement with respect
to the euclidean inner product in $\mathbb{R}^{3}$.} 
\[
\mathcal{U}=\{(R,0,\mathbf{e},0,C,\xi)\in TQ\;:\;\xi\in span\{\mathbf{e}\}^{\perp}\}
\]
and 
\[
\mathcal{T}=\{(R,0,\mathbf{e},\delta\mathbf{e},C,\xi)\in TQ\;:\;\delta\mathbf{e}=r_{12}\left(\xi\times\mathbf{e}\right),\;\xi\in span\{\mathbf{e}\}^{\perp}\}.
\]
An easy calculation shows that $\mathcal{T}$ may be written as 
\[
\mathcal{T}=\left\{ \left(R,0,\mathbf{e},\delta\mathbf{e},C,\frac{1}{r_{12}}(\mathbf{e}\times\delta\mathbf{e})\right)\in TQ\;:\;(\mathbf{e},\delta\mathbf{e})\in TS^{2}\right\} .
\]
Finally, since $\mathcal{R}$ is the orthogonal complement of $C_{V}+\mathbb{V}$
in $TQ$ and 
\[
C_{V}+\mathbb{V}=\mathbb{V}\oplus\mathcal{T}=\left\{ (R,0,\mathbf{e},\delta\mathbf{e},C,\xi)\in TQ\;:\;(\mathbf{e},\delta\mathbf{e})\in TS^{2},\;\xi\in\mathbb{R}^{3}\right\} =\{0\}\oplus TS^{2}\oplus\mathbb{R}^{3},
\]
we have that 
\[
\mathcal{R}=\mathbb{R}^{3}\oplus\{0\}\oplus\{0\}.
\]
Gathering all the previous expressions, we define the wanted connection
as that given by the horizontal space 
\[
\mathbb{H}^{\bullet}=\mathcal{T}\oplus\mathcal{R}=\left\{ \left(R,\eta,\mathbf{e},\delta\mathbf{e},C,\frac{1}{r_{12}}(\mathbf{e}\times\delta\mathbf{e})\right)\in TQ\;:\;(\mathbf{e},\delta\mathbf{e})\in TS^{2},\;\eta\in\mathbb{R}^{3}\right\} .
\]
Consequently, the connection form is given by 
\[
A^{\bullet}(R,\eta,\mathbf{e},\delta\mathbf{e},C,\xi)=C^{-1}\left(\xi-\frac{1}{r_{12}}(\mathbf{e}\times\delta\mathbf{e})\right).
\]

\bigskip{}

On the other hand, based on the calculations of the previous section,
the trivial connection may be written 
\[
A(R,\eta,\mathbf{e},\delta\mathbf{e},C,\xi)=C^{-1}\xi.
\]

\bigskip{}

With these expressions at hand, we can write the Atiyah isomorphisms
associated with $A^{\bullet}$ and $A$, respectively, as follows
\[
\alpha_{A}\left([R,\eta,\mathbf{e},\delta\mathbf{e},C,\xi]_{SO(3)}\right)=(R,\eta,\mathbf{e},\delta\mathbf{e},\xi)
\]
and 
\[
\alpha_{A^{\bullet}}([R,\eta,\mathbf{e},\delta\mathbf{e},C,\xi]_{SO(3)})=\left(R,\eta,\mathbf{e},\delta\mathbf{e},\xi-\frac{1}{r_{12}}(\mathbf{e}\times\delta\mathbf{e})\right).
\]
Thus, the map $\varphi:T\mathcal{X}\rightarrow\widetilde{\mathfrak{g}}=\mathcal{X}\times\mathbb{R}^{3}$
is given by the formula 
\[
\varphi(R,\eta,\mathbf{e},\delta\mathbf{e})=\left(R,\mathbf{e},\frac{1}{r_{12}}(\mathbf{e}\times\delta\mathbf{e})\right).
\]
(Recall that $\mathcal{X}=SO(3)\times S^{2}$). As we pointed out
in Remark \ref{momentum}, the isomorphisms $\hat{\alpha}_{A}$ and
$\hat{\alpha}_{A^{\bullet}}$ can be written in terms of the momentum
map 
\[
J(R,\pi,\mathbf{e},\sigma,C,\gamma)=C^{-1}\gamma
\]
as 
\[
\hat{\alpha}_{A}([R,\pi,\mathbf{e},\sigma,C,\gamma]_{SO(3)})=(\pi,\sigma)\oplus[q,C^{-1}\gamma]_{SO(3)}=(R,\pi,\mathbf{e},\sigma,\gamma)
\]
and 
\[
\hat{\alpha}_{A^{\bullet}}([R,\pi,\mathbf{e},\sigma,C,\gamma]_{SO(3)})=\left(\pi,\sigma+\frac{1}{r_{12}}\gamma\times\mathbf{e}\right)\oplus[q,C^{-1}\gamma]_{SO(3)}=\left(R,\pi,\mathbf{e},\sigma+\frac{1}{r_{12}}\gamma\times\mathbf{e},\gamma\right).
\]

\bigskip{}

\item \textbf{Reduced data.}

Using the Atiyah isomorphism $\alpha_{A^{\bullet}}$, we can write
explicit expressions for the horizontal and vertical variational constraints
$\mathfrak{C}_{V}^{\text{hor}}\subset T\mathcal{X}$ and $\mathfrak{C}_{V}^{\text{ver}}\subset\mathcal{X}\times\mathbb{R}^{3}$
as follows: 
\begin{equation}
\mathfrak{C}_{V}^{\text{hor}}=\alpha_{A^{\bullet}}\circ p(\mathcal{T})=\left\{ \left(R,0,\mathbf{e},\delta\mathbf{e}\right)\in T\mathcal{X}\;:\;(\mathbf{e},\delta\mathbf{e})\in TS^{2}\right\} \label{cvhorex}
\end{equation}
and 
\begin{equation}
\mathfrak{C}_{V}^{\text{ver}}=\alpha_{A^{\bullet}}\circ p(\mathcal{S})=\left\{ (R,\mathbf{e},\xi)\in\mathcal{X}\times\mathbb{R}^{3}\;:\;\xi\in span\{\mathbf{e}\}\right\} .\label{cvverex}
\end{equation}

In order to write down the reduced Hamiltonian $h:T^{*}\mathcal{X}\times\mathbb{R}^{3}\rightarrow\mathbb{R}$
and the reduced kinematic constraints $\mathfrak{P}$, we shall use
the trivial connection. This gives rise exactly to the expressions
that we already have, i.e. $h$ is given by \eqref{Hs} and the reduced
kinematic constraints $\mathfrak{P}$ by the Equations \eqref{plyap},
\eqref{vlyap} and \eqref{drod}.

\bigskip{}

\item \textbf{Higher Order HdP Equations.}

It only remains to write the reduced equations. To do that, we will
use the computations of the previous section. Since $\mathcal{X}=SO(3)\times S^{2}$,
we have $T^{*}\mathcal{X}=(SO(3)\times\mathbb{R}^{3})\times_{\mathcal{}}T^{*}S^{2}$
and we can consider on $\mathcal{X}$ a connection of the form $\nabla^{\mathcal{X}}=\nabla^{SO(3)}\times\nabla^{S^{2}}$,
being $\nabla^{SO(3)}$ the trivial affine connection. Under this
assumption, we can write 
\[
\frac{Dy}{Dt}(t)=\frac{D(\pi,\sigma)}{Dt}(t)=\left(\pi'(t),\frac{D\sigma}{Dt}(t)\right),
\]
where $\pi'(t)$ is the usual derivative of a curve in $\mathbb{R}^{3}$
and $D\sigma/Dt$ is the covariant derivative related to $\nabla^{S^{2}}$,
and write
\[
\frac{\partial^{c}h}{\partial x}=\left(\frac{\partial^{c}h}{\partial R},\frac{\partial^{c}h}{\partial\mathbf{e}}\right)=\left(0,\frac{\partial^{c}h}{\partial\mathbf{e}}\right),
\]
where ${\displaystyle {\frac{\partial^{c}h}{\partial R}}}$ and ${\displaystyle \frac{\partial^{c}h}{\partial\mathbf{e}}}$
are base derivative of $h$ respect to $\nabla^{SO(3)}$ and $\nabla^{S^{2}}$,
respectively. The first component is zero because $\nabla^{SO(3)}$
is trivial and $h$ is independent on $R$. If we take on $S^{2}$
the Levi-Civita connection induced by the Euclidean metric on $\mathbb{R}^{3}$,
and identify each space $T_{\mathbf{e}}^{*}S^{2}$ with a linear subspace
of $\mathbb{R}^{3}$, it is easy to see that
\begin{equation}
\frac{D\sigma}{Dt}(t)=\sigma'(t)-\left(\mathbf{e}(t)\cdot\sigma'(t)\right)\,\mathbf{e}(t)\label{covsig}
\end{equation}
and
\begin{equation}
\frac{\partial^{c}h}{\partial\mathbf{e}}=m_{2}g\,\left(\mathbf{z}-\left(\mathbf{e}\cdot\mathbf{z}\right)\,\mathbf{e}\right).\label{covh}
\end{equation}
On the other hand, under the same identification, the fiber derivatives
of $h$ are in fact usual derivatives, i.e. 
\begin{equation}
\frac{\partial h}{\partial\gamma}=\nabla_{\gamma}h=\frac{\gamma}{I_{2}}\in\mathbb{R}^{3}\label{deg}
\end{equation}
and
\[
\frac{\partial h}{\partial y}=\left(\nabla_{\pi}h,\nabla_{\sigma}h\right)=\left(\frac{\pi}{I_{1}},\frac{\sigma}{m_{2}}\right)\in\mathbb{R}^{3}\times\mathbb{R}^{3},
\]
where the subscripts denote the variable respect to which we are differentiating.
Finally, notice that 
\begin{equation}
\ad_{\xi}^{*}\gamma=\gamma\times\xi\label{fiad1}
\end{equation}
and
\begin{equation}
\varphi^{*}(R,\mathbf{e},\gamma)=\left(R,\mathbf{e},0,\frac{1}{r_{12}}\gamma\times\mathbf{e}\right).\label{fiad}
\end{equation}

\bigskip{}

Gathering all these elements, we are ready now to write down the reduced
equations. Taking into account that $\delta R^{\bullet}=0$ and $\delta\mathbf{e}^{\bullet}$
is arbitrary {[}recall Eq. \eqref{cvhorex}{]}, the horizontal equations
read {[}see \eqref{redhoreqC2}, \eqref{deg}, \eqref{fiad1} and
\eqref{fiad}{]} 
\[
\frac{D\sigma}{Dt}+\frac{\partial^{c}h}{\partial\mathbf{e}}+\frac{1}{r_{12}}\gamma'\times\mathbf{e}=0,
\]
which implies {[}see \eqref{covsig} and \eqref{covh}{]}
\[
\sigma'(t)-\left(\mathbf{e}(t)\cdot\sigma'(t)\right)\,\mathbf{e}(t)+m_{2}g\,\left(\mathbf{z}-\left(\mathbf{e}(t)\cdot\mathbf{z}\right)\,\mathbf{e}(t)\right)+\frac{1}{r_{12}}\gamma'(t)\times\mathbf{e}(t)=0.
\]

On the other hand, according to the Eqs. \eqref{redvereqC2}, \eqref{cvverex},
\eqref{deg} and \eqref{fiad1}, the vertical equations are 
\[
\left\langle \gamma',\eta^{\bullet}\right\rangle =0,\qquad\forall\,\eta^{\bullet}\in span\{\mathbf{e}\},
\]
or equivalently,
\[
\gamma'(t)\cdot\mathbf{e}(t)=0.
\]

\end{itemize}


\begin{thebibliography}{10}
\bibitem{am}R. Abraham and J.E. Marsden, \textit{Foundation of Mechanics},
Benjaming Cummings, New York, 1985.

\bibitem{solo}P. Balseiro, J. Solomin, \textit{On generalized non-holonomic
systems}, Letters of Mathematical Physics \textbf{84} (1) (2008),
15-30.

\bibitem{bl96}A. Bloch, P.S. Krishnaprasad, J.E. Marsden, R.M. Murray,\textit{
Non-holonomic Mechanical Systems with Symmetry}, Arch. Rat. Mech.
Anal. \textbf{136} (1996), 21\textendash 99.

\bibitem{boot}W.M. Boothby, \textit{An Introduction to Differentiable
Manifolds and Riemannian Geometry}, New York, Academic Press, 1985.



\bibitem{cfg}H. Cendra, S. Ferraro, S. Grillo, \textit{Lagrangian
reduction of generalized nonholonomic systems}, Journal of Geometry
and Physics \textbf{58} (2008), 1271\textendash 1290.

\bibitem{cg}H. Cendra, S. Grillo, \textit{Generalized nonholonomic
mechanics, servomechanisms and related brackets}, J. Math. Phys. \textbf{47}
(2006), 022902/29.

\bibitem{cg-hocs}H. Cendra, S. Grillo, \textit{Lagrangian systems
with higher order constraints}, J. Math. Phys. \textbf{48} (2007),
052904.


\bibitem{her}H. Cendra, A. Ibort, M. de Le\'on, D. de Diego, \textit{A generalization of Chetaev's principle for a class of higher order non-holonomic constraints}, J. Math. Phys. \textbf{45} (2004), 2785-2801.


\bibitem{cmpr}H. Cendra, J.E. Marsden, S. Pekarsky, T.S. Ratiu, \textit{Variation
principles for Lie-Poisson and Hamilton-Poincar\'e equations}, Moscow
Mathematical Journal \textbf{3} (2003), 833-867.

\bibitem{cmr0}H. Cendra, J. E. Marsden, T.S. Ratiu, \textit{Lagrangian
reduction by stages}, Memoirs of the American Mathematical Society
2000.

\bibitem{CMR}H. Cendra, J. E. Marsden, T.S. Ratiu, \textit{Geometric
mechanics, Lagrangian reduction and non-holonomic systems}, in Mathematics
Unlimited-2001 and Beyond, (B. Enguist and W. Schmid, eds.), Springer-Verlag,
New York (2001), 221\textendash 273.


\bibitem{Cr86}M. Crampin, W. Sarlet \& F. Cantrijn, \textit{Higher-order
differential equations and higher-order lagrangian mechanics}, Math.
Proc. Camb. Phil. Soc. \textbf{99} (1986), 565-587.

\bibitem{LR1}M. de Le\'on, P.R. Rodrigues, \textit{Generalized Classical
Mechanics and Field Theory}. North-Holland, Amsterdam, 1985.




\bibitem{g}S. Grillo, \textit{Sistemas nohol\'onomos generalizados},
Ph.D. thesis, Universidad Nacional del Sur, 2007 (in spanish).

\bibitem{g2}S. Grillo, \textit{Higher order constrained Hamiltonian
systems}, J. Math. Phys.\textbf{\ 50} (2009), 082901/38.

\bibitem{gym}S. Grillo, J. Marsden and S. Nair, \textit{Lyapunov
constraints and global asymptotic stabilization}, Journal of Geom.
Mech.,\textbf{3} (2011), 145\textendash 196.

\bibitem{gmp}S. Grillo, F. Maciel, D. P\'erez, \textit{Closed-loop
and constrained mechanical systems}, International Journal of Geometric
Methods in Modern Physics v7, N5 (2010), 1-30.

\bibitem{gz}S. Grillo, M. Zuccalli, \textit{Variational reduction
of Lagrangian systems with general constraints}, Journal of Geometric
Mechanics \textbf{4} (1) (2012), 49-88.

\bibitem{kn}S. Kobayashi, K. Nomizu, \textit{Foundations of Differential
Geometry}, New York, John Wiley \& Son, 1963.


\bibitem{Marle}C.-M. Marle, \textit{Kinematic and geometric constraints,
servomechanism and control of mechanical systems}, Rend. Sem. Mat.
Univ. Pol. Torino \textbf{54 }(4) (1996), 353-364; \textit{Various
approaches to conservative and nonconservative non-holonomic systems},
Rep. Math. Phys. \textbf{42} (1998), 211-229.

\bibitem{mr}J.E. Marsden and T.S. Ratiu, \textit{Introduction to
Mechanics and Symmetry, }Springer-Verlag, New York, 1994.

\bibitem{mrgm}J.E. Marsden, T.S. Ratiu,\textit{\ Manifolds, Tensor
Analysis and Applications}, New York, Springer-Verlag, 2001.








\end{thebibliography}
\end{document}